\journal{}
\theoremstyle{plain}
  \newtheorem{thm}{Theorem}[section]
  \newtheorem{lem}[thm]{Lemma}
  \newtheorem{prop}[thm]{Proposition}
  \newtheorem{cor}[thm]{Corollary}
\theoremstyle{definition}
  \newtheorem{exmp}[thm]{Example}
  \newtheorem{rem}[thm]{Remark}
\DeclareMathAlphabet{\mathcal}{OMS}{cmsy}{m}{n}
\DeclareMathOperator*{\colim}{colim}
\DeclareMathOperator{\dom}{dom}
\DeclareMathOperator{\cod}{cod}
\DeclareMathOperator{\Lan}{Lan}
\DeclareMathOperator{\Ran}{Ran}
\DeclareMathOperator{\ob}{ob}
\def\ps@pprintTitle{%
 \let\@oddhead\@empty
  \let\@evenhead\@empty
  \def\@oddfoot{\vbox{\hsize=\textwidth\footnotesize
  \vskip 8pt
  \copyright 2016. This manuscript version is made available under the CC-BY-NC-ND 4.0 license \url{http://creativecommons.org/licenses/by-nc-nd/4.0/}. The published version is available at \url{http://dx.doi.org/10.1016/j.fss.2016.12.001}.\\
  }}%
  \let\@evenfoot\@oddfoot}
\def\oto{{\bfig\morphism<180,0>[\mkern-4mu`\mkern-4mu;]\place(86,0)[\circ]\efig}}
\renewcommand{\phi}{\varphi}
\newcommand{\da}{\downarrow}
\newcommand{\ua}{\uparrow}
\newcommand{\ra}{\rightarrow}
\newcommand{\nra}{{\rightarrow\hspace*{-1ex}{\mapstochar}\hspace*{1.1ex}}}
\newcommand{\lra}{\longrightarrow}
\newcommand{\lda}{\swarrow}
\newcommand{\rda}{\searrow}
\newcommand{\Lra}{\Longrightarrow}
\newcommand{\bv}{\bigvee}
\newcommand{\bw}{\bigwedge}
\newcommand{\tr}{\triangleright}
\newcommand{\dv}{\dashv}
\newcommand{\nat}{\natural}
\newcommand{\ola}{\overleftarrow}
\newcommand{\ora}{\overrightarrow}
\newcommand{\lam}{\lambda}
\newcommand{\CD}{\mathcal{D}}
\newcommand{\CQ}{\mathcal{Q}}
\newcommand{\sK}{{\sf K}}
\newcommand{\sM}{{\sf M}}
\newcommand{\sP}{{\sf P}}
\newcommand{\sY}{{\sf Y}}
\newcommand{\sPd}{{\sP^{\dag}}}
\newcommand{\bbA}{\mathbb{A}}
\newcommand{\bbB}{\mathbb{B}}
\newcommand{\bbC}{\mathbb{C}}
\newcommand{\bbD}{\mathbb{D}}
\newcommand{\bbX}{\mathbb{X}}
\newcommand{\bbY}{\mathbb{Y}}
\newcommand{\obbA}{\overline{\bbA}}
\newcommand{\obbB}{\overline{\bbB}}
\newcommand{\obbAp}{\overline{\bbA'}}
\newcommand{\Cat}{{\bf Cat}}
\newcommand{\Chu}{{\bf Chu}}
\newcommand{\Dist}{{\bf Dist}}
\newcommand{\Set}{{\bf Set}}
\newcommand{\Sup}{{\bf Sup}}
\newcommand{\QCat}{\CQ\text{-}\Cat}
\newcommand{\QDist}{\CQ\text{-}\Dist}
\newcommand{\QChu}{\CQ\text{-}\Chu}
\newcommand{\QSup}{\CQ\text{-}\Sup}
\newcommand{\co}{{\rm co}}
\newcommand{\op}{{\rm op}}
\newcommand{\dphi}{\phi^{\da}}
\newcommand{\uphi}{\phi_{\ua}}
\newcommand{\dpsi}{\psi^{\da}}
\newcommand{\upsi}{\psi_{\ua}}
\newcommand{\olphi}{\ola{\phi}}
\newcommand{\ophi}{\ora{\phi}}
\newcommand{\PA}{\sP\bbA}
\newcommand{\PB}{\sP\bbB}
\newcommand{\PX}{\sP\bbX}
\newcommand{\PdX}{\sPd\bbX}
\newcommand{\PdA}{\sPd\bbA}
\newcommand{\PdB}{\sPd\bbB}
\newcommand{\sYd}{\sY^{\dag}}
\newcommand{\DL}{\CD L}
\newcommand{\Fix}{{\sf Fix}}
\newcommand{\sIm}{{\sf Im}}
\newcommand{\Mphi}{\sM\phi}
\newcommand{\Kphi}{\sK\phi}
\newcommand{\Nd}{N^{\dag}}
\newcommand{\Ud}{U^{\dag}}
\numberwithin{equation}{section}
\begin{document}

\begin{frontmatter}



\title{Fixed points of adjoint functors enriched in a quantaloid}


\author[S]{Hongliang Lai}
\ead{hllai@scu.edu.cn}

\author[Y]{Lili Shen\corref{cor}\fnref{C}}
\ead{shenlili@scu.edu.cn}

\cortext[cor]{Corresponding author.}
\address[S]{School of Mathematics, Sichuan University, Chengdu 610064, China}
\address[Y]{Department of Mathematics and Statistics, York University, Toronto, Ontario M3J 1P3, Canada}
\fntext[C]{Present address: School of Mathematics, Sichuan University, Chengdu 610064, China.}

\begin{abstract}
Representation theorems are established for fixed points of adjoint functors between categories enriched in a small quantaloid. In a very general setting these results set up a common framework for representation theorems of concept lattices in formal concept analysis (FCA) and rough set theory (RST), which not only extend the realm of formal contexts to multi-typed and multi-valued ones, but also provide a general approach to construct various kinds of representation theorems. Besides incorporating several well-known representation theorems in FCA and RST as well as formulating new ones, it is shown that concept lattices in RST can always be represented as those in FCA through relative pseudo-complements of the given contexts, especially if the contexts are valued in a non-Girard quantaloid.
\end{abstract}

\begin{keyword}
Quantaloid \sep Adjoint $\mathcal{Q}$-functors \sep $\mathcal{Q}$-distributor \sep Isbell adjunction \sep Kan adjunction \sep Concept lattice \sep Formal concept analysis \sep Rough set theory

\MSC[2010] 18A40 \sep 18D20 \sep 03B70 \sep 06B23
\end{keyword}

\end{frontmatter}




\section{Introduction}

This paper aims to establish general representation theorems for fixed points of adjoint functors between categories enriched in a small quantaloid $\CQ$, which set up a common framework for representation theorems of concept lattices in formal concept analysis (FCA) \cite{Davey2002,Ganter1999} and rough set theory (RST) \cite{Pawlak1982,Polkowski2002} in the generality of their $\CQ$-version. As Galois connections between posets are precisely adjoint functors between categories enriched in the two-element Boolean algebra {\bf 2}, we start the introduction from this classical case.

A Galois connection \cite{Davey2002} $s\dv t$ between posets $C$, $D$ consists of monotone maps $s:C\lra D$, $t:D\lra C$ such that $s(x)\leq y\iff x\leq t(y)$ for all $x\in C$, $y\in D$. By a fixed point of $s\dv t$ is meant an element $x\in C$ with $x=ts(x)$ or, equivalently, an element $y\in D$ with $y=st(y)$, since
$$\Fix(ts):=\{x\in C\mid x=ts(x)\}\quad\text{and}\quad\Fix(st):=\{y\in D\mid y=st(y)\}$$
are isomorphic posets with the inherited order from $C$ and $D$, respectively. As the first main result of this paper, the following theorem characterizes those posets representing $\Fix(ts)\cong\Fix(st)$:

\begin{thm} \label{general_representation_poset}
Let $s\dv t:D\lra C$ be a Galois connection between posets. A poset $X$ is isomorphic to $\Fix(ts)$ if, and only if, there exist surjective maps $l:C\lra X$ and $r:D\lra X$ such that
$$\forall c\in C,\forall d\in D:\ s(c)\leq d\ \text{in}\ D\iff l(c)\leq r(d)\ \text{in}\ X.$$
\end{thm}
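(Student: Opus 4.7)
For the necessity direction, I would take $l : C \to X$ and $r : D \to X$ (via the given isomorphism $X \cong \Fix(ts)$) to be $l(c) := ts(c)$ and $r(d) := t(d)$, noting that $t(d) = tst(d) \in \Fix(ts)$ by the adjunction identity $tst = t$. Both maps are surjective: for any $x \in \Fix(ts)$ one has $l(x) = ts(x) = x$ and $r(s(x)) = ts(x) = x$. The required equivalence then reduces to $s(c) \leq d \iff ts(c) \leq t(d)$. The forward implication is monotonicity of $t$; the reverse follows by applying $s$ and invoking $sts = s$ together with $st \leq \id_D$, giving $s(c) = sts(c) \leq st(d) \leq d$.

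For sufficiency, suppose surjections $l$ and $r$ with the stated property are given. The central preliminary step I would isolate is the order characterization
\[
l(c) \leq l(c') \iff s(c) \leq s(c').
\]
The forward direction uses the hypothesis instance $l(c') \leq r(s(c'))$ (obtained from the trivially true $s(c') \leq s(c')$); chaining $l(c) \leq l(c') \leq r(s(c'))$ with the hypothesis gives $s(c) \leq s(c')$. For the converse, surjectivity of $r$ lets me write $l(c') = r(d)$, whence the hypothesis yields $s(c') \leq d$ and hence $s(c) \leq s(c') \leq d$, so $l(c) \leq r(d) = l(c')$. A symmetric argument (using surjectivity of $l$) yields $r(d) \leq r(d') \iff t(d) \leq t(d')$. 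The two characterizations combine to show that whenever $l(c) = r(d)$ one has $ts(c) = t(d)$, which is exactly the consistency needed to identify the same point of $X$ whether represented through $C$ or $D$.

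With these preliminaries in hand, I define $\phi : X \to \Fix(ts)$ by $\phi(l(c)) := ts(c)$. Well-definedness follows from $l(c) = l(c') \Rightarrow s(c) = s(c') \Rightarrow ts(c) = ts(c')$; bijectivity and the monotonicity of both $\phi$ and $\phi^{-1}$ follow directly from the order characterization, using $s = sts$. Surjectivity onto $\Fix(ts)$ is witnessed by $\phi(l(x_0)) = ts(x_0) = x_0$ for $x_0 \in \Fix(ts)$. The main obstacle I anticipate is precisely the displayed order characterization: the hypothesis only constrains mixed comparisons $l(c) \leq r(d)$, so extracting purely $l$-$l$ or $r$-$r$ statements requires a judicious use of surjectivity of the opposite map to substitute one side; once this is done, everything else is essentially bookkeeping with the triangle identities $sts = s$ and $tst = t$.
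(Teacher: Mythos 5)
Your proof is correct, and it is worth noting how it sits relative to the paper: the paper never proves this statement directly at the poset level, but obtains it as the $\CQ=\mathbf{2}$ instance of the enriched Theorem \ref{general_representation} together with Corollary \ref{general_representation_type_preserving}. Your necessity argument is the poset shadow of the paper's (codomain restrictions of $ts$ and $t$, with $s(c)\leq d\iff ts(c)\leq t(d)$ playing the role of $S_{\nat}=(STS)_{\nat}=\bbC(TS-,T-)$). For sufficiency the routes genuinely diverge in organization: the paper first proves $LT\cong R$ and $RS\cong L$ by graph calculus using the essential-surjectivity identity $L_{\nat}\circ L^{\nat}=\bbX$, and then shows that the \emph{restriction} of $L$ to $\Fix(TS)$ is fully faithful and essentially surjective; you instead isolate the order characterizations $l(c)\leq l(c')\iff s(c)\leq s(c')$ and $r(d)\leq r(d')\iff t(d)\leq t(d')$ (valid for all elements, obtained by the same trick of using surjectivity of the opposite map to turn same-side comparisons into mixed ones) and build the inverse isomorphism $\phi:X\lra\Fix(ts)$, $l(c)\mapsto ts(c)$, directly. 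The two isomorphisms are mutually inverse, and both arguments run on the same fuel ($sts=s$, $tst=t$, and the mixed-comparison hypothesis), so the mathematical content matches; what your version buys is a self-contained elementary proof that also makes transparent why monotonicity of $l$ and $r$ need not be assumed (it falls out of your characterization, which is exactly the point of Corollary \ref{general_representation_type_preserving}), while the paper's version buys the full quantaloid-enriched generality and the extra structural information of Corollary \ref{L_la_R_ra} (that $l$ and $r$ are respectively left and right adjoints). One cosmetic remark: your claim that $l(c)=r(d)$ forces $ts(c)=t(d)$ is true (combine $l(c)\leq r(s(c))$ with your $r$-characterization), but it is not actually needed once $\phi$ is defined through $l$ alone, so you could omit it without loss.
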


It is well known that if $C$, $D$ are complete lattices, then so is $\Fix(ts)\cong\Fix(st)$. In this case, the above representation theorem can be strengthened to the following one, which is our second main result, in terms of $\bv$-dense and $\bw$-dense maps:

\begin{thm} \label{general_representation_complete_lattice}
Let $s\dv t:D\lra C$ be a Galois connection between complete lattices. A complete lattice $X$ is isomorphic to $\Fix(ts)$ if, and only if, there exist $\bv$-dense maps $f:A\lra X$, $k:A\lra C$ and $\bw$-dense maps $g:B\lra X$, $h:B\lra D$ such that
$$\forall a\in A,\forall b\in B:\ sk(a)\leq h(b)\ \text{in}\ D\iff f(a)\leq g(b)\ \text{in}\ X.$$
\end{thm}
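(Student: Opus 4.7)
The plan is to split along the biconditional: necessity (if $X \cong \Fix(ts)$, then such data exists) follows by a one-line appeal to Theorem~\ref{general_representation_poset}, while sufficiency (the data forces $X \cong \Fix(ts)$) is handled by exhibiting an explicit mutually inverse pair between $X$ and $\Fix(ts)$.

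For necessity I would use that any surjection onto a complete lattice is automatically both $\bv$-dense and $\bw$-dense. Assuming $X \cong \Fix(ts)$, Theorem~\ref{general_representation_poset} yields surjective maps $l : C \lra X$ and $r : D \lra X$ with $l(c) \leq r(d) \iff s(c) \leq d$, and I would then set $A := C$, $B := D$, $k := \id_C$, $h := \id_D$, $f := l$, $g := r$; the required biconditional is exactly what Theorem~\ref{general_representation_poset} supplies.

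For sufficiency I would define
\[
\Phi : \Fix(ts) \lra X, \qquad \Phi(c) := \bw\{g(b) \mid s(c) \leq h(b)\},
\]
\[
\Psi : X \lra \Fix(ts), \qquad \Psi(x) := ts\bigl(\bv\{k(a) \mid f(a) \leq x\}\bigr),
\]
both monotone by inspection, and then verify $\Phi \circ \Psi = \id_X$ and $\Psi \circ \Phi = \id_{\Fix(ts)}$. For $\Phi\Psi = \id_X$ I would use $sts = s$ and that $s$ preserves joins (as a left adjoint) to rewrite $s(\Psi(x)) = \bv\{sk(a) \mid f(a) \leq x\}$; the hypothesis $sk(a) \leq h(b) \iff f(a) \leq g(b)$ then converts the indexing set of $\Phi(\Psi(x))$ into an $X$-condition, $\bv$-density of $f$ collapses the resulting universal quantifier to ``$x \leq g(b)$'', and $\bw$-density of $g$ in $X$ returns $x$. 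For $\Psi\Phi = \id_{\Fix(ts)}$ I would unfold $f(a) \leq \Phi(c)$ via the hypothesis (in the same direction), apply $\bw$-density of $h$ in $D$ to contract the meet-quantifier to $sk(a) \leq s(c)$, invoke the adjunction and $c = t(s(c))$ (since $c \in \Fix(ts)$) to rephrase this as $k(a) \leq c$, and close with $\bv$-density of $k$ in $C$ together with $c = ts(c)$.

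The main design obstacle is choosing the formulas so that each of the four density hypotheses and each direction of the biconditional is consumed exactly once, with the relevant Galois-adjunction identities ($sts = s$, $s$ preserves joins, $c = t(s(c))$ on fixed points) falling out at precisely the right steps. I expect the $\Psi\Phi$ calculation to be the more delicate of the two, since it is there that both source-side densities ($k$ on $C$ and $h$ on $D$) and the adjunction must cooperate simultaneously.
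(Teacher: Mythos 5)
Your argument is correct, and its sufficiency half takes a genuinely different route from the paper's. The paper never proves Theorem \ref{general_representation_complete_lattice} directly: it deduces it as the case $\CQ={\bf 2}$ of Theorem \ref{general_representation_elementary}, whose sufficiency first upgrades $\bv$-/$\bw$-dense maps to dense/codense functors (Proposition \ref{join_dense_imply_dense}) and then runs Theorem \ref{general_representation_dense_complete}: one forms the Kan extensions $l=\Lan_k f$ and $r=\Ran_h g$ (in poset terms $l(c)=\bv\{f(a)\mid k(a)\leq c\}$ and $r(d)=\bw\{g(b)\mid d\leq h(b)\}$), shows they are surjective and satisfy $s(c)\leq d\iff l(c)\leq r(d)$, and then invokes Theorem \ref{general_representation_poset}, whose proof establishes that the restriction of $l$ to $\Fix(ts)$ is an order-embedding that is surjective, rather than exhibiting an inverse. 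You bypass that reduction and instead produce the mutually inverse pair explicitly, $\Phi=(\Ran_h g)\circ s$ on $\Fix(ts)$ and $\Psi=ts\circ(\Lan_f k)$, verifying both composites by hand; I checked the computations and they are sound: $\bv$-density of $f$ indeed gives $x=\bv\{f(a)\mid f(a)\leq x\}$, so the quantifier collapse in $\Phi\Psi=\id_X$ is valid, and in $\Psi\Phi=\id_{\Fix(ts)}$ the chain $f(a)\leq\Phi(c)\iff sk(a)\leq s(c)\iff k(a)\leq ts(c)=c$ uses $\bw$-density of $h$, the adjunction, and the fixed-point equation exactly as you announce, after which $\bv$-density of $k$ and $ts(c)=c$ finish. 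Your necessity half coincides with the paper's (identities for $k,h$, Theorem \ref{general_representation_poset}, and the observation that surjections are automatically $\bv$- and $\bw$-dense). The trade-off: the paper's route factors through reusable enriched results and yields the quantaloid-valued statements of which this theorem is the ${\bf 2}$-shadow, while yours buys a short, self-contained, purely order-theoretic proof of the lattice statement with the inverse written down explicitly.
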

\begin{center}
$\bfig
\Vtriangle(-2500,-300)/@{>}@<4pt>`->`->/<500,600>[C`D`X;s`l`r]
\morphism(-1500,300)|b|/@{>}@<4pt>/<-1000,0>[D`C;t]
\place(-2000,305)[\mbox{\scriptsize$\bot$}]
\place(-2000,-450)[\text{Theorem \ref{general_representation_poset}}]
\morphism(-700,0)<400,300>[A`C;k]
\morphism(700,0)<-400,300>[B`D;h]
\morphism(-700,0)|b|<700,-300>[A`X;f]
\morphism(700,0)|b|<-700,-300>[B`X;g]
\morphism(-300,300)|a|/@{>}@<4pt>/<600,0>[C`D;s]
\morphism(300,300)|b|/@{>}@<4pt>/<-600,0>[D`C;t]
\place(0,305)[\mbox{\scriptsize$\bot$}]
\place(0,-450)[\text{Theorem \ref{general_representation_complete_lattice}}]
\efig$
\end{center}

These two theorems play the role of general representation theorems and their power will be revealed when being applied to concept lattices. To see this, recall that given a relation $\phi:A\oto B$ between sets (usually called a \emph{formal context}, or \emph{context} for short, and written as $(A,B,\phi)$ in FCA and RST), there are two Galois connections
\begin{equation} \label{uR_dv_dR}
\uphi\dv\dphi:({\bf 2}^B)^{\op}\lra{\bf 2}^A\quad\text{and}\quad\phi^*\dv\phi_*:{\bf 2}^A\lra{\bf 2}^B
\end{equation}
given by
$$\begin{tabular}{ll}
$\uphi(U)=\{y\in B\mid\forall x\in U:\ x\phi y\}$,&$\dphi(V)=\{x\in A\mid\forall y\in V:\ x\phi y\}$,\\
$\phi^*(V)=\{x\in A\mid\exists y\in V:\ x\phi y\}$,&$\phi_*(U)=\{y\in B\mid\forall x\in A:\ x\phi y{}\Lra{}x\in U\}$
\end{tabular}$$
for all $U\subseteq A$, $V\subseteq B$; the complete lattices consisting of their fixed points,
$$\Mphi:=\Fix(\dphi\uphi)\quad\text{and}\quad\Kphi:=\Fix(\phi_*\phi^*),$$
are respectively (up to isomorphism) the \emph{concept lattices} of the context $(A,B,\phi)$ in FCA and RST\footnote{$\Mphi$ and $\Kphi$ are also called the \emph{formal concept lattice} and the \emph{object-oriented concept lattice} of the context $(A,B,\phi)$, respectively.}. The \emph{fundamental theorem} of FCA characterizes those complete lattices representing $\Mphi$:

\begin{thm} \label{FCA_fundamental} (See \cite{Davey2002,Ganter1999}.)
A complete lattice $X$ is isomorphic to $\Mphi$ if, and only if, there exist a $\bv$-dense map $f:A\lra X$ and a $\bw$-dense map $g:B\lra X$ such that
$$\forall a\in A,\ \forall b\in B:\ a\phi b\iff f(a)\leq g(b)\ \text{in}\ X.$$
\end{thm}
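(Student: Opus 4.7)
The plan is to obtain Theorem~\ref{FCA_fundamental} as a direct specialization of Theorem~\ref{general_representation_complete_lattice} applied to the Galois connection $\uphi\dv\dphi:({\bf 2}^B)^{\op}\lra{\bf 2}^A$ from (\ref{uR_dv_dR}), whose fixed points constitute $\Mphi$. The auxiliary data $k,h$ required by Theorem~\ref{general_representation_complete_lattice} will be supplied by the singleton maps $k:A\lra{\bf 2}^A$, $a\mapsto\{a\}$, and $h:B\lra({\bf 2}^B)^{\op}$, $b\mapsto\{b\}$.

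First I would verify two compatibility facts. Every subset $U\subseteq A$ equals $\bigcup_{a\in U}\{a\}$, so $k$ is $\bv$-dense in ${\bf 2}^A$; dually, meets in $({\bf 2}^B)^{\op}$ are unions in ${\bf 2}^B$, making $h$ $\bw$-dense in $({\bf 2}^B)^{\op}$ by the same observation. Moreover, $\uphi(\{a\})\leq\{b\}$ in $({\bf 2}^B)^{\op}$ unwinds to $\{b\}\subseteq\uphi(\{a\})=\{y\in B\mid a\phi y\}$, i.e., to $a\phi b$. Thus, for this choice of $k$ and $h$, the criterion of Theorem~\ref{general_representation_complete_lattice} reads precisely as the hypothesis of Theorem~\ref{FCA_fundamental}.

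With these translations in hand the sufficiency direction is immediate: given $f,g$ as hypothesized, the quadruple $(f,g,k,h)$ meets the condition of Theorem~\ref{general_representation_complete_lattice}, yielding $X\cong\Fix(\dphi\uphi)=\Mphi$. For necessity, however, Theorem~\ref{general_representation_complete_lattice} only guarantees the existence of some pair of representing maps with unspecified domains, so I would construct $f,g$ explicitly. Fixing an isomorphism $\iota:\Mphi\lra X$, I set $f(a)=\iota(\dphi\uphi(\{a\}))$ and $g(b)=\iota(\dphi(\{b\}))$; the required biconditional then reduces via $\iota$ to $\dphi\uphi(\{a\})\subseteq\dphi(\{b\})\iff a\phi b$, a quick consequence of $\{a\}\subseteq\dphi\uphi(\{a\})$ in one direction and of $\dphi\uphi\dphi=\dphi$ in the other.

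The main technical step, and the only one requiring some care, is the density of these concretely defined $f,g$. For any concept $C=\dphi\uphi(C)\in\Mphi$, each $\dphi\uphi(\{a\})$ with $a\in C$ both contains $\{a\}$ by extensivity and lies inside $C$ by monotonicity of the closure, so $\bigcup_{a\in C}\dphi\uphi(\{a\})=C$; applying $\dphi\uphi$ (the join in $\Mphi$) still returns $C$, giving $\bv$-density of $f$. Dually, $\bigcap_{b\in\uphi(C)}\dphi(\{b\})=\dphi(\uphi(C))=C$ gives $\bw$-density of $g$. Beyond this routine closure-operator bookkeeping I anticipate no real obstacle; indeed the fundamental theorem of FCA is the prototypical instance of the ``singleton'' specialization of Theorem~\ref{general_representation_complete_lattice}.
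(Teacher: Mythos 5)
Your proof is correct and follows essentially the same route the paper sketches for Theorem \ref{FCA_fundamental}: specialize the general representation theorems to the Galois connection $\uphi\dv\dphi:({\bf 2}^B)^{\op}\lra{\bf 2}^A$ via the singleton maps $\{\text{-}\}_A$, $\{\text{-}\}_B$, with the ``if'' part obtained from Theorem \ref{general_representation_complete_lattice} exactly as in the paper. The only cosmetic difference is in the ``only if'' part, where you check density of $f(a)=\iota(\dphi\uphi\{a\})$, $g(b)=\iota(\dphi\{b\})$ and the biconditional by direct closure-operator computations, while the paper obtains the same maps as composites of the singleton embeddings with the surjections $l$, $r$ furnished by Theorem \ref{general_representation_poset}.
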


The following diagrams explain how one derives the above theorem from \ref{general_representation_poset} and \ref{general_representation_complete_lattice}:
$$\bfig
\morphism(-800,0)<500,300>[A`{\bf 2}^A;k=\{\text{-}\}_A]
\morphism(800,0)<-500,300>[B`({\bf 2}^B)^{\op};h=\{\text{-}\}_B]
\morphism(-800,0)|b|<800,-300>[A`\Mphi;f=lk]
\morphism(800,0)|b|<-800,-300>[B`\Mphi;g=rh]
\morphism(-300,300)|a|/@{>}@<4pt>/<600,0>[{\bf 2}^A`({\bf 2}^B)^{\op};\uphi]
\morphism(300,300)|b|/@{>}@<4pt>/<-600,0>[({\bf 2}^B)^{\op}`{\bf 2}^A;\dphi]
\morphism(-300,300)|l|<300,-600>[{\bf 2}^A`\Mphi;l]
\morphism(300,300)|r|<-300,-600>[({\bf 2}^B)^{\op}`\Mphi;r]
\place(-40,305)[\mbox{\scriptsize$\bot$}]
\morphism(1100,0)<600,300>[A`{\bf 2}^A;k=\{\text{-}\}_A]
\morphism(2900,0)<-600,300>[B`({\bf 2}^B)^{\op};h=\{\text{-}\}_B]
\morphism(1100,0)|b|<900,-300>[A`X;f]
\morphism(2900,0)|b|<-900,-300>[B`X;g]
\morphism(1700,300)|a|/@{>}@<4pt>/<600,0>[{\bf 2}^A`({\bf 2}^B)^{\op};\uphi]
\morphism(2300,300)|b|/@{>}@<4pt>/<-600,0>[({\bf 2}^B)^{\op}`{\bf 2}^A;\dphi]
\morphism(1700,300)|l|<300,-600>[{\bf 2}^A`X;l=\Lan_k f]
\morphism(2300,300)|r|<-300,-600>[({\bf 2}^B)^{\op}`X;r=\Ran_h g]
\place(1960,305)[\mbox{\scriptsize$\bot$}]
\efig$$

Explicitly, for the ``only if'' part of Theorem \ref{FCA_fundamental} it suffices to consider $X=\Mphi$; one has a $\bv$-dense map $\{\text{-}\}_A:A\lra{\bf 2}^A$ sending each $x\in A$ to the singleton set $\{x\}\subseteq A$ whose composition with $l:{\bf 2}^A\lra\Mphi$ from Theorem \ref{general_representation_poset} gives the required $\bv$-dense map $f:A\lra\Mphi$, and the $\bw$-dense map $g:B\lra\Mphi$ is constructed dually. Conversely, for the ``if'' part the $\bv$-dense maps $f:A\lra X$, $\{\text{-}\}_A:A\lra{\bf 2}^A$ and $\bw$-dense maps $g:B\lra X$, $\{\text{-}\}_B:B\lra({\bf 2}^B)^{\op}$ fulfill the requirements of Theorem \ref{general_representation_complete_lattice}; indeed, one may further show that the \emph{left Kan extension} $l:{\bf 2}^A\lra X$ of $f$ along $k$ \cite{MacLane1998} and the \emph{right Kan extension} $r:({\bf 2}^B)^{\op}\lra X$ of $g$ along $h$ (see Subsection \ref{Kan_extensions}) satisfy the requirements of Theorem \ref{general_representation_poset}.

As for the representation of $\Kphi$, since it is well known that $\Kphi=\sM(\neg\phi)$ \cite{Duntsch2002,Yao2004}, where $\neg\phi:B\oto A$ is the \emph{complement} of the relation $\phi:A\oto B$ given by
$$\forall b\in B,\forall a\in A:\ b(\neg\phi)a\iff \neg(a\phi b),$$
the following theorem easily follows from \ref{FCA_fundamental}:

\begin{thm} \label{RST_fundamental}
A complete lattice $X$ is isomorphic to $\Kphi$ if, and only if, there exist a $\bv$-dense map $f:B\lra X$ and a $\bw$-dense map $g:A\lra X$ such that
$$\forall b\in B,\ \forall a\in A:\ b(\neg\phi) a\iff f(b)\leq g(a)\ \text{in}\ X.$$
\end{thm}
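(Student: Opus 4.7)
The plan is to reduce the statement directly to the fundamental theorem of FCA (Theorem \ref{FCA_fundamental}) via the already-quoted identity $\Kphi=\sM(\neg\phi)$. Apart from this identification, no independent argument should be needed; the only work is to keep careful track of which set plays which role.

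Concretely, I would apply Theorem \ref{FCA_fundamental} to the context $(B,A,\neg\phi)$, in which the relation $\neg\phi:B\oto A$ plays the role of $\phi$, the set $B$ plays the role of $A$, and $A$ plays the role of $B$. Under this renaming Theorem \ref{FCA_fundamental} reads: a complete lattice $X$ is isomorphic to $\sM(\neg\phi)$ if and only if there exist a $\bv$-dense map $f:B\lra X$ and a $\bw$-dense map $g:A\lra X$ such that
$$\forall b\in B,\ \forall a\in A:\ b(\neg\phi)a\iff f(b)\leq g(a)\ \text{in}\ X,$$
which is exactly the equivalence in the target statement. Composing this with the isomorphism $\Kphi\cong\sM(\neg\phi)$ transfers the characterization from $\sM(\neg\phi)$ to $\Kphi$ and completes the argument in both directions.

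Because this is essentially a relabeling of an already-established result, I do not anticipate any genuine obstacle. The only points worth a brief sanity check are that the $\bv$-density of $f$ and the $\bw$-density of $g$ transport correctly under the swap of the two sets, and that the incidence equivalence of Theorem \ref{FCA_fundamental} specialises to $b(\neg\phi)a\iff f(b)\leq g(a)$ for the context $(B,A,\neg\phi)$; both are routine bookkeeping. In short, the only nontrivial ingredient is the already-cited relationship $\Kphi=\sM(\neg\phi)$, and once that is in hand the present theorem is a one-line corollary of Theorem \ref{FCA_fundamental}.
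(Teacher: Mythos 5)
Your proposal is correct and matches the paper's own argument exactly: the paper likewise invokes the known identity $\Kphi=\sM(\neg\phi)$ and then applies Theorem \ref{FCA_fundamental} to the complemented context $(B,A,\neg\phi)$, stating that the theorem "easily follows." Nothing further is needed.
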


As (unital) \emph{quantales} are usually chosen as truth tables in fuzzy set theory, Galois connections have been extended to the quantale-valued setting \cite{Belohlavek1999,Garcia2010} as well as the theories of FCA and RST \cite{Bvelohlavek2001,Bvelohlavek2004,Georgescu2004,Lai2009,Popescu2004}, and all the representation theorems stated above can be established in this general setting. Since Galois connections between quantale-valued ordered sets are precisely adjoint functors between quantale-enriched categories, in fact, we will present these theorems in an even more general framework of adjoint functors between categories enriched in a small \emph{quantaloid} $\CQ$.

Quantaloids \cite{Rosenthal1996} may be thought of as quantales with many objects; indeed, let $\Sup$ denote the symmetric monoidal closed category of complete lattices and $\sup$-preserving maps, then a quantale is a \emph{monoid in $\Sup$} while a quantaloid is a \emph{category enriched in $\Sup$}. The theory of quantaloid-enriched categories (or \emph{$\CQ$-categories} for short), as an extension of quantale-enriched categories \cite{Kelly1982,Lai2007,Lawvere1973}, has been developed in \cite{Heymans2010,Rosenthal1996,Stubbe2005,Stubbe2006}; the survey paper \cite{Stubbe2014} is particularly recommended as an overview of this theory for the readership of fuzzy logicians and fuzzy set theorists.

We recall the basics of quantaloid-enriched categories in Section \ref{Quantaloid-enriched categories} and present Kan extensions and (co)dense $\CQ$-functors as our key tools in Section \ref{Colimits_Kan_codense}. Under this general framework, Sections \ref{General_Representation} and \ref{General_Representation_Dense} are respectively devoted to establishing our main results, Theorems \ref{general_representation_poset} and \ref{general_representation_complete_lattice}, in the generality of their $\CQ$-version:
\begin{itemize}
\item {\bf Theorem \ref{general_representation}}: Let $S\dv T:\bbD\lra\bbC$ be an adjunction in $\QCat$. A $\CQ$-category $\bbX$ is equivalent to $\Fix(TS)$ if, and only if, there exist essentially surjective $\CQ$-functors $L:\bbC\lra\bbX$ and $R:\bbD\lra\bbX$ with $\bbD(S-,-)=\bbX(L-,R-)$.
\item {\bf Theorem \ref{general_representation_dense_complete}}: Let $S\dv T:\bbD\lra\bbC$ be an adjunction between complete $\CQ$-categories. Then a complete $\CQ$-category $\bbX$ is equivalent to $\Fix(TS)$ if, and only if, there exist dense $\CQ$-functors $F:\bbA\lra\bbX$, $K:\bbA\lra\bbC$ and codense $\CQ$-functors $G:\bbB\lra\bbX$, $H:\bbB\lra\bbD$ with $\bbD(SK-,H-)=\bbX(F-,G-)$.
    \begin{center}
    $\bfig
    \Vtriangle(-2500,-300)/@{>}@<4pt>`->`->/<500,600>[\bbC`\bbD`\bbX;S`L`R]
    \morphism(-1500,300)|b|/@{>}@<4pt>/<-1000,0>[\bbD`\bbC;T]
    \place(-2000,305)[\mbox{\scriptsize$\bot$}]
    \place(-2000,-450)[\text{Theorem \ref{general_representation}}]
    \morphism(-700,0)<400,300>[\bbA`\bbC;K]
    \morphism(700,0)<-400,300>[\bbB`\bbD;H]
    \morphism(-700,0)|b|<700,-300>[\bbA`\bbX;F]
    \morphism(700,0)|b|<-700,-300>[\bbB`\bbX;G]
    \morphism(-300,300)|a|/@{>}@<4pt>/<600,0>[\bbC`\bbD;S]
    \morphism(300,300)|b|/@{>}@<4pt>/<-600,0>[\bbD`\bbC;T]
    \place(0,305)[\mbox{\scriptsize$\bot$}]
    \place(0,-450)[\text{Theorem \ref{general_representation_dense_complete}}]
    \efig$
    \end{center}
\end{itemize}
In fact, Theorems \ref{general_representation_poset} and \ref{general_representation_complete_lattice} are respectively special cases of Corollary \ref{general_representation_type_preserving} and Theorem \ref{general_representation_elementary} when $\CQ={\bf 2}$, which will be shown later as immediate consequences of Theorems \ref{general_representation} and \ref{general_representation_dense_complete}.

The applications of the general representation theorems in FCA and RST are discussed in Section \ref{Isbell_Kan_fixed_points}. Note that distributors between $\CQ$-categories (or \emph{$\CQ$-distributors} for short) generalize relations between sets in the sense that a $\CQ$-distributor may be thought of as a multi-typed and multi-valued relation which respects the $\CQ$-categorical structures in its domain and codomain. Thus, $\CQ$-distributors may be considered as \emph{multi-typed and multi-valued contexts} upon which a general theory of FCA and RST can be established (see \cite[Section 4]{Shen2016a} for instance).

Explicitly, each $\CQ$-distributor $\phi:\bbA\oto\bbB$ induces two pairs of adjoint $\CQ$-functors between the (co)presheaf $\CQ$-categories of $\bbA$ and $\bbB$, i.e.,
$$\uphi\dv\dphi:\PdB\lra\PA\quad\text{and}\quad\phi^*\dv\phi_*:\PA\lra\PB,$$
called respectively the \emph{Isbell adjunction} and \emph{Kan adjunction} \cite{Shen2013a} induced by $\phi$, whose fixed points constitute complete $\CQ$-categories $\Mphi$ and $\Kphi$, respectively. As our notations already suggest, Isbell adjunctions and Kan adjunctions induced by $\CQ$-distributors present the $\CQ$-categorical version of the Galois connections \eqref{uR_dv_dR} in FCA and RST. Therefore, for a $\CQ$-distributor $\phi:\bbA\oto\bbB$, $\Mphi$ and $\Kphi$ may be respectively viewed as ``concept lattices'' of the multi-typed and multi-valued context $(\bbA,\bbB,\phi)$ in FCA and RST.

Although it is straightforward to extend Theorem \ref{FCA_fundamental} to the $\CQ$-version (see Theorem \ref{Mphi_representation}), the validity of Theorem \ref{RST_fundamental} relies heavily on the fact that {\bf 2}, as a Boolean algebra, satisfies the law of double negation, which guarantees the existence of the complement $\neg\phi$. For a quantaloid $\CQ$, the existence of $\neg\phi$ requires $\CQ$ to be a \emph{Girard quantaloid} \cite{Rosenthal1992} (an extension of \emph{Girard quantales} \cite{Rosenthal1990,Yetter1990}). In fact, it is impossible to extend Theorem \ref{RST_fundamental} directly to the $\CQ$-version without assuming $\CQ$ being Girard: as Lai-Zhang revealed in the case that $\CQ$ is a commutative integral quantale\footnote{An \emph{integral} quantale is a unital quantale in which the unit is the top element of the quantale.} (see \cite[Proposition 5.5]{Lai2009}), in general a codense $\CQ$-functor $\bbA\lra\Kphi$ may not even exist! This observation can be extended to a quantaloid $\CQ$ with some mild assumptions (Proposition \ref{codense_Girard}), which reveals that even the existence of codense $\CQ$-functors $\bbA\lra\PA$ would require $\CQ$ to be Girard.

Hence, for a general quantaloid $\CQ$, in order to apply Theorem \ref{general_representation_dense_complete} to $\Kphi$, one needs to find a non-trivial codense $\CQ$-subcategory of $\PA$ which would unavoidably have a larger size than $\bbA$. To this end, we construct a $\CQ$-subcategory $\obbA$ of $\PA$ consisting of all the possible \emph{relative pseudo-complements of representable copresheaves} on $\bbA$. Then, by defining the $\CQ$-distributor $\bbA^{\tr}:\bbA\oto\obbA$ as the codomain restriction of the graph of the Yoneda embedding $(\sY_{\bbA})_{\nat}:\bbA\oto\PA$, one has the \emph{relative pseudo-complement}
$$\phi^{\tr}:=\bbA^{\tr}\lda\phi$$
of any $\CQ$-distributor $\phi:\bbA\oto\bbB$ with respect to $\bbA^{\tr}$, through which the precise condition of a complete $\CQ$-category representing $\Kphi$ is obtained (Theorem \ref{Kphi_representation}). Indeed, we prove
\begin{equation} \label{Kphi=Mphitr}
\Kphi=\sM\phi^{\tr}
\end{equation}
in the proof of \ref{Kphi_representation}, which represents the ``concept lattice'' of any multi-typed and multi-valued context in RST as the ``concept lattice'' of the relative pseudo-complement of the given context in FCA. Furthermore, the identity \eqref{Kphi=Mphitr} can be established on the functorial level as Proposition \ref{Kphi=Mphitr_functor} reveals.

Finally, Theorem \ref{general_representation_elementary} is presented in Section \ref{Elementary} as an elementary representation theorem for fixed points of adjoint $\CQ$-functors in terms of order-theoretic notions, i.e., $\bv$-dense and $\bw$-dense maps. By the aid of this theorem one is able to incorporate B{\v e}lohl{\' a}vek's representation theorem for concept lattices of quantale-valued contexts in FCA \cite[Theorem 14(2)]{Bvelohlavek2004} and Popescu's representation theorem for those in RST \cite[Proposition 7.3]{Popescu2004} into our general framework (see Corollary \ref{Mphi_Kphi_representation_quantale} and Remark \ref{Belohlavek_Popescu}). In fact, their results are extended to the quantaloid-enriched version (Theorems \ref{Mphi_representation_elementary} and \ref{Kphi_representation_elementary}) which outline the difference between the representations of $\Mphi$ and $\Kphi$; as Corollary \ref{Mphi_Kphi_representation_quantale} shows, this subtle distinction could easily be ignored when $\CQ$ is a quantale.

\section{Quantaloid-enriched categories} \label{Quantaloid-enriched categories}

A \emph{quantaloid} \cite{Rosenthal1996} $\CQ$ is a locally ordered 2-category whose hom-sets are complete lattices such that the composition $\circ$ of arrows preserves joins in each variable. The corresponding adjoints induced by the compositions
\begin{align*}
-\circ u\dv -\lda u:&\ \CQ(p,r)\lra\CQ(q,r),\\
v\circ -\dv v\rda -:&\ \CQ(p,r)\lra\CQ(p,q)
\end{align*}
satisfy
$$v\circ u\leq w\iff v\leq w\lda u\iff u\leq v\rda w$$
for all $\CQ$-arrows $u:p\lra q$, $v:q\lra r$, $w:p\lra r$, where the operations $\lda$, $\rda$ are called \emph{left} and \emph{right implications} in $\CQ$, respectively.

Unless otherwise specified, throughout this paper $\CQ$ denotes a small quantaloid with a set $\CQ_0$ of objects and a set $\CQ_1$ of arrows. The identity $\CQ$-arrow on $q\in\CQ_0$ will be denoted by $1_q$.

Considering $\CQ_0$ as a ``base'' set, a \emph{$\CQ$-typed set} is a set $A$ equipped with a map $|\text{-}|:A\lra\CQ_0$ sending each $x\in A$ to its \emph{type} $|x|\in\CQ_0$. A map $F:A\lra B$ between $\CQ$-typed sets is \emph{type-preserving} if $|x|=|Fx|$ for all $x\in A$. $\CQ$-typed sets and type-preserving maps constitute the slice category $\Set\da\CQ_0$.

A $\CQ$-category $\bbA$ consists of a $\CQ$-typed set $\bbA_0$ and hom-arrows $\bbA(x,y)\in\CQ(|x|,|y|)$ for all $x,y\in\bbA_0$ such that $1_{|x|}\leq\bbA(x,x)$ and $\bbA(y,z)\circ\bbA(x,y)\leq\bbA(x,z)$ for all $x,y,z\in\bbA_0$. A $\CQ$-category $\bbB$ is a \emph{$\CQ$-subcategory} of $\bbA$ if $\bbB_0\subseteq\bbA_0$ and $\bbB(x,y)=\bbA(x,y)$ for all $x,y\in\bbB_0$.

Each $\CQ$-category $\bbA$ admits a natural underlying (pre)order on $\bbA_0$ given by $x\leq y$ if $|x|=|y|$ and $1_{|x|}\leq\bbA(x,y)$. A $\CQ$-category $\bbA$ is \emph{separated} (or \emph{skeletal}) if $x\cong y$ (i.e., $x\leq y$ and $y\leq x$) implies $x=y$ for all $x,y\in\bbA_0$.

A \emph{$\CQ$-distributor} $\phi:\bbA\oto\bbB$ between $\CQ$-categories is given by a family of $\CQ$-arrows $\{\phi(x,y):|x|\lra|y|\}_{x\in\bbA_0,y\in\bbB_0}$ such that $\bbB(y,y')\circ\phi(x,y)\circ\bbA(x',x)\leq\phi(x',y')$ for all $x,x'\in\bbA_0$, $y,y'\in\bbB_0$. With the pointwise local order inherited from $\CQ$, $\CQ$-categories and $\CQ$-distributors constitute a (large) quantaloid $\QDist$ in which
\begin{align*}
&\psi\circ\phi:\bbA\oto\bbC,\quad(\psi\circ\phi)(x,z)=\bv_{y\in\bbB_0}\psi(y,z)\circ\phi(x,y),\\
&\xi\lda\phi:\bbB\oto\bbC,\quad(\xi\lda\phi)(y,z)=\bw_{x\in\bbA_0}\xi(x,z)\lda\phi(x,y),\\
&\psi\rda\xi:\bbA\oto\bbB,\quad (\psi\rda\xi)(x,y)=\bw_{z\in\bbC_0}\psi(y,z)\rda\xi(x,z)
\end{align*}
for $\CQ$-distributors $\phi:\bbA\oto\bbB$, $\psi:\bbB\oto\bbC$, $\xi:\bbA\oto\bbC$; the identity $\CQ$-distributor on $\bbA$ is given by hom-arrows $\bbA:\bbA\oto\bbA$.

A \emph{$\CQ$-functor} $F:\bbA\lra\bbB$ between $\CQ$-categories is a type-preserving map $F:\bbA_0\lra\bbB_0$ with $\bbA(x,x')\leq\bbB(Fx,Fx')$ for all $x,x'\in\bbA_0$. With the pointwise (pre)order of $\CQ$-functors given by
$$F\leq G:\bbA\lra\bbB\iff\forall x\in\bbA_0:\ Fx\leq Gx\iff\forall x\in\bbA_0:\ 1_{|x|}\leq\bbB(Fx,Gx),$$
$\CQ$-categories and $\CQ$-functors constitute a 2-category $\QCat$.

\begin{rem}  \label{Qcat_dual}
The \emph{dual} of a $\CQ$-category $\bbA$ is a $\CQ^{\op}$-category\footnote{The terminologies adopted here are not exactly the same as in the references \cite{Stubbe2005,Stubbe2006,Stubbe2007,Stubbe2014}: Our $\CQ$-categories are exactly $\CQ^{\op}$-categories in the sense of Stubbe.}, given by $\bbA^{\op}_0=\bbA_0$ and $\bbA^{\op}(x,y)=\bbA(y,x)$ for all $x,y\in\bbA_0$. Each $\CQ$-functor $F:\bbA\lra\bbB$ becomes a $\CQ^{\op}$-functor $F^{\op}:\bbA^{\op}\lra\bbB^{\op}$ with the same mapping on objects but $(F')^{\op}\leq F^{\op}$ whenever $F\leq F':\bbA\lra\bbB$. Each $\CQ$-distributor $\phi:\bbA\oto\bbB$ corresponds bijectively to a $\CQ^{\op}$-distributor $\phi^{\op}:\bbB^{\op}\oto\bbA^{\op}$ with $\phi^{\op}(y,x)=\phi(x,y)$ for all $x\in\bbA_0$, $y\in\bbB_0$. Therefore, as already noted in \cite{Stubbe2005}, one has a 2-isomorphism
\begin{equation} \label{QopCat_iso}
(-)^{\op}:\QCat\cong(\CQ^{\op}\text{-}\Cat)^{\co}
\end{equation}
and an isomorphism of quantaloids
\begin{equation} \label{QopDist_iso}
(-)^{\op}:\QDist\cong(\CQ^{\op}\text{-}\Dist)^{\op},
\end{equation}
where ``$\co$'' refers to reversing order in hom-sets.
\end{rem}

Each $\CQ$-functor $F:\bbA\lra\bbB$ induces a pair of $\CQ$-distributors given by
$$F_{\nat}:\bbA\oto\bbB,\quad F_{\nat}(x,y)=\bbB(Fx,y)\quad\text{and}\quad F^{\nat}:\bbB\oto\bbA,\quad F^{\nat}(y,x)=\bbB(y,Fx),$$
called respectively the \emph{graph} and \emph{cograph} of $F$, which form an adjunction $F_{\nat}\dv F^{\nat}$ in the 2-category $\QDist$, i.e., $\bbA\leq F^{\nat}\circ F_{\nat}$ and $F_{\nat}\circ F^{\nat}\leq\bbB$. It is easy to see
\begin{equation} \label{F_leq_G_graph}
F\leq G:\bbA\lra\bbB\iff G_{\nat}\leq F_{\nat}:\bbA\oto\bbB\iff F^{\nat}\leq G^{\nat}:\bbB\oto\bbA,
\end{equation}
and therefore
\begin{equation} \label{graph_cograph_functorial}
(-)_{\nat}:\QCat\lra(\QDist)^{\co},\quad(-)^{\nat}:\QCat\lra(\QDist)^{\op}
\end{equation}
are both 2-functors.

It is straightforward to verify the following propositions:

\begin{prop} \label{distributor_graph}
For any $\CQ$-distributor $\phi:\bbA\oto\bbB$ and $\CQ$-functors $F:\bbX\lra\bbA$, $G:\bbY\lra\bbB$,
$$\phi(F-,G-)=G^{\nat}\circ\phi\circ F_{\nat}=G_{\nat}\rda(\phi\lda F^{\nat}).$$
\end{prop}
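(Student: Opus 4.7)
The asserted equality decomposes into two parts of rather different character. The first identity, $\phi(F-,G-)=G^\nat\circ\phi\circ F_\nat$, is a pointwise computation inside $\CQ$, while the second, $G^\nat\circ\phi\circ F_\nat=G_\nat\rda(\phi\lda F^\nat)$, follows formally from the $\QDist$-adjunctions $F_\nat\dv F^\nat$ and $G_\nat\dv G^\nat$.

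For the first identity, my plan is to unravel the definitions of composition, graph and cograph to obtain, at $(x,y)\in\bbX_0\times\bbY_0$,
$$(G^\nat\circ\phi\circ F_\nat)(x,y)=\bv_{a\in\bbA_0,\,b\in\bbB_0}\bbB(b,Gy)\circ\phi(a,b)\circ\bbA(Fx,a).$$
The inequality $\leq\phi(Fx,Gy)$ for each joinand is exactly the distributor axiom applied to $\phi$; the reverse inequality is witnessed by the single joinand at $a=Fx$, $b=Gy$, which already equals $\phi(Fx,Gy)$ once one uses $1_{|Fx|}\leq\bbA(Fx,Fx)$ and $1_{|Gy|}\leq\bbB(Gy,Gy)$.

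For the second identity I would invoke the general \emph{implication-as-composition} fact: whenever $u\dv v$ in a quantaloid, both $-\lda u$ and $v\rda -$ reduce to plain composition with the adjoint partner on the appropriate side. Concretely, $F_\nat\dv F^\nat$ yields $\phi\lda F^\nat=\phi\circ F_\nat$, with $\leq$ coming from the counit $F_\nat\circ F^\nat\leq\bbA$ and $\geq$ coming from the unit $\bbX\leq F^\nat\circ F_\nat$; symmetrically, $G_\nat\dv G^\nat$ yields $G_\nat\rda\psi=G^\nat\circ\psi$ for any $\psi:\bbX\oto\bbB$. Substituting $\psi=\phi\circ F_\nat$ then gives $G_\nat\rda(\phi\lda F^\nat)=G^\nat\circ\phi\circ F_\nat$ and closes the chain. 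The only real hazard is type-bookkeeping, namely aligning the source and target of each distributor so that every composition and implication is well-typed; once that is done, the whole argument is a short mate calculation with no substantive difficulty.
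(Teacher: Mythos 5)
Your proposal is correct and is exactly the routine verification the paper has in mind when it states this proposition without proof as ``straightforward'': the pointwise computation, using the distributor inequality for one direction and the joinand at $a=Fx$, $b=Gy$ together with $1_{|Fx|}\leq\bbA(Fx,Fx)$, $1_{|Gy|}\leq\bbB(Gy,Gy)$ for the other, settles $\phi(F-,G-)=G^{\nat}\circ\phi\circ F_{\nat}$, while the second equality is precisely two applications of the composition/implication exchange for adjoint distributors (recorded later in the paper as Proposition \ref{adjoint_arrow_calculation}(1)), which you correctly reprove from the unit and counit of $F_{\nat}\dv F^{\nat}$ and $G_{\nat}\dv G^{\nat}$ and then instantiate at $\psi=\phi\circ F_{\nat}$. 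The only blemish is cosmetic: in your verification of $\phi\lda F^{\nat}=\phi\circ F_{\nat}$ the attributions appear swapped --- the inequality $\phi\circ F_{\nat}\leq\phi\lda F^{\nat}$ is the one that follows from the counit $F_{\nat}\circ F^{\nat}\leq\bbA$, whereas $\phi\lda F^{\nat}\leq\phi\circ F_{\nat}$ uses the unit $\bbX\leq F^{\nat}\circ F_{\nat}$ --- but both ingredients are available, so the argument stands as written.
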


\begin{prop} (See \cite{Shen2013a}.) \label{functor_graph}
Let $F:\bbA\lra\bbB$ be a $\CQ$-functor.
\begin{enumerate}[\rm (1)]
\item $\bbA=F^{\nat}\circ F_{\nat}$ if, and only if, $F$ is \emph{fully faithful} in the sense that $\bbA(x,y)=\bbB(Fx,Fy)$ for all $x,y\in\bbA_0$.
\item If $F$ is \emph{essentially surjective} in the sense that there exists $x\in\bbA_0$ with $Fx\cong y$ for all $y\in\bbB_0$, then $F_{\nat}\circ F^{\nat}=\bbB$.
\end{enumerate}
\end{prop}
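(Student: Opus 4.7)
The plan is to unfold both sides of the claimed identities pointwise using the explicit composition formula in $\QDist$ together with the associativity and unitality of composition in $\bbB$; both parts then reduce to one-line computations using the composition law of the $\CQ$-category $\bbB$.

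For part (1), I would fix $x,x'\in\bbA_0$ and compute
\[(F^{\nat}\circ F_{\nat})(x,x')=\bigvee_{y\in\bbB_0}F^{\nat}(y,x')\circ F_{\nat}(x,y)=\bigvee_{y\in\bbB_0}\bbB(y,Fx')\circ\bbB(Fx,y).\]
Each term is bounded above by $\bbB(Fx,Fx')$ thanks to the composition law of $\bbB$. Conversely, the particular term at $y=Fx$ satisfies
\[\bbB(Fx,Fx')\circ\bbB(Fx,Fx)\geq\bbB(Fx,Fx')\circ 1_{|Fx|}=\bbB(Fx,Fx'),\]
so the join equals $\bbB(Fx,Fx')$. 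Hence $\bbA=F^{\nat}\circ F_{\nat}$ iff $\bbA(x,x')=\bbB(Fx,Fx')$ for every $x,x'\in\bbA_0$, which is precisely fully faithfulness.

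For part (2), I would fix $y,y'\in\bbB_0$ and compute
\[(F_{\nat}\circ F^{\nat})(y,y')=\bigvee_{x\in\bbA_0}F_{\nat}(x,y')\circ F^{\nat}(y,x)=\bigvee_{x\in\bbA_0}\bbB(Fx,y')\circ\bbB(y,Fx),\]
so every term is again bounded above by $\bbB(y,y')$ by composition in $\bbB$, giving one inequality automatically. For the reverse inequality I would invoke essential surjectivity of $F$ to choose an $x\in\bbA_0$ with $Fx\cong y$; this yields $|Fx|=|y|$ together with $1_{|y|}\leq\bbB(Fx,y)$ and $1_{|y|}\leq\bbB(y,Fx)$. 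Then the single term at this chosen $x$ satisfies
\[\bbB(Fx,y')\circ\bbB(y,Fx)\geq\bigl(\bbB(y,y')\circ\bbB(Fx,y)\bigr)\circ 1_{|y|}\geq\bbB(y,y'),\]
where the first step uses the composition law $\bbB(y,y')\circ\bbB(Fx,y)\leq\bbB(Fx,y')$.

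The only subtlety to watch is the bookkeeping of types (so that the units $1_{|x|}$, $1_{|y|}$ are inserted at the right places) and the correct application of the composition law of $\bbB$ both as an upper bound and, via the unit laws $1_{|x|}\leq\bbB(x,x)$, as a lower bound witness. Neither part has a genuine obstacle; both are direct pointwise verifications.
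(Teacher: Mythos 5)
Your proof is correct, and it is exactly the kind of direct pointwise verification the paper has in mind when it states the proposition as ``straightforward to verify'' (citing Shen--Zhang): unfolding the composite in $\QDist$, bounding each term by the composition law of $\bbB$, and producing the witnessing term via the unit inequality (resp.\ $Fx\cong y$). In particular your observation that $(F^{\nat}\circ F_{\nat})(x,x')=\bbB(Fx,Fx')$ holds for an arbitrary $\CQ$-functor cleanly gives both directions of part (1) at once, and the type bookkeeping in part (2) is handled correctly.
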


\begin{prop} (See \cite{Heymans2010}.) \label{adjoint_arrow_calculation}
The following identities hold for all $\CQ$-functors $F$ and $\CQ$-distributors $\phi$, $\psi$ whenever the operations make sense:
\begin{enumerate}[\rm (1)]
\item $\phi\circ F_{\nat}=\phi\lda F^{\nat}$,\quad $F^{\nat}\circ\phi=F_{\nat}\rda\phi$.
\item $(F_{\nat}\circ\phi)\rda\psi=\phi\rda(F^{\nat}\circ\psi)$,\quad $(\psi\circ F_{\nat})\lda\phi=\psi\lda(\phi\circ F^{\nat})$.
\item $(\phi\rda\psi)\circ F_{\nat}=\phi\rda(\psi\circ F_{\nat})$,\quad $F^{\nat}\circ(\psi\lda\phi)=(F^{\nat}\circ\psi)\lda\phi$.
\item $F^{\nat}\circ(\phi\rda\psi)=(\phi\circ F_{\nat})\rda\psi$,\quad $(\psi\lda\phi)\circ F_{\nat}=\psi\lda(F^{\nat}\circ\phi)$.
\end{enumerate}
\end{prop}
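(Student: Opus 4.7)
The plan is to establish part (1) directly from the adjunction $F_{\nat}\dv F^{\nat}$ in $\QDist$ (recalled just before the proposition), and then to deduce (2)--(4) as purely formal consequences of (1) together with the standard calculus of implications in an arbitrary quantaloid.

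For (1), I would first observe the following general fact valid in any quantaloid: whenever $f\dv g$ (meaning $f\circ g\leq 1$ and $1\leq g\circ f$), one has $\phi\circ f=\phi\lda g$ and $g\circ\phi'=f\rda\phi'$ for every composable $\phi,\phi'$. The inequality $\phi\circ f\leq\phi\lda g$ is immediate from $(\phi\circ f)\circ g\leq\phi$; the reverse inequality uses the unit,
$$\phi\lda g\leq(\phi\lda g)\circ g\circ f\leq\phi\circ f.$$
The second identity is dual. Applying this to $F_{\nat}\dv F^{\nat}$ delivers both equations of (1) at once.

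For (2)--(4), I would rely on three ``associativity'' rules for implications, valid in any quantaloid:
$$(v\circ u)\rda w=u\rda(v\rda w),\qquad w\lda(v\circ u)=(w\lda u)\lda v,\qquad u\rda(w\lda v)=(u\rda w)\lda v,$$
each proved in one line by iterating $v\circ u\leq w\iff u\leq v\rda w\iff v\leq w\lda u$. With these in hand every one of the six remaining equations reduces to a short rewriting using (1). For example,
$$(F_{\nat}\circ\phi)\rda\psi=\phi\rda(F_{\nat}\rda\psi)=\phi\rda(F^{\nat}\circ\psi)$$
and
$$(\phi\rda\psi)\circ F_{\nat}=(\phi\rda\psi)\lda F^{\nat}=\phi\rda(\psi\lda F^{\nat})=\phi\rda(\psi\circ F_{\nat}),$$
with the remaining identities obtained by the same pattern. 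A more economical variant is to verify one of the two equations in each clause directly, and then transport it to its mirror via the isomorphism \eqref{QopDist_iso} and the fact that $(F_{\nat})^{\op}=(F^{\op})^{\nat}$ and $(F^{\nat})^{\op}=(F^{\op})_{\nat}$ (which in turn swap $\lda$ and $\rda$).

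There is no real obstacle: the argument is entirely formal. The only care required is bookkeeping, keeping track of which side of a composite each implication divides out; once one observes that $F_{\nat}\rda(-)=F^{\nat}\circ(-)$ and $(-)\lda F^{\nat}=(-)\circ F_{\nat}$ are the two manifestations of (1), and that the three associativity rules above exhaust the ways to rearrange a composite through $\rda$ or $\lda$, all eight identities in the proposition follow mechanically.
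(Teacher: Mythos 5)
Your proof is correct. The paper offers no proof of this proposition at all—it labels it "straightforward to verify" and cites Heymans—and your argument (deriving both equations of (1) from the adjunction $F_{\nat}\dv F^{\nat}$ in $\QDist$ via the unit/counit inequalities, then reducing (2)--(4) to (1) together with the standard associativity laws for $\lda$ and $\rda$ in a quantaloid) is exactly the routine verification the authors have in mind, with the types and directions all checking out.
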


A $\CQ$-functor $F:\bbA\lra\bbB$ is an \emph{equivalence} (resp. \emph{isomorphism}) of $\CQ$-categories if there exists a $\CQ$-functor $G:\bbB\lra\bbA$ with $GF\cong 1_{\bbA}$ and $FG\cong 1_{\bbB}$ (resp. $GF=1_{\bbA}$ and $FG=1_{\bbB}$), where $1_{\bbA}$ and $1_{\bbB}$ respectively denote the identity $\CQ$-functors on $\bbA$ and $\bbB$. In this case, we write $\bbA\simeq\bbB$ (resp. $\bbA\cong\bbB$) to denote that $\bbA$ and $\bbB$ are equivalent (resp. isomorphic) $\CQ$-categories.

\begin{prop} (See \cite{Stubbe2005}.)
A $\CQ$-functor is an equivalence (resp. isomorphism) of $\CQ$-categories if, and only if, it is fully faithful and essentially surjective (resp. fully faithful and bijective).
\end{prop}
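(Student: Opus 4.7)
The plan is to split the statement into the equivalence case and the isomorphism case, and in each case treat the two implications separately. The work is essentially an enriched upgrade of the classical $\Cat$-theoretic argument, with the additional bookkeeping that $\CQ$-functors are type-preserving maps on objects respecting hom-arrows, and the additional observation that if $x\cong y$ in a $\CQ$-category $\bbA$ then $\bbA(x,z)=\bbA(y,z)$ and $\bbA(z,x)=\bbA(z,y)$ for all $z\in\bbA_0$. This lemma (call it the \emph{hom-invariance under isomorphism}) follows from the composition axiom and the defining inequalities $1_{|x|}\leq\bbA(x,y)$, $1_{|y|}\leq\bbA(y,x)$, and I will record it at the start of the proof.

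For the equivalence case, the ``only if'' direction is short: given a quasi-inverse $G:\bbB\lra\bbA$ with $GF\cong 1_{\bbA}$ and $FG\cong 1_{\bbB}$, essential surjectivity of $F$ follows immediately by sending $y\in\bbB_0$ to $Gy$. For fully faithfulness, I chain the $\CQ$-functor inequalities
\[
\bbA(x,y)\leq\bbB(Fx,Fy)\leq\bbA(GFx,GFy)=\bbA(x,y),
\]
the last equality being the hom-invariance lemma applied to $GFx\cong x$, $GFy\cong y$. Equality throughout forces $\bbA(x,y)=\bbB(Fx,Fy)$.

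For the ``if'' direction of the equivalence case, I will build $G:\bbB\lra\bbA$ by choosing, for each $y\in\bbB_0$, an object $Gy\in\bbA_0$ with $FGy\cong y$ (invoking essential surjectivity and the axiom of choice). Since $F$ is type-preserving and $FGy\cong y$ implies $|FGy|=|y|$, the map $G$ is type-preserving. Using fully faithfulness of $F$ together with the hom-invariance lemma, I compute
\[
\bbA(Gy,Gy')=\bbB(FGy,FGy')=\bbB(y,y'),
\]
so $G$ is a $\CQ$-functor (in fact fully faithful). By construction $FG\cong 1_{\bbB}$; for $GF\cong 1_{\bbA}$, I apply fully faithfulness once more to transfer $FGFx\cong Fx$ back to $GFx\cong x$ via $\bbA(GFx,x)=\bbB(FGFx,Fx)\geq 1_{|x|}$ and the symmetric inequality.

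The isomorphism case is a minor refinement: with $F$ bijective on objects I can take $G=F^{-1}$ as a set map, which is automatically type-preserving because $F$ is, and which the fully-faithful condition promotes to a $\CQ$-functor by the same equality $\bbA(Gy,Gy')=\bbB(FGy,FGy')=\bbB(y,y')$; the composites $GF$ and $FG$ are then identities on the nose, not merely up to isomorphism. I expect the main subtle point of the whole argument to be a clean statement and use of the hom-invariance lemma, since all four implications rely on it to convert the pointwise isomorphisms $GFx\cong x$ (or $FGy\cong y$) into genuine equalities between hom-arrows; everything else is bookkeeping with type-preservation and the $\CQ$-functor inequality.
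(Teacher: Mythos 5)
Your proof is correct and complete: the hom-invariance lemma is exactly the right tool for turning the pointwise isomorphisms $GFx\cong x$ and $FGy\cong y$ into equalities of hom-arrows, and all four implications (including the type-preservation checks and the tacit ``only if'' half of the isomorphism case, which is the same chain of inequalities with genuine equalities) go through as you describe. The paper does not prove this proposition itself but cites Stubbe's work, and your argument is essentially the standard enriched version of the classical $\Cat$-theoretic proof, so there is nothing to flag.
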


A pair of $\CQ$-functors $F:\bbA\lra\bbB$, $G:\bbB\lra\bbA$ forms an adjunction $F\dv G:\bbB\lra\bbA$ in $\QCat$ if $1_{\bbA}\leq GF$ and $FG\leq 1_{\bbB}$. It is easy to obtain the following equivalent characterizations of adjoint $\CQ$-functors:

\begin{prop} (See \cite{Stubbe2005}.) \label{adjoint_graph}
Let $F:\bbA\lra\bbB$, $G:\bbB\lra\bbA$ be a pair of $\CQ$-functors. Then
$$F\dv G\ \text{in}\ \QCat\iff F_{\nat}=G^{\nat}\iff G_{\nat}\dv F_{\nat}\ \text{in}\ \QDist\iff G^{\nat}\dv F^{\nat}\ \text{in}\ \QDist.$$
\end{prop}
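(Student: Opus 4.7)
The plan is to pivot through the four conditions using two ingredients: 2-functoriality of the graph and cograph recorded in the excerpt, and uniqueness of adjoints in the locally ordered 2-category $\QDist$.

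For (1)$\iff$(3), I would apply the 2-functor $(-)_{\nat}:\QCat\lra(\QDist)^{\co}$. Functoriality on 1-cells yields $(GF)_{\nat}=G_{\nat}\circ F_{\nat}$ and $(1_{\bbA})_{\nat}=\bbA$, while the 2-cell reversal is precisely the bijection $F\leq G\iff G_{\nat}\leq F_{\nat}$ from \eqref{F_leq_G_graph}. Thus the unit $1_{\bbA}\leq GF$ translates to $G_{\nat}\circ F_{\nat}\leq\bbA$ and the counit $FG\leq 1_{\bbB}$ to $\bbB\leq F_{\nat}\circ G_{\nat}$, which together are exactly $G_{\nat}\dv F_{\nat}$ in $\QDist$. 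The argument for (1)$\iff$(4) is symmetric via $(-)^{\nat}:\QCat\lra(\QDist)^{\op}$, which reverses 1-cells but preserves 2-cells, yielding $\bbA\leq F^{\nat}\circ G^{\nat}$ and $G^{\nat}\circ F^{\nat}\leq\bbB$, i.e., $G^{\nat}\dv F^{\nat}$.

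To incorporate (2), I would invoke uniqueness of adjoints in $\QDist$: for any $\phi:\bbX\oto\bbY$ with two right adjoints $\psi_1,\psi_2:\bbY\oto\bbX$, the zig-zag
\[
\psi_1 = \bbX\circ\psi_1 \leq \psi_2\circ\phi\circ\psi_1 \leq \psi_2\circ\bbY = \psi_2
\]
and its symmetric counterpart force $\psi_1=\psi_2$, and an analogous computation handles left adjoints. Since the excerpt records that $F_{\nat}\dv F^{\nat}$ and $G_{\nat}\dv G^{\nat}$ always hold, condition (3) combined with $G_{\nat}\dv G^{\nat}$ forces $F_{\nat}=G^{\nat}$ by uniqueness of right adjoints, while (4) combined with $F_{\nat}\dv F^{\nat}$ forces $G^{\nat}=F_{\nat}$ by uniqueness of left adjoints. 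The reverse implications are automatic by substitution: plugging $F_{\nat}=G^{\nat}$ into $G_{\nat}\dv G^{\nat}$ produces (3), and into $F_{\nat}\dv F^{\nat}$ produces (4).

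The argument carries no substantive obstacle; the only point requiring care is the composition order in $\QDist$ --- namely $(GF)_{\nat}=G_{\nat}\circ F_{\nat}$ versus $(GF)^{\nat}=F^{\nat}\circ G^{\nat}$, reflecting the covariance of $(-)_{\nat}$ and the contravariance of $(-)^{\nat}$ on 1-cells --- together with the correct direction of 2-cell reversal for each graph-functor.
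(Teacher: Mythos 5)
Your argument is correct. The paper itself offers no proof of this proposition (it is quoted from Stubbe's work with the remark that it is easy to obtain), so there is nothing to diverge from; your route is a legitimate and self-contained one. Both halves check out: the translation of the unit and counit of $F\dv G$ along the 2-functors $(-)_{\nat}:\QCat\lra(\QDist)^{\co}$ and $(-)^{\nat}:\QCat\lra(\QDist)^{\op}$ uses exactly \eqref{F_leq_G_graph} together with $(GF)_{\nat}=G_{\nat}\circ F_{\nat}$, $(GF)^{\nat}=F^{\nat}\circ G^{\nat}$ and $(1_{\bbA})_{\nat}=(1_{\bbA})^{\nat}=\bbA$, and it lands precisely on the unit/counit inequalities defining $G_{\nat}\dv F_{\nat}$ and $G^{\nat}\dv F^{\nat}$; the passage to and from $F_{\nat}=G^{\nat}$ via uniqueness of adjoints against the canonical adjunctions $F_{\nat}\dv F^{\nat}$ and $G_{\nat}\dv G^{\nat}$ is sound, and your zig-zag computation is the standard one. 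One small point worth making explicit: you obtain genuine equality $F_{\nat}=G^{\nat}$ (not merely isomorphism) because $\QDist$ is a quantaloid, so its hom-sets are complete lattices and in particular antisymmetric. The more common textbook proof establishes (1)$\iff$(2) directly by unwinding $F_{\nat}=G^{\nat}$ as $\bbB(Fx,y)=\bbA(x,Gy)$ for all objects and verifying this from $1_{\bbA}\leq GF$, $FG\leq 1_{\bbB}$ and $\CQ$-functoriality; your abstract route buys a proof with no element-level computation at the cost of invoking the 2-functoriality already recorded in \eqref{graph_cograph_functorial}.
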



\section{Fixed points of adjoint $\CQ$-functors and their representation} \label{General_Representation}

For a $\CQ$-functor $F:\bbA\lra\bbA$, an object $x\in\bbA_0$ is a \emph{fixed point} of $F$ if $Fx\cong x$, and we denote by $\Fix(F)$ the $\CQ$-subcategory of $\bbA$ consisting of fixed points of $F$.

A \emph{$\CQ$-closure operator} \cite{Shen2013a} on a $\CQ$-category $\bbA$ is a $\CQ$-functor $F:\bbA\lra\bbA$ with $1_{\bbA}\leq F$ and $FF\cong F$.

\begin{prop} \label{Q_closure_la} (See \cite{Shen2013a}.)
For each $\CQ$-closure operator $F:\bbA\lra\bbA$, the inclusion $\CQ$-functor $\Fix(F)\ \to/^(->/\bbA$ is right adjoint to the codomain restriction $F:\bbA\lra\Fix(F)$.
\end{prop}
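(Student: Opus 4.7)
The plan is to verify the adjunction via the elementary characterization in $\QCat$: a pair $F\dv G:\bbB\lra\bbA$ is an adjunction iff $1_{\bbA}\leq GF$ and $FG\leq 1_{\bbB}$. Here the candidate left adjoint is the codomain restriction $F:\bbA\lra\Fix(F)$ and the candidate right adjoint is the inclusion $I:\Fix(F)\hookrightarrow\bbA$.

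First I would check that the codomain restriction is well-defined. Since $FF\cong F$, for every $x\in\bbA_0$ one has $F(Fx)\cong Fx$, so $Fx$ lies in $\Fix(F)_0$; this gives a type-preserving map $\bbA_0\lra\Fix(F)_0$, and because $\Fix(F)$ is a $\CQ$-subcategory its hom-arrows coincide with those of $\bbA$, so the inequality $\bbA(x,x')\leq\bbA(Fx,Fx')=\Fix(F)(Fx,Fx')$ shows that the codomain restriction is a $\CQ$-functor.

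Next, the unit inequality $1_{\bbA}\leq IF$ reduces to $1_{\bbA}\leq F$ since $IF=F$ as a $\CQ$-functor $\bbA\lra\bbA$, and this is part of the definition of a $\CQ$-closure operator. For the counit inequality $FI\leq 1_{\Fix(F)}$, I evaluate at an arbitrary $y\in\Fix(F)_0$: $FI(y)=Fy\cong y$ by the very definition of $\Fix(F)$, so in particular $Fy\leq y$ in the underlying order of $\Fix(F)$ (which, again, is inherited from $\bbA$). Combining the two inequalities yields $F\dv I$.

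There is essentially no obstacle here beyond bookkeeping; the only point to be careful about is that the underlying preorder on $\Fix(F)$ is the one inherited from $\bbA$, so that the isomorphism $Fy\cong y$ in $\bbA$ (which is how fixed points are defined) immediately produces the required inequality $FI\leq 1_{\Fix(F)}$ inside $\Fix(F)$. (Alternatively, one could invoke Proposition \ref{adjoint_graph} and verify $F_{\nat}=I^{\nat}$, i.e.\ $\bbA(Fx,y)=\bbA(x,y)$ for $y\in\Fix(F)_0$; one inequality uses $1_{|x|}\leq\bbA(x,Fx)$ together with transitivity, the other uses functoriality of $F$ and $Fy\cong y$. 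The elementary route above is however shorter.)
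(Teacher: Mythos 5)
Your proof is correct: the paper itself gives no argument for this proposition (it simply cites \cite{Shen2013a}), and your verification of the unit inequality $1_{\bbA}\leq IF$ from $1_{\bbA}\leq F$ and of the counit inequality $FI\leq 1_{\Fix(F)}$ from $Fy\cong y$ (together with the observation that $\Fix(F)$ is a full $\CQ$-subcategory, so the codomain restriction is a well-defined $\CQ$-functor and its underlying order is inherited from $\bbA$) is exactly the standard argument found in the cited source. The alternative route via $F_{\nat}=I^{\nat}$ and Proposition \ref{adjoint_graph} that you sketch is also valid, but unnecessary.
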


\begin{rem}
In the language of category theory, a $\CQ$-closure operator $F:\bbA\lra\bbA$ is a \emph{$\CQ$-monad} on $\bbA$ (note that the ``$\CQ$-natural transformation'' between $\CQ$-functors is simply given by the local order in $\QCat$), and objects in $\Fix(F)$ are precisely Eilenberg-Moore algebras of this $\CQ$-monad.
\end{rem}

Dually, \emph{$\CQ$-interior operators} correspond bijectively to $\CQ^{\op}$-closure operators under the isomorphism \eqref{QopCat_iso} in Remark \ref{Qcat_dual}; that is, $\CQ$-functors $F:\bbA\lra\bbA$ with $F\leq 1_{\bbA}$ and $FF\cong F$. The dual of Proposition \ref{Q_closure_la} states precisely that for each $\CQ$-interior operator $F:\bbA\lra\bbA$, the inclusion $\CQ$-functor $\Fix(F)\ \to/^(->/\bbA$ is left adjoint to the codomain restriction $F:\bbA\lra\Fix(F)$.

Each adjunction $S\dv T:\bbD\lra\bbC$ in $\QCat$ gives rise to a $\CQ$-closure operator $TS:\bbC\lra\bbC$ and a $\CQ$-interior operator $ST:\bbD\lra\bbD$. It is easy to see that the restrictions of $S$ and $T$,
\begin{align*}
S:\Fix(TS)\lra\Fix(ST)\quad\text{and}\quad T:\Fix(ST)\lra\Fix(TS),
\end{align*}
establish an equivalence of $\CQ$-categories, thus objects in both $\Fix(TS)$ and $\Fix(ST)$ will be referred to as fixed points of the adjoint $\CQ$-functors $S\dv T$. The following theorem describes those $\CQ$-categories which represent the fixed points of $S\dv T$:

\begin{thm} \label{general_representation}
Let $S\dv T:\bbD\lra\bbC$ be an adjunction in $\QCat$. A $\CQ$-category $\bbX$ is equivalent to $\Fix(TS)$ if, and only if, there exist essentially surjective $\CQ$-functors $L:\bbC\lra\bbX$ and $R:\bbD\lra\bbX$ with $S_{\nat}=R^{\nat}\circ L_{\nat}$.
$$\bfig
\Vtriangle/@{>}@<4pt>`->`->/[\bbC`\bbD`\bbX;S`L`R]
\morphism(1000,500)|b|/@{>}@<3pt>/<-1000,0>[\bbD`\bbC;T]
\place(500,510)[\mbox{\scriptsize$\bot$}]
\Vtriangle(2000,0)/->`->`<-/[\bbC`\bbD`\bbX;S_{\nat}=T^{\nat}`L_{\nat}`R^{\nat}]
\place(2250,250)[\circ] \place(2750,250)[\circ] \place(2500,500)[\circ]
\efig$$
\end{thm}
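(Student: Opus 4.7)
The plan is to handle the two implications separately; the forward direction is an assembly exercise, while the reverse hinges on a single key identification.

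For the \emph{only if} direction, suppose $\bbX\simeq\Fix(TS)$ via an equivalence $E:\Fix(TS)\lra\bbX$. Since $TS$ is a $\CQ$-closure operator on $\bbC$, Proposition~\ref{Q_closure_la} yields a codomain restriction $\widetilde{TS}:\bbC\lra\Fix(TS)$; and because $TST\cong T$ (from $ST\leq 1_{\bbD}$ together with $1_{\bbC}\leq TS$), the $\CQ$-functor $T$ restricts to $T':\bbD\lra\Fix(TS)$. Setting $L:=E\circ\widetilde{TS}$ and $R:=E\circ T'$ yields two essentially surjective $\CQ$-functors, and the identity $S_{\nat}=R^{\nat}\circ L_{\nat}$ reduces to the chain
$$\bbX(Lc,Rd)=\Fix(TS)(\widetilde{TS}c,T'd)=\bbC(TSc,Td)=\bbD(STSc,d)=\bbD(Sc,d)=S_{\nat}(c,d),$$
which uses the fully-faithfulness of $E$ and of the inclusion $\Fix(TS)\hookrightarrow\bbC$, the adjunction $S\dv T$, and $STS\cong S$.

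For the \emph{if} direction, the crux is the Key Lemma that $LT\cong R$ as $\CQ$-functors $\bbD\lra\bbX$. One inclusion, $LTd\leq Rd$, is immediate from $\bbX(LTd,Rd)=\bbD(STd,d)\geq 1_{|d|}$ (counit of $S\dv T$). For $Rd\leq LTd$, use essential surjectivity of $L$ to pick $c_0\in\bbC$ with $Lc_0\cong Rd$; then $\bbD(Sc_0,d)=\bbX(Lc_0,Rd)\geq 1_{|d|}$ by the hypothesis, giving $Sc_0\leq d$ in $\bbD$, hence $TSc_0\leq Td$ in $\bbC$ after applying $T$; combining with the unit inequality $c_0\leq TSc_0$ of $S\dv T$ and the monotonicity of $L$ yields $Rd\cong Lc_0\leq L(TSc_0)\leq LTd$.

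With the Key Lemma in hand, define $J:\Fix(TS)\lra\bbX$ by $J(c)=Lc$. For essential surjectivity, given $x\in\bbX$, write $x\cong Rd$ (essential surjectivity of $R$); since $TSTd\cong Td$ we have $Td\in\Fix(TS)$, and $J(Td)=LTd\cong Rd\cong x$. For full faithfulness, given $c_1,c_2\in\Fix(TS)$, the iso $c_2\cong TSc_2$ in $\bbC$ transports under $L$ to $Lc_2\cong L(TSc_2)=LT(Sc_2)\cong RSc_2$, whence
$$\bbX(Lc_1,Lc_2)=\bbX(Lc_1,RSc_2)=\bbD(Sc_1,Sc_2)=\bbC(c_1,c_2).$$
The main obstacle is proving $Rd\leq LTd$ in the Key Lemma: the hypothesis only controls hom-arrows of the form $\bbX(L{-},R{-})$, so it is not obvious how to bound $Rd$ from above by $LTd$; the trick is to use essential surjectivity of $L$ to lift $Rd$ to some $Lc_0$, then transport the inequality $Sc_0\leq d$ back through the adjunction $S\dv T$ and the functor $L$.
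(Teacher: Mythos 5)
Your proof is correct, and its skeleton matches the paper's: necessity by restricting $TS$ and $T$ to $\Fix(TS)$ (the paper simply reduces to $\bbX=\Fix(TS)$ where you carry the equivalence $E$ along explicitly), and sufficiency by showing that the restriction of $L$ to $\Fix(TS)$ is an equivalence, with $LT\cong R$ as the pivotal fact and essentially the same hom-computation for full faithfulness. Where you differ is in technique. The paper obtains $LT\cong R$ (and also $RS\cong L$) by graph/cograph calculus: essential surjectivity gives $L_{\nat}\circ L^{\nat}=\bbX$ and $R_{\nat}\circ R^{\nat}=\bbX$ (Proposition \ref{functor_graph}(2)), whence $R^{\nat}=R^{\nat}\circ L_{\nat}\circ L^{\nat}=S_{\nat}\circ L^{\nat}=T^{\nat}\circ L^{\nat}=(LT)^{\nat}$ and dually $L_{\nat}=(RS)_{\nat}$. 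You instead prove $LTd\cong Rd$ pointwise in the underlying order, getting $LTd\leq Rd$ from the counit and the reverse inequality by lifting $Rd$ to some $Lc_0$ via essential surjectivity of $L$ and transporting $Sc_0\leq d$ through $T$ and the unit -- a more elementary argument that bypasses the distributor identities. Accordingly, your essential-surjectivity step goes through $R$ (every $x\cong Rd\cong LTd$ with $Td\in\Fix(TS)$), whereas the paper goes through $RS\cong L$ and essential surjectivity of $L$ (via $Lc\cong RSc\cong L(TSc)$); the paper's extra identity $RS\cong L$ is not wasted, as it is reused to see that $L$ is a left adjoint (Corollary \ref{L_la_R_ra}). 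One small elision on your side: the final equality $\bbD(Sc_1,Sc_2)=\bbC(c_1,c_2)$ in your full-faithfulness chain needs the two middle steps of the paper's computation, namely $S_{\nat}=T^{\nat}$ (so $\bbD(Sc_1,Sc_2)=\bbC(c_1,TSc_2)$) together with $c_2\cong TSc_2$; with that line inserted the argument is complete.
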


\begin{proof}
{\bf Necessity.} It suffices to prove the case $\bbX=\Fix(TS)$. Let $L:\bbC\lra\Fix(TS)$ and $R:\bbD\lra\Fix(TS)$ be the codomain restriction of $TS:\bbC\lra\bbC$ and $T:\bbD\lra\bbC$, respectively, then $L$ and $R$ are clearly essentially surjective and satisfy
\begin{align*}
S_{\nat}&=(STS)_{\nat}&(S\dv T)\\
&=T^{\nat}\circ(TS)_{\nat}&(S\dv T\ \text{and Proposition \ref{adjoint_graph}})\\
&=\bbC(TS-,T-)&(\text{Proposition \ref{distributor_graph}})\\
&=\Fix(TS)(L-,R-)\\
&=R^{\nat}\circ L_{\nat}.&(\text{Proposition \ref{distributor_graph}})
\end{align*}

{\bf Sufficiency.} We show that the restriction $L':\Fix(TS)\lra\bbX$ of $L$ is an equivalence of $\CQ$-categories.

First, $LT\cong R$ and $RS\cong L$. Indeed, by Propositions \ref{functor_graph}(2) one has
$$R^{\nat}=R^{\nat}\circ L_{\nat}\circ L^{\nat}=S_{\nat}\circ L^{\nat}=T^{\nat}\circ L^{\nat}=(LT)^{\nat}\quad\text{and}\quad L_{\nat}=R_{\nat}\circ R^{\nat}\circ L_{\nat}=R_{\nat}\circ S_{\nat}=(RS)_{\nat}.$$
Thus the conclusion follows from \eqref{F_leq_G_graph}.

Second, $L'$ is fully faithful since for all $c,c'\in\Fix(TS)$,
\begin{align*}
\bbX(L'c,L'c')&=\bbX(Lc,LTSc')&(c'\cong TSc')\\
&=\bbX(Lc,RSc')&(LT\cong R)\\
&=S_{\nat}(c,Sc')&(S_{\nat}=R^{\nat}\circ L_{\nat}=\bbX(L-,R-))\\
&=T^{\nat}(c,Sc')&(T^{\nat}=S_{\nat})\\
&=\bbC(c,TSc')\\
&=\Fix(TS)(c,c').&(c'\cong TSc')
\end{align*}

Finally, $L'$ is essentially surjective since for any $c\in\bbC_0$, $RS\cong L$ and $S\dv T$ imply
\begin{equation} \label{L=LTS}
Lc\cong RSc\cong RSTSc\cong L(TSc)=L'(TSc).
\end{equation}
Hence the essential surjectivity of $L:\bbC\lra\bbX$ implies that of $L'$, completing the proof.
\end{proof}

From \eqref{L=LTS} in the above proof one sees that $L$, up to isomorphism, is the composition of an equivalence $L':\Fix(TS)\lra\bbX$ and a left adjoint $TS:\bbC\lra\Fix(TS)$ (see Proposition \ref{Q_closure_la}), thus $L$ itself must be a left adjoint in $\QCat$. Similarly one may deduce that $R$ is a right adjoint in $\QCat$:

\begin{cor} \label{L_la_R_ra}
The $\CQ$-functors $L$ and $R$ in Theorem \ref{general_representation} are respectively a left adjoint and a right adjoint in $\QCat$.
\end{cor}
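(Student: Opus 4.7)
The plan is to exhibit each of $L$ and $R$ as a composite whose factors are all known to be left (respectively right) adjoints in $\QCat$; since composites of left adjoints are left adjoints (and dually for right adjoints), and every equivalence is simultaneously a left and a right adjoint to any quasi-inverse, this will finish the job.

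For $L$, the display \eqref{L=LTS} in the proof of Theorem \ref{general_representation} already provides $Lc\cong L'(TSc)$ for every $c\in\bbC_0$; that is, $L\cong L'\circ\widetilde{TS}$, where $\widetilde{TS}:\bbC\lra\Fix(TS)$ is the codomain restriction of the $\CQ$-closure operator $TS$ and $L':\Fix(TS)\lra\bbX$ is the equivalence constructed in the sufficiency part of that proof. By Proposition \ref{Q_closure_la}, $\widetilde{TS}$ is left adjoint to the inclusion $\Fix(TS)\hookrightarrow\bbC$, and the equivalence $L'$ is a left adjoint; so this factorization exhibits $L$ as a composite of left adjoints.

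For $R$, I would argue dually. The adjunction $S\dv T$ makes $ST:\bbD\lra\bbD$ a $\CQ$-interior operator, so the dual of Proposition \ref{Q_closure_la} (transported through the isomorphism \eqref{QopCat_iso}) shows that the codomain restriction $\widetilde{ST}:\bbD\lra\Fix(ST)$ is right adjoint to the inclusion $\Fix(ST)\hookrightarrow\bbD$. Moreover, the restrictions of $S$ and $T$ give an equivalence $\Fix(ST)\simeq\Fix(TS)$, as noted before Theorem \ref{general_representation}. Combining the isomorphism $R\cong LT$ obtained in the proof with $L\cong L'\circ\widetilde{TS}$ and the adjunction identity $TSTd\cong Td$ for every $d\in\bbD_0$, one sees that $R$ is isomorphic to the composite $L'\circ T|_{\Fix(ST)}\circ\widetilde{ST}:\bbD\lra\bbX$, a composite of three right adjoints.

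The only point requiring care, and the only potential obstacle, is the verification that these factorizations really hold up to isomorphism of $\CQ$-functors; but all the ingredients --- the equivalence $L'$, the isomorphism $R\cong LT$, the $\CQ$-closure and $\CQ$-interior operator structures on $TS$ and $ST$, and the adjunction identities $STS\cong S$ and $TST\cong T$ --- are already available from the proof of Theorem \ref{general_representation}, so no genuinely new computation is required.
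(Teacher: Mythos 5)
Your proof is correct and is essentially the paper's own argument: the paper likewise reads off from \eqref{L=LTS} that $L$ is, up to isomorphism, the composite of the left adjoint codomain restriction $TS:\bbC\lra\Fix(TS)$ (Proposition \ref{Q_closure_la}) with the equivalence $L'$, and then treats $R$ ``similarly'' by the dual argument. Your explicit dual factorization $R\cong L'\circ T|_{\Fix(ST)}\circ ST$ through the $\CQ$-interior operator $ST$ merely spells out that ``similarly'' and is a valid way to do so.
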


The condition given in Theorem \ref{general_representation} can be weakened as in the following corollary since the $\CQ$-functoriality of $L$ and $R$ is self-contained:

\begin{cor} \label{general_representation_type_preserving}
Let $S\dv T:\bbD\lra\bbC$ be an adjunction in $\QCat$. A $\CQ$-category $\bbX$ is equivalent to $\Fix(TS)$ if, and only if, there exist essentially surjective type-preserving maps $L:\bbC_0\lra\bbX_0$ and $R:\bbD_0\lra\bbX_0$ with $S_{\nat}=\bbX(L-,R-)$.
\end{cor}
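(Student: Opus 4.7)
The plan is to reduce the statement to Theorem \ref{general_representation}. The necessity direction is immediate: if $\bbX\simeq\Fix(TS)$, Theorem \ref{general_representation} supplies essentially surjective $\CQ$-functors $L:\bbC\lra\bbX$ and $R:\bbD\lra\bbX$ with $S_\nat=R^\nat\circ L_\nat$; taking $\phi$ to be the identity distributor $\bbX$ in Proposition \ref{distributor_graph} gives $R^\nat\circ L_\nat=\bbX(L-,R-)$, and since every $\CQ$-functor is type-preserving on objects, the required data is on hand.

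For sufficiency, the central task is to show that any type-preserving maps $L$ and $R$ satisfying the hypothesis are automatically $\CQ$-functors; once this is established, Proposition \ref{distributor_graph} again rewrites $S_\nat=\bbX(L-,R-)$ as $S_\nat=R^\nat\circ L_\nat$, and Theorem \ref{general_representation} delivers $\bbX\simeq\Fix(TS)$. To verify that $L$ is a $\CQ$-functor, I would fix $c,c'\in\bbC_0$ and use the essential surjectivity of $R$ to pick $d_0\in\bbD_0$ with $Rd_0\cong Lc'$ in $\bbX$. The hypothesis then rewrites $\bbX(Lc,Lc')=\bbX(Lc,Rd_0)=S_\nat(c,d_0)=\bbD(Sc,d_0)$, while the inequality $Lc'\leq Rd_0$ (half of the given isomorphism) translates to $1_{|c'|}\leq\bbD(Sc',d_0)$, i.e., $Sc'\leq d_0$ in the underlying order of $\bbD$. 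Composing in $\bbD$ then yields $\bbD(Sc,Sc')\leq\bbD(Sc,d_0)$, and the $\CQ$-functoriality of $S$ gives $\bbC(c,c')\leq\bbD(Sc,Sc')$; chaining these produces $\bbC(c,c')\leq\bbX(Lc,Lc')$, as required. The argument for $R$ is completely dual: use essential surjectivity of $L$ to produce $c_0\in\bbC_0$ with $Lc_0\cong Rd$, then run the symmetric computation.

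The main obstacle I anticipate is bookkeeping rather than conceptual: the hypothesis only controls hom-arrows of the shape $\bbX(L-,R-)$, so the expressions $\bbX(L-,L-)$ and $\bbX(R-,R-)$ which appear in the $\CQ$-functoriality checks must first be converted into the permitted form via an auxiliary object supplied by essential surjectivity. This conversion inevitably introduces a one-sided comparison (such as $Sc'\leq d_0$), so one must verify that the inequalities flow in the right direction and that only the ``$1\leq\bbX(L-,R-)$'' half of each auxiliary isomorphism is invoked, never the unavailable ``$1\leq\bbX(R-,L-)$'' half for which the hypothesis provides no formula.
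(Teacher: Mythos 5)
Your proposal is correct and follows essentially the same route as the paper: reduce to Theorem \ref{general_representation} via Proposition \ref{distributor_graph}, with the only real work being the self-contained check that $L$ and $R$ are $\CQ$-functors, done by producing an auxiliary object from essential surjectivity and converting the mixed hom $\bbX(L-,R-)$ through the hypothesis. The sole (harmless) difference is that your functoriality computation lands in $\bbD$ via $S_{\nat}(c,d)=\bbD(Sc,d)$ and uses the $\CQ$-functoriality of $S$, whereas the paper lands in $\bbC$ via $S_{\nat}=T^{\nat}$ and needs only composition in $\bbC$; both chains of inequalities are valid.
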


\begin{proof}
For all $c,c'\in\bbC_0$, let $b\in\bbD_0$ with $Rb\cong Lc'$ and one has
\begin{align*}
\bbC(c,c')&\leq\bbX(Lc',Lc')\circ\bbC(c,c')\\
&=\bbX(Lc',Rb)\circ\bbC(c,c')&(Rb\cong Lc')\\
&=\bbC(c',Tb)\circ\bbC(c,c')&(T^{\nat}=S_{\nat}=\bbX(L-,R-))\\
&\leq\bbC(c,Tb)\\
&=\bbX(Lc,Rb)&(T^{\nat}=S_{\nat}=\bbX(L-,R-))\\
&=\bbX(Lc,Lc'),&(Rb\cong Lc')
\end{align*}
showing that $L$ is a $\CQ$-functor, and the $\CQ$-functoriality of $R$ can be proved similarly.
\end{proof}

It is readily seen that Corollary \ref{general_representation_type_preserving} reduces to Theorem \ref{general_representation_poset} when $\CQ={\bf 2}$. However, in general the $\CQ$-categories $\bbC$ and $\bbD$ may be too ``large'' to compute whether a $\CQ$-category $\bbX$ is equivalent to $\Fix(TS)$, and one would wish to find $\CQ$-categories with smaller size than $\bbC$ and $\bbD$ which are able to generate the required $\CQ$-functors $L:\bbC\lra\bbX$ and $R:\bbD\lra\bbX$. A natural way is through dense and codense $\CQ$-functors introduced in the next section.

\section{Weighted (co)limits, Kan extensions and (co)dense $\CQ$-functors} \label{Colimits_Kan_codense}

\subsection{Weighted (co)limits in $\CQ$-categories}

For each $q\in\CQ_0$, Let $\{q\}$ denote the discrete $\CQ$-category with only one object $q$ such that $|q|=q$ and $\{q\}(q,q)=1_q$. A \emph{presheaf} with type $q$ on a $\CQ$-category $\bbA$ is a $\CQ$-distributor $\mu:\bbA\oto\{q\}$. Presheaves on $\bbA$ constitute a $\CQ$-category $\PA$ with
$$\PA(\mu,\mu')=\mu'\lda\mu$$
for all $\mu,\mu'\in\PA$. Dually, the $\CQ$-category $\PdA$ of \emph{copresheaves} on $\bbA$ consists of $\CQ$-distributors $\lam:\{q\}\oto\bbA$ as objects with type $q$ and $$\PdA(\lam,\lam')=\lam'\rda\lam$$
for all $\lam,\lam'\in\PdA$. It is easy to see $\PdA\cong(\PA^{\op})^{\op}$ as remarked in \ref{Qcat_dual}.

\begin{rem} \label{PdA_QDist_order}
The underlying order of $\PdA$ is precisely the \emph{reverse} local order in $\QDist$; that is, $\mu\leq\lam$ in the underlying order of $\PdA$ if and only if $\lam\leq\mu$ in $\QDist$. In order to avoid confusion, we make the convention that the symbol $\leq$ between $\CQ$-distributors always refer to the local order in $\QDist$. Moreover, while $\bv$ and $\bw$ are used as generic symbols for joins and meets, we write $\bigsqcup$ and $\bigsqcap$ instead for the underlying joins and meets in $\PdA$ to eliminate ambiguity.
\end{rem}

Given a $\CQ$-category $\bbA$, the \emph{Yoneda embedding} $\sY_{\bbA}:\bbA\lra\PA$ sends each $x\in\bbA_0$ to $\bbA(-,x)\in\PA$, and the \emph{co-Yoneda embedding} $\sYd_{\bbA}:\bbA\lra\PdA$ sends each $x\in\bbA_0$ to $\bbA(x,-)\in\PdA$. Both $\sY_{\bbA}$ and $\sYd_{\bbA}$ are fully faithful $\CQ$-functors as the following Yoneda lemma implies:

\begin{lem}[Yoneda] (See \cite{Stubbe2005}.) \label{Yoneda_lemma}
Let $\bbA$ be a $\CQ$-category and $\mu\in\PA$, $\lam\in\PdA$. Then
$$\mu=\PA(\sY_{\bbA}-,\mu)=(\sY_{\bbA})_{\nat}(-,\mu),\quad\lam=\PdA(\lam,\sYd_{\bbA}-)=(\sYd_{\bbA})^{\nat}(\lam,-).$$
\end{lem}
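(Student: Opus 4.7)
The plan is to unfold each of the four claimed identities until it becomes a concrete statement about $\CQ$-arrows, and then verify it by the same two-line adjunction argument familiar from the enriched Yoneda lemma.

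For the first identity, I would start by observing that by the definition of the graph of a $\CQ$-functor we have $(\sY_{\bbA})_{\nat}(x,\mu)=\PA(\sY_{\bbA}x,\mu)$, so the two halves $\mu=\PA(\sY_{\bbA}-,\mu)$ and $\mu=(\sY_{\bbA})_{\nat}(-,\mu)$ are literally the same statement. Unwinding the hom of $\PA$ together with the formula for $\lda$ in $\QDist$, what must be shown is that for every $x\in\bbA_0$, with $q=|\mu|$,
\[\mu(x,q)=(\mu\lda\bbA(-,x))(|x|,q)=\bw_{z\in\bbA_0}\mu(z,q)\lda\bbA(z,x)\]
as an arrow in $\CQ(|x|,q)$.

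For the inequality $\mu(x,q)\leq\bw_{z}\mu(z,q)\lda\bbA(z,x)$, I would invoke the defining axiom of the $\CQ$-distributor $\mu:\bbA\oto\{q\}$, namely $\mu(x,q)\circ\bbA(z,x)\leq\mu(z,q)$, and transpose it across the adjunction $-\circ\bbA(z,x)\dv -\lda\bbA(z,x)$ to obtain $\mu(x,q)\leq\mu(z,q)\lda\bbA(z,x)$ for each $z$; then take the meet over $z$. For the reverse inequality, I would single out the term $z=x$ in the meet: using $1_{|x|}\leq\bbA(x,x)$ and antitonicity of $\lda$ in its second argument, $\mu(x,q)\lda\bbA(x,x)\leq\mu(x,q)\lda 1_{|x|}=\mu(x,q)$, so the meet is bounded above by $\mu(x,q)$ as required.

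The second identity $\lam=\PdA(\lam,\sYd_{\bbA}-)=(\sYd_{\bbA})^{\nat}(\lam,-)$ can either be proved by the mirror argument (using $\rda$ in place of $\lda$ and the distributor axiom $\bbA(x,z)\circ\lam(q,x)\leq\lam(q,z)$), or obtained for free by transferring the first identity through the isomorphism of quantaloids $(-)^{\op}\colon\QDist\cong(\CQ^{\op}\text{-}\Dist)^{\op}$ from Remark~\ref{Qcat_dual}, under which copresheaves on $\bbA$ correspond to presheaves on $\bbA^{\op}$ and $\sYd_{\bbA}$ corresponds to $\sY_{\bbA^{\op}}$. I do not expect any real obstacle here: the content is entirely carried by the adjunctions $-\circ u\dv -\lda u$ and $v\circ-\dv v\rda-$ together with the unit law in $\bbA$; the only thing to handle carefully is the bookkeeping of $\CQ$-types, since the ``single-point'' categories $\{q\}$ and $\{|x|\}$ force all computations to take place in the appropriate hom $\CQ(|x|,q)$.
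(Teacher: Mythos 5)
Your proposal is correct: the two forms of each identity coincide by the very definition of (co)graph, and your unfolding of $\PA(\sY_{\bbA}x,\mu)=\mu\lda\bbA(-,x)=\bw_{z}\mu(z)\lda\bbA(z,x)$ followed by the transposition of the distributor axiom (for $\leq$) and the $z=x$ term with $1_{|x|}\leq\bbA(x,x)$ (for $\geq$) is exactly the standard enriched Yoneda computation, with the dual case handled correctly either by the mirror argument or via $(-)^{\op}$. The paper itself states the lemma without proof, citing Stubbe, and your argument is precisely the proof found there, so there is nothing to add.
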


Given a $\CQ$-functor $F:\bbX\lra\bbA$, the \emph{colimit} of $F$ weighted by a presheaf $\mu\in\PX$ is an object $\colim_{\mu}F\in\bbA_0$ of type $|\mu|$ such that
\begin{equation} \label{colim_def}
\bbA({\colim}_{\mu}F,-)=F_{\nat}\lda\mu.
\end{equation}
In particular, $\sup_{\bbA}\mu:={\colim}_{\mu}1_{\bbA}$, when it exists, is called the \emph{supremum} of $\mu\in\PA$, which satisfies
\begin{equation} \label{sup_def}
\bbA({\sup}_{\bbA}\mu,-)=\bbA\lda\mu.
\end{equation}
Dually, the \emph{limit} of $F:\bbX\lra\bbA$ weighted by a copresheaf $\lam\in\PdX$ is defined as $\lim_{\lam}F=\colim_{\lam^{\op}}F^{\op}$; that is, an object $\lim_{\lam}F\in\bbA_0$ of type $|\lam|$ such that
\begin{equation} \label{lim_def}
\bbA(-,{\lim}_{\lam}F)=\lam\rda F^{\nat}.
\end{equation}
The \emph{infimum} of $\lam\in\PdA$, when it exists, is given by $\inf_{\bbA}\lam:=\lim_{\lam}1_{\bbA}$.

\begin{prop} \label{colim_as_sup} (See \cite{Stubbe2005}.)
For all $\CQ$-functors $F:\bbX\lra\bbA$ and $\mu\in\PX$, $\lam\in\PdX$,
$${\colim}_{\mu}F={\sup}_{\bbA}F^{\ra}\mu\quad\text{and}\quad{\lim}_{\lam}F={\inf}_{\bbA}F^{\nra}\lam,$$
where the $\CQ$-functors $F^{\ra}:\PX\lra\PA$ and $F^{\nra}:\PdX\lra\PdA$ are given by
$$F^{\ra}\mu=\mu\circ F^{\nat}\quad\text{and}\quad F^{\nra}\lam=F_{\nat}\circ\lam.$$
\end{prop}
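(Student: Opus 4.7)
The plan is to verify the two stated identities by showing that the corresponding objects represent the same copresheaf on $\bbA$, and then invoke the Yoneda lemma to conclude they are isomorphic (and, in particular, that one exists if and only if the other does). The two halves are dual: under the 2-isomorphism $(-)^{\op}:\QCat\cong(\CQ^{\op}\text{-}\Cat)^{\co}$ of Remark \ref{Qcat_dual}, presheaves on $\bbX$ correspond to copresheaves on $\bbX^{\op}$, $F^{\ra}$ corresponds to $F^{\nra}$, and $\colim/\sup$ correspond to $\lim/\inf$. I would therefore prove the first identity and obtain the second by dualization.

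For the equality $\colim_{\mu}F=\sup_{\bbA}F^{\ra}\mu$, the defining equation \eqref{colim_def} gives $\bbA(\colim_{\mu}F,-)=F_{\nat}\lda\mu$, while \eqref{sup_def} applied to the presheaf $F^{\ra}\mu=\mu\circ F^{\nat}\in\PA$ gives
$$\bbA({\sup}_{\bbA}F^{\ra}\mu,-)=\bbA\lda F^{\ra}\mu=\bbA\lda(\mu\circ F^{\nat}).$$
Hence it is enough to establish the distributor identity $F_{\nat}\lda\mu=\bbA\lda(\mu\circ F^{\nat})$. This is immediate from Proposition \ref{adjoint_arrow_calculation}(2): specializing $(\psi\circ F_{\nat})\lda\phi=\psi\lda(\phi\circ F^{\nat})$ at $\psi=\bbA$ and $\phi=\mu$, and using that $\bbA\circ F_{\nat}=F_{\nat}$ because $\bbA$ is the identity $\CQ$-distributor on $\bbA$, one obtains precisely the required equality. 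The Yoneda lemma (Lemma \ref{Yoneda_lemma}) then forces $\colim_{\mu}F\cong\sup_{\bbA}F^{\ra}\mu$ whenever either side exists.

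No real obstacle is anticipated, since the argument is a purely formal manipulation of adjoint arrows in the quantaloid $\QDist$. The only point that needs attention is the bookkeeping of types and domains, which I would record at the outset: with $q=|\mu|$ one has $\mu:\bbX\oto\{q\}$, $F_{\nat}:\bbX\oto\bbA$, $F^{\nat}:\bbA\oto\bbX$, so that both $F_{\nat}\lda\mu$ and $\bbA\lda(\mu\circ F^{\nat})$ are $\CQ$-distributors $\{q\}\oto\bbA$, making the application of Proposition \ref{adjoint_arrow_calculation}(2) legitimate. Dualizing through \eqref{QopCat_iso}--\eqref{QopDist_iso} and the fact that $\PdA\cong(\sP(\bbA^{\op}))^{\op}$ then yields $\lim_{\lam}F=\inf_{\bbA}F^{\nra}\lam$.
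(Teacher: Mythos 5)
Your argument is correct. The paper itself gives no proof of Proposition \ref{colim_as_sup} (it is cited from Stubbe's work), but your verification is exactly the standard one: the type bookkeeping is right ($F_{\nat}\lda\mu$ and $\bbA\lda(\mu\circ F^{\nat})$ are both distributors $\{q\}\oto\bbA$), the specialization of Proposition \ref{adjoint_arrow_calculation}(2) at $\psi=\bbA$, $\phi=\mu$ together with $\bbA\circ F_{\nat}=F_{\nat}$ gives $F_{\nat}\lda\mu=\bbA\lda F^{\ra}\mu$, and comparing \eqref{colim_def} with \eqref{sup_def} then identifies the two objects up to isomorphism (with existence of either side forcing the other), by fully faithfulness of the (co-)Yoneda embedding. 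Your dualization via \eqref{QopCat_iso} is legitimate since the paper defines $\lim_{\lam}F=\colim_{\lam^{\op}}F^{\op}$; alternatively, the second identity follows just as directly from the other half of Proposition \ref{adjoint_arrow_calculation}(2), namely $(F_{\nat}\circ\lam)\rda\bbA=\lam\rda(F^{\nat}\circ\bbA)=\lam\rda F^{\nat}$, compared with \eqref{lim_def}.
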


A $\CQ$-category $\bbA$ is \emph{complete} if it satisfies one of the equivalent conditions in the following theorem. In particular, $\PA$ and $\PdA$ are both separated complete $\CQ$-categories.

\begin{thm} (See \cite{Stubbe2005}.) \label{QCat_complete_equivalent}
For any $\CQ$-category $\bbA$, the following conditions are equivalent:
\begin{enumerate}[\rm (i)]
\item $\bbA$ admits all weighted colimits.
\item $\bbA$ admits all weighted limits.
\item Every $\mu\in\PA$ has a supremum.
\item Every $\lam\in\PdA$ has an infimum.
\item $\sY_{\bbA}$ has a left adjoint $\sup_{\bbA}:\PA\lra\bbA$ in $\QCat$.
\item $\sYd_{\bbA}$ has a right adjoint $\inf_{\bbA}:\PdA\lra\bbA$ in $\QCat$.
\end{enumerate}
\end{thm}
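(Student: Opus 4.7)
The plan is to establish the six conditions as equivalent along the two parallel chains (i)$\Leftrightarrow$(iii)$\Leftrightarrow$(v) and (ii)$\Leftrightarrow$(iv)$\Leftrightarrow$(vi), together with a cross-link (iii)$\Leftrightarrow$(iv). The first chain is largely formal: (i)$\Leftrightarrow$(iii) is immediate in one direction, since $\sup_{\bbA}\mu = \colim_{\mu}1_{\bbA}$, and follows from Proposition \ref{colim_as_sup} via $\colim_{\mu}F = \sup_{\bbA}F^{\ra}\mu$ in the other; (iii)$\Leftrightarrow$(v) will follow by observing that the defining equation (\ref{sup_def}) unravels to $\bbA(\sup_{\bbA}\mu, x) = \bbA(-,x) \lda \mu = \PA(\mu, \sY_{\bbA}x)$, which is exactly the hom-set adjunction condition $\sup_{\bbA} \dv \sY_{\bbA}$, with $\CQ$-functoriality of $\sup_{\bbA}$ then automatic as a left adjoint. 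The chain (ii)$\Leftrightarrow$(iv)$\Leftrightarrow$(vi) will then result from running the same arguments on $\bbA^{\op}$ viewed as a $\CQ^{\op}$-category via Remark \ref{Qcat_dual}.

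The substantial step is the cross-link (iii)$\Rightarrow$(iv), with (iv)$\Rightarrow$(iii) obtained dually. My plan here is to take $\lam \in \PdA$, form its Isbell conjugate $\mu := \lam \rda \bbA \in \PA$, and argue that $i := \sup_{\bbA}\mu$ satisfies the defining equation $\bbA(-, i) = \lam \rda \bbA$ of $\inf_{\bbA}\lam$ from (\ref{lim_def}). The inequality $\mu(z) \leq \bbA(z, i)$ will drop out of (\ref{sup_def}) evaluated at $y = i$, using the identity axiom $1_{|i|} \leq \bbA(i, i)$ and the adjunction $-\circ\mu(z) \dv -\lda\mu(z)$ in $\CQ$. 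The reverse inequality $\bbA(z, i) \leq \mu(z)$ will be assembled in three moves: first, $\mu = \lam \rda \bbA$ yields $\lam \circ \mu \leq \bbA$ as the counit of $\lam \circ - \dv \lam \rda -$; second, (\ref{sup_def}) promotes this to $\lam(y) \leq \bbA(i, y)$ for every $y$; and third, the triangle inequality $\bbA(i, y) \circ \bbA(z, i) \leq \bbA(z, y)$ in $\bbA$ combines with the adjunction $\lam(y) \circ - \dv \lam(y) \rda -$ to produce $\bbA(z, i) \leq \lam(y) \rda \bbA(z, y)$, after which taking the meet over $y$ finishes the job. The dual construction $\sup_{\bbA}\mu := \inf_{\bbA}(\bbA \lda \mu)$ then supplies (iv)$\Rightarrow$(iii).

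The main obstacle to anticipate is the inherent asymmetry within the cross-link: the defining equation (\ref{sup_def}) directly describes $\bbA(\sup_{\bbA}\mu, -)$ but says nothing a priori about the dually placed $\bbA(-, \sup_{\bbA}\mu)$, so the latter has to be recovered indirectly by combining a unit-like inequality $\mu \leq \sY_{\bbA}\sup_{\bbA}\mu$ with the Isbell counit $\lam \circ \mu \leq \bbA$. Identifying the conjugate $\lam \rda \bbA$ as the correct bridging presheaf is the conceptual content of the argument; the remaining verifications reduce to the implication calculus of Proposition \ref{adjoint_arrow_calculation} in $\QDist$ and the category axioms of $\bbA$.
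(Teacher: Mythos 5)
The paper does not prove Theorem \ref{QCat_complete_equivalent} at all: it is quoted from Stubbe's work \cite{Stubbe2005}, so there is no in-paper argument to compare yours against. Judged on its own, your proposal is a correct and complete route. The two chains (i)$\Leftrightarrow$(iii)$\Leftrightarrow$(v) and, dually via Remark \ref{Qcat_dual}, (ii)$\Leftrightarrow$(iv)$\Leftrightarrow$(vi) are indeed formal: $\bbA\lda(\mu\circ F^{\nat})=(\bbA\lda F^{\nat})\lda\mu=F_{\nat}\lda\mu$ by Proposition \ref{adjoint_arrow_calculation}(1) turns suprema into arbitrary weighted colimits, and $\bbA({\sup}_{\bbA}\mu,x)=\bbA(-,x)\lda\mu=\PA(\mu,\sY_{\bbA}x)$ is exactly the adjunction condition ${\sup}_{\bbA}\dv\sY_{\bbA}$; your remark that $\CQ$-functoriality of $\sup_{\bbA}$ is automatic is right, though it deserves the one-line check $\PA(\mu,\mu')\leq\PA(\mu,\sY_{\bbA}{\sup}_{\bbA}\mu')=\bbA({\sup}_{\bbA}\mu,{\sup}_{\bbA}\mu')$, using the unit $1_{|\mu'|}\leq\PA(\mu',\sY_{\bbA}{\sup}_{\bbA}\mu')$. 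Your cross-link (iii)$\Rightarrow$(iv) via the Isbell conjugate $\mu:=\lam\rda\bbA$ and $i:={\sup}_{\bbA}\mu$ checks out in detail: $1_{|i|}\leq\bbA(i,i)=\bw_z\bbA(z,i)\lda\mu(z)$ gives $\mu\leq\bbA(-,i)$, while $\lam\circ(\lam\rda\bbA)\leq\bbA$ together with \eqref{sup_def} gives $\lam(y)\leq\bbA(i,y)$, and then $\lam(y)\circ\bbA(z,i)\leq\bbA(i,y)\circ\bbA(z,i)\leq\bbA(z,y)$ yields $\bbA(-,i)\leq\lam\rda\bbA$, so $i$ satisfies \eqref{lim_def} with $F=1_{\bbA}$; the converse is dual. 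This is in fact the standard argument one finds in the cited source (infima computed as suprema of conjugate presheaves), so while your write-up supplies a proof the paper merely delegates, it is the expected one rather than a genuinely different method.
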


It is well known that fixed points of a $\CQ$-closure operator or $\CQ$-interior operator on a complete $\CQ$-category constitute a complete $\CQ$-category:

\begin{prop} \label{AF_complete} (See \cite{Shen2013a}.)
Let $F:\bbA\lra\bbA$ be a $\CQ$-closure operator (resp. $\CQ$-interior operator) on a complete $\CQ$-category $\bbA$. Then $\Fix(F)$ is also a complete $\CQ$-category.
\end{prop}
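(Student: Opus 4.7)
\medskip

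\noindent\textbf{Proof proposal.} The plan is to verify condition (iii) of Theorem \ref{QCat_complete_equivalent}: that every presheaf on $\Fix(F)$ has a supremum. The $\CQ$-interior operator case then follows at once by applying the $\CQ^{\op}$-duality of Remark \ref{Qcat_dual}, since a $\CQ$-interior operator on $\bbA$ is precisely a $\CQ^{\op}$-closure operator on $\bbA^{\op}$, and $\bbA$ is complete iff $\bbA^{\op}$ is.

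\smallskip

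The idea is to lift suprema from $\bbA$ to $\Fix(F)$ via the left adjoint $F:\bbA\lra\Fix(F)$ supplied by Proposition \ref{Q_closure_la}. Write $\iota:\Fix(F)\ \to/^(->/\bbA$ for the (fully faithful) inclusion, so that $F\dv\iota$ in $\QCat$. Given $\mu\in\sP(\Fix(F))$, form the presheaf $\iota^{\ra}\mu=\mu\circ\iota^{\nat}\in\PA$ via Proposition \ref{colim_as_sup}, take its supremum $s:=\sup_{\bbA}\iota^{\ra}\mu\in\bbA_0$ (which exists by completeness of $\bbA$), and then set
\[
{\sup}_{\Fix(F)}\mu:=Fs\in\Fix(F)_0.
\]
The type is correct since $F$ is a $\CQ$-functor and $|s|=|\mu|$.

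\smallskip

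The key calculation is then to check the defining identity \eqref{sup_def} for $\Fix(F)$, namely
\[
\Fix(F)(Fs,y)=(\Fix(F)\lda\mu)(y)\quad\text{for all }y\in\Fix(F)_0.
\]
The left-hand side unfolds using the adjunction $F\dv\iota$ (equivalently $F_{\nat}=\iota^{\nat}$, by Proposition \ref{adjoint_graph}) and the defining formula for $s$:
\[
\Fix(F)(Fs,y)=\bbA(s,\iota y)=(\bbA\lda\iota^{\ra}\mu)(\iota y).
\]
The right-hand side unfolds, using that $\iota$ is fully faithful (so that hom-arrows in $\Fix(F)$ coincide with those in $\bbA$ on objects in $\Fix(F)_0$), into the same expression via the standard quantaloidal manipulations of Proposition \ref{adjoint_arrow_calculation}; concretely, one rewrites $\bbA\lda(\mu\circ\iota^{\nat})$ through part (2) and the Yoneda lemma \ref{Yoneda_lemma} applied in $\Fix(F)$ to reach $\Fix(F)\lda\mu$ evaluated at $y$.

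\smallskip

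The main obstacle I anticipate is precisely this last identification: one needs to confirm that pushing $\mu$ forward along $\iota$, taking the $\bbA$-supremum, and then projecting back by $F$ produces the same weighted-colimit behaviour as a supremum computed intrinsically in $\Fix(F)$. This is essentially the statement that the left adjoint $F$ preserves the particular colimit $s=\sup_{\bbA}\iota^{\ra}\mu$, combined with the identity $F\iota\cong 1_{\Fix(F)}$ (which holds because $\iota$ is fully faithful and $F$ fixes objects of $\Fix(F)$ up to isomorphism). Once this bookkeeping with $\iota^{\nat}$, $\iota_{\nat}$ and the adjunction $F\dv\iota$ is done, Theorem \ref{QCat_complete_equivalent} immediately yields completeness of $\Fix(F)$, and dualisation disposes of the $\CQ$-interior case.
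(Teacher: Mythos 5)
Your proposal is correct, and since the paper itself gives no proof of Proposition \ref{AF_complete} (it only cites \cite{Shen2013a}), what you write is exactly the standard argument that the reference uses: lift suprema along the reflection $F\dv\iota$ from Proposition \ref{Q_closure_la} and set $\sup_{\Fix(F)}\mu:=F\big(\sup_{\bbA}(\mu\circ\iota^{\nat})\big)$, then dualize for interior operators. The one step you leave vague does close cleanly: writing the corestriction again as $F$, one has $\Fix(F)(Fs,-)=F_{\nat}(s,-)=\iota^{\nat}(s,-)=\big(\iota^{\nat}\lda(\mu\circ\iota^{\nat})\big)$ by \eqref{sup_def} and Proposition \ref{adjoint_graph}, and since $F_{\nat}=\iota^{\nat}$ this equals $(\Fix(F)\circ F_{\nat})\lda(\mu\circ F_{\nat})=\Fix(F)\lda\big(\mu\circ F_{\nat}\circ F^{\nat}\big)=\Fix(F)\lda\mu$ by Proposition \ref{adjoint_arrow_calculation}(2) and the essential surjectivity of the corestriction $F:\bbA\lra\Fix(F)$ (Proposition \ref{functor_graph}(2)); so the decisive ingredient is indeed $F\iota\cong 1_{\Fix(F)}$ (equivalently, essential surjectivity of the corestriction), which you correctly single out, rather than full fidelity of $\iota$ alone. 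Equivalently, your alternative packaging via cocontinuity of the left adjoint $F$ and $\colim_{\mu}(F\iota)\cong\colim_{\mu}1_{\Fix(F)}$ is also a complete proof.
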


\subsection{Kan extensions of $\CQ$-functors} \label{Kan_extensions}

Given $\CQ$-functors $K:\bbA\lra\bbB$ and $F:\bbA\lra\bbC$, the (pointwise) \emph{left Kan extension} \cite{Stubbe2005} of $F$ along $K$, when it exists, is given by
\begin{equation} \label{left_Kan_def}
\Lan_K F:\bbB\lra\bbC,\quad(\Lan_K F)b={\colim}_{K_{\nat}(-,b)}F.
\end{equation}

\begin{rem}
The (non-pointwise) left Kan extension of $F:\bbA\lra\bbC$ along $K:\bbA\lra\bbB$, when it exists, is a $\CQ$-functor $\Lan_K F:\bbB\lra\bbC$ with
\begin{equation} \label{left_Kan_ineq}
\Lan_K F\leq S\iff F\leq SK
\end{equation}
for all $\CQ$-functors $S:\bbB\lra\bbC$. It is easy to see that pointwise left Kan extensions defined by \eqref{left_Kan_def} always satisfy \eqref{left_Kan_ineq}, but not vice versa. All Kan extensions considered in this paper are pointwise.
\end{rem}

Dually, the (pointwise) \emph{right Kan extension} of $F$ along $K$ is given by
\begin{equation} \label{right_Kan_def}
\Ran_K F=(\Lan_{K^{\op}}F^{\op})^{\op}:\bbB\lra\bbC,\quad(\Ran_K F)b={\lim}_{K^{\nat}(b,-)}F.
\end{equation}

From Equations \eqref{colim_def} and \eqref{lim_def} one soon has the following characterization of Kan extensions:

\begin{prop} \label{Kan_graph}
$G:\bbB\lra\bbC$ is the left (resp. right) Kan extension of $F:\bbA\lra\bbC$ along $K:\bbA\lra\bbB$ if, and only if,
$$G_{\nat}=F_{\nat}\lda K_{\nat}\quad(\text{resp.}\ G^{\nat}=K^{\nat}\rda F^{\nat}).$$
\end{prop}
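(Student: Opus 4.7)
The plan is to unpack the two characterizations given earlier and observe that they essentially say the same thing. Given $G:\bbB\lra\bbC$ and $b\in\bbB_0$, the object $Gb\in\bbC_0$ determines, and is determined up to isomorphism by, the family of hom-arrows $\{\bbC(Gb,c)\}_{c\in\bbC_0}=G_{\nat}(b,-)$. So I would work one object of $\bbB$ at a time and match the pointwise condition defining $\Lan_K F$ against the pointwise content of the proposed distributor identity.

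First, for the left Kan case, I would fix $b\in\bbB_0$ and consider the presheaf $\mu_b:=K_{\nat}(-,b)\in\sP\bbB\cdots$ actually $K_{\nat}(-,b)\in\PA$ with type $|b|$. By definition \eqref{left_Kan_def}, $(\Lan_K F)b=\colim_{\mu_b}F$, and by the universal property \eqref{colim_def} of weighted colimits this object (when it exists) is characterized by
$$\bbC\bigl((\Lan_K F)b,-\bigr)=F_{\nat}\lda\mu_b=F_{\nat}\lda K_{\nat}(-,b).$$
Now for a prospective $G$, the left-hand side is $G_{\nat}(b,-)$, and the right-hand side is by definition the column at $b$ of the $\CQ$-distributor $F_{\nat}\lda K_{\nat}:\bbB\oto\bbC$ (using the formula for $\lda$ in $\QDist$ given in Section~2). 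So $G_{\nat}=F_{\nat}\lda K_{\nat}$ at $(b,c)$ for all $c$ is precisely the statement that $Gb$ realizes the universal property of $\colim_{\mu_b}F$; since separated hom-data determine the object up to isomorphism, $G=\Lan_K F$ pointwise, hence as a $\CQ$-functor.

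For the right Kan case I would invoke the duality \eqref{QopCat_iso}--\eqref{QopDist_iso} of Remark~\ref{Qcat_dual}: $\Ran_K F$ is defined in \eqref{right_Kan_def} as $(\Lan_{K^{\op}}F^{\op})^{\op}$, and the graph/cograph constructions swap under dualization ($F^{\nat}$ corresponds to $(F^{\op})_{\nat}$ and $\lda$ becomes $\rda$). Feeding the left-adjoint characterization just established through this dictionary yields $G^{\nat}=K^{\nat}\rda F^{\nat}$. Alternatively, one can give a direct argument parallel to the above using \eqref{lim_def} in place of \eqref{colim_def}.

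I do not anticipate a real obstacle: the statement is essentially a reformulation of the definitions, the only substantive observation being that the weight $K_{\nat}(-,b)$ used in the pointwise definition of $\Lan_K F$ is, column by column, the $\CQ$-distributor $K_{\nat}$ itself, so the pointwise equations $\bbC(Gb,-)=F_{\nat}\lda K_{\nat}(-,b)$ bundle into the single distributor equation $G_{\nat}=F_{\nat}\lda K_{\nat}$. The mildest care is needed on the ``if'' direction, where one must note that the equation $G_{\nat}=F_{\nat}\lda K_{\nat}$ does \emph{not} a priori presuppose existence of the pointwise colimits, but in fact witnesses their existence, with $Gb$ serving as the required object.
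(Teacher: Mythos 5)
Your proposal is correct and follows exactly the route the paper intends: Proposition \ref{Kan_graph} is stated there as an immediate consequence of Equations \eqref{colim_def} and \eqref{lim_def}, i.e.\ one unwinds the pointwise definitions \eqref{left_Kan_def} and \eqref{right_Kan_def} and observes that the columns $F_{\nat}\lda K_{\nat}(-,b)$ assemble into the distributor $F_{\nat}\lda K_{\nat}$, with the dual case obtained via Remark \ref{Qcat_dual}. Your closing remark that the ``if'' direction also witnesses the existence of the pointwise (co)limits is exactly the right point of care and matches the intended reading.
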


From Proposition \ref{Kan_graph} one may derive several useful formulas regarding to Kan extensions:

\begin{prop} \phantomsection \label{Kan_formula}
\begin{enumerate}[\rm (1)]
\item For any $\CQ$-functor $F:\bbA\lra\bbB$, $F\cong\Lan_{1_{\bbA}}F\cong\Ran_{1_{\bbA}}F$.
\item For $\CQ$-functors $F:\bbA\lra\bbC$, $F':\bbA\lra\bbC'$, $G:\bbB\lra\bbC$, $G':\bbB\lra\bbC'$, $K:\bbA\lra\bbX$, $H:\bbB\lra\bbY$,
$$(\Ran_H G)^{\nat}\circ(\Lan_K F)_{\nat}=(\Ran_H G')^{\nat}\circ(\Lan_K F')_{\nat}$$
whenever $G^{\nat}\circ F_{\nat}=G'^{\nat}\circ F'_{\nat}$ and $\Lan_K F$, $\Lan_K F'$, $\Ran_H G$ and $\Ran_H G'$ exist.
\end{enumerate}
\end{prop}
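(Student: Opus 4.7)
The plan is to base both parts on the graph/cograph characterization of Kan extensions (Proposition \ref{Kan_graph}) together with the fact, recorded in \eqref{F_leq_G_graph}, that $F\cong G$ whenever $F_{\nat}=G_{\nat}$ (equivalently $F^{\nat}=G^{\nat}$). Once expressions are pushed into $\QDist$, everything reduces to the adjoint calculus of Proposition \ref{adjoint_arrow_calculation}.

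For part (1), I would first observe that $(1_{\bbA})_{\nat}$ is just the hom-distributor $\bbA$, which is the identity arrow on $\bbA$ in the quantaloid $\QDist$. Proposition \ref{Kan_graph} therefore gives
\[
(\Lan_{1_{\bbA}}F)_{\nat}\;=\;F_{\nat}\lda\bbA\;=\;F_{\nat},
\]
the last step being the unit law in $\QDist$ (since $-\lda\bbA$ is right adjoint to $-\circ\bbA=\id$). Then \eqref{F_leq_G_graph} yields $F\cong\Lan_{1_{\bbA}}F$, and the dual computation on cographs, $(\Ran_{1_{\bbA}}F)^{\nat}=\bbA\rda F^{\nat}=F^{\nat}$, gives $F\cong\Ran_{1_{\bbA}}F$.

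For part (2), I would again use Proposition \ref{Kan_graph} to rewrite each side as a pure $\CQ$-distributor expression,
\[
(\Ran_H G)^{\nat}\circ(\Lan_K F)_{\nat}\;=\;(H^{\nat}\rda G^{\nat})\circ(F_{\nat}\lda K_{\nat}),
\]
and then collapse the right-hand side by two applications of Proposition \ref{adjoint_arrow_calculation}(3). The identity $(\phi\rda\psi)\circ F_{\nat}=\phi\rda(\psi\circ F_{\nat})$, applied with $F$ taken to be $\Lan_K F$ and the distributor $(\Lan_K F)_{\nat}$ sliding across, pulls out the outer $\rda$ to give $H^{\nat}\rda\bigl(G^{\nat}\circ(F_{\nat}\lda K_{\nat})\bigr)$. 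A second application, using $F^{\nat}\circ(\psi\lda\phi)=(F^{\nat}\circ\psi)\lda\phi$ with $F=G$, pushes $G^{\nat}$ past $\lda K_{\nat}$ to yield $H^{\nat}\rda\bigl((G^{\nat}\circ F_{\nat})\lda K_{\nat}\bigr)$. The same reduction on the primed side gives $H^{\nat}\rda\bigl((G'^{\nat}\circ F'_{\nat})\lda K_{\nat}\bigr)$, and the hypothesis $G^{\nat}\circ F_{\nat}=G'^{\nat}\circ F'_{\nat}$ finishes the job.

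There is not much of an obstacle here; the only real care needed is bookkeeping, to ensure that in each application of Proposition \ref{adjoint_arrow_calculation}(3) the distributor one slides across is in fact the graph (or cograph) of some $\CQ$-functor, namely $\Lan_K F$ in the first step and $G$ in the second. Once these functors are correctly identified, both simplifications are mechanical.
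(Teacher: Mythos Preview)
Your proposal is correct and follows essentially the same route as the paper: both parts rest on Proposition \ref{Kan_graph}, and part (2) is reduced via two applications of Proposition \ref{adjoint_arrow_calculation}(3) (first sliding the graph $(\Lan_K F)_{\nat}$ past $H^{\nat}\rda{-}$, then sliding the cograph $G^{\nat}$ past ${-}\lda K_{\nat}$) to the expression $H^{\nat}\rda\bigl((G^{\nat}\circ F_{\nat})\lda K_{\nat}\bigr)$, which depends only on $G^{\nat}\circ F_{\nat}$. Your closing remark about bookkeeping is exactly the point: Proposition \ref{adjoint_arrow_calculation}(3) requires the sliding distributor to be a graph or cograph, and you have correctly identified the relevant functors.
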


\begin{proof}
(1) is trivial. For (2), note that
\begin{align*}
(\Ran_H G)^{\nat}\circ(\Lan_K F)_{\nat}&=(H^{\nat}\rda G^{\nat})\circ(\Lan_K F)_{\nat}&(\text{Proposition \ref{Kan_graph}})\\
&=H^{\nat}\rda(G^{\nat}\circ(\Lan_K F)_{\nat})&(\text{Proposition \ref{adjoint_arrow_calculation}(3)})\\
&=H^{\nat}\rda(G^{\nat}\circ(F_{\nat}\lda K_{\nat}))&(\text{Proposition \ref{Kan_graph}})\\
&=H^{\nat}\rda((G^{\nat}\circ F_{\nat})\lda K_{\nat}),&(\text{Proposition \ref{adjoint_arrow_calculation}(3)})
\end{align*}
and similarly one has $(\Ran_H G')^{\nat}\circ(\Lan_K F')_{\nat}=H^{\nat}\rda((G'^{\nat}\circ F'_{\nat})\lda K_{\nat})$. Thus $G^{\nat}\circ F_{\nat}=G'^{\nat}\circ F'_{\nat}$ implies $(\Ran_H G)^{\nat}\circ(\Lan_K F)_{\nat}=(\Ran_H G')^{\nat}\circ(\Lan_K F')_{\nat}$.
\end{proof}

The identity in Proposition \ref{Kan_formula}(2) may be translated through Proposition \ref{distributor_graph} as $\bbC(F-,G-)=\bbC'(F'-,G'-)$ implying $\bbC((\Lan_K F)-,(\Ran_H G)-)=\bbC'((\Lan_K F')-,(\Ran_H G')-)$ as the following diagram illustrates:
$$\bfig
\morphism(-1500,0)|b|<800,0>[\bbA`\bbX;K]
\morphism(-1500,0)|a|<1500,500>[\bbA`\bbC;F]
\morphism(-700,0)|b|<700,500>[\bbX`\bbC;\Lan_K F]
\morphism(-1500,0)|b|<1500,-500>[\bbA`\bbC';F']
\morphism(-700,0)|a|<700,-500>[\bbX`\bbC';\Lan_K F']
\morphism(1500,0)|b|<-800,0>[\bbB`\bbY;H]
\morphism(1500,0)|a|<-1500,500>[\bbB`\bbC;G]
\morphism(700,0)|b|<-700,500>[\bbY`\bbC;\Ran_H G]
\morphism(1500,0)|b|<-1500,-500>[\bbB`\bbC';G']
\morphism(700,0)|a|<-700,-500>[\bbY`\bbC';\Ran_H G']
\efig$$

If $\Lan_K F:\bbB\lra\bbC$ (resp. $\Ran_K F:\bbB\lra\bbC$) exists, a $\CQ$-functor $H:\bbC\lra\bbD$ is said to \emph{preserve} $\Lan_K F$ (resp. $\Ran_K F$) if $\Lan_K HF$ (resp. $\Ran_K HF$) exists and is isomorphic to $H\Lan_K F$ (resp. $H\Ran_K F$). $\Lan_K F$ (resp. $\Ran_K F$) is \emph{absolute} if it is preserved by any $\CQ$-functor with domain $\bbC$. The following characterization of adjoint $\CQ$-functors appeared in \cite{Stubbe2005} in terms of non-pointwise Kan extensions, and here we strengthen it to the pointwise version:

\begin{prop} \label{Kan_extension_adjoint}
Let $F:\bbA\lra\bbB$ be a $\CQ$-functor. The following statements are equivalent:
\begin{enumerate}[\rm (i)]
\item $F$ is a left (resp. right) adjoint in $\QCat$.
\item $\Lan_F 1_{\bbA}$ (resp. $\Ran_F 1_{\bbA}$) exists and is absolute.
\item $\Lan_F 1_{\bbA}$ (resp. $\Ran_F 1_{\bbA}$) exists and is preserved by $F$.
\end{enumerate}
In this case, $\Lan_F 1_{\bbA}:\bbB\lra\bbA$ (resp. $\Ran_F 1_{\bbA}:\bbB\lra\bbA$) is the right (resp. left) adjoint of $F$.
\end{prop}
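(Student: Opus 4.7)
The plan is to establish the left-adjoint equivalence (i)$\Leftrightarrow$(ii)$\Leftrightarrow$(iii); the right-adjoint statement then follows formally from the 2-isomorphism $(-)^{\op}:\QCat\cong(\CQ^{\op}\text{-}\Cat)^{\co}$ in Remark \ref{Qcat_dual} together with the definition \eqref{right_Kan_def} of a right Kan extension as the dual of a left one. Throughout, I would work at the level of graphs: Proposition \ref{Kan_graph} characterizes $\Lan_F 1_{\bbA}$ by the graph identity $(\Lan_F 1_{\bbA})_{\nat}=\bbA\lda F_{\nat}$, and the translation between compositions and left implications is governed by Proposition \ref{adjoint_arrow_calculation}(1).

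For (i)$\Rightarrow$(ii), starting from $F\dv G$, Proposition \ref{adjoint_graph} gives $F_{\nat}=G^{\nat}$, and Proposition \ref{adjoint_arrow_calculation}(1) applied with $\phi=\bbA$ yields $G_{\nat}=\bbA\circ G_{\nat}=\bbA\lda G^{\nat}=\bbA\lda F_{\nat}$, so that $G\cong\Lan_F 1_{\bbA}$ by Proposition \ref{Kan_graph}. For absoluteness, given any $H:\bbA\lra\bbD$, the 2-functoriality of $(-)_{\nat}$ from \eqref{graph_cograph_functorial} together with Proposition \ref{adjoint_arrow_calculation}(1) gives $(HG)_{\nat}=H_{\nat}\circ G_{\nat}=H_{\nat}\lda G^{\nat}=H_{\nat}\lda F_{\nat}$, which by Proposition \ref{Kan_graph} is precisely the graph of $\Lan_F H$; hence $H\Lan_F 1_{\bbA}=HG\cong\Lan_F H$, as required.

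The implication (ii)$\Rightarrow$(iii) is immediate since $F$ itself is a $\CQ$-functor out of $\bbA$, the codomain of $\Lan_F 1_{\bbA}$. For (iii)$\Rightarrow$(i), set $G:=\Lan_F 1_{\bbA}$; by Proposition \ref{Kan_graph} one has $G_{\nat}=\bbA\lda F_{\nat}$, whence $G_{\nat}\circ F_{\nat}\leq\bbA$, which via \eqref{F_leq_G_graph} reads $1_{\bbA}\leq GF$. The preservation hypothesis furnishes $FG\cong\Lan_F F$, and Proposition \ref{Kan_graph} then yields $(FG)_{\nat}=F_{\nat}\lda F_{\nat}\geq\bbB$ (the inequality because $\bbB\circ F_{\nat}=F_{\nat}$); a second use of \eqref{F_leq_G_graph} gives $FG\leq 1_{\bbB}$, so $F\dv G$, and the construction simultaneously identifies $\Lan_F 1_{\bbA}$ as the right adjoint of $F$, proving the final clause.

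The main subtlety is the bookkeeping around Proposition \ref{Kan_graph}: each asserted equation between $\CQ$-distributors must actually arise as the graph of a genuine $\CQ$-functor, but Proposition \ref{Kan_graph} itself supplies that functor at every step. Beyond this, the technical heart reduces to the single identity $\phi\circ F_{\nat}=\phi\lda F^{\nat}$ from Proposition \ref{adjoint_arrow_calculation}(1), which does essentially all the work of transporting the adjunction between the functorial and distributional sides.
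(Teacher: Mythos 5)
Your argument is correct and follows essentially the same route as the paper: both proofs work at the level of graphs, using Proposition \ref{Kan_graph} together with Proposition \ref{adjoint_arrow_calculation}(1) and Proposition \ref{adjoint_graph} (equivalently \eqref{F_leq_G_graph}), and your (i)$\Rightarrow$(ii) computation $G_{\nat}=\bbA\lda F_{\nat}$, $(HG)_{\nat}=H_{\nat}\lda F_{\nat}$ is the paper's verbatim. The only (harmless) variation is in (iii)$\Rightarrow$(i), where you read off the unit and counit inequalities $1_{\bbA}\leq GF$ and $FG\leq 1_{\bbB}$ directly from $G_{\nat}\circ F_{\nat}\leq\bbA$ and $\bbB\leq F_{\nat}\lda F_{\nat}=(FG)_{\nat}$, whereas the paper squeezes $F_{\nat}\leq G^{\nat}\leq F_{\nat}$ to get $F_{\nat}=G^{\nat}$ and then invokes Proposition \ref{adjoint_graph}.
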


\begin{proof}
(i)${}\Lra{}$(ii): If $F\dv G$, for the existence of $\Lan_F 1_{\bbA}$ it suffices to prove $G\cong\Lan_F 1_{\bbA}$. Indeed, from Propositions \ref{adjoint_arrow_calculation}(1) and \ref{adjoint_graph} one has
$$G_{\nat}=\bbA\lda G^{\nat}=(1_{\bbA})_{\nat}\lda F_{\nat},$$
and thus Proposition \ref{Kan_graph} guarantees $G\cong\Lan_F 1_{\bbA}$. Now let $H:\bbA\lra\bbC$ be any $\CQ$-functor, by applying again Propositions \ref{adjoint_arrow_calculation}(1) and \ref{adjoint_graph} one has
$$(HG)_{\nat}=H_{\nat}\circ G_{\nat}=H_{\nat}\lda G^{\nat}=H_{\nat}\lda F_{\nat},$$
showing that $HG\cong\Lan_F H$.

(ii)${}\Lra{}$(iii): Trivial.

(iii)${}\Lra{}$(i): Let $G=\Lan_F 1_{\bbA}$, then $FG\cong\Lan_F F$. By Proposition \ref{Kan_graph} one has
$$G_{\nat}=\bbA\lda F_{\nat}\quad\text{and}\quad F_{\nat}\lda F_{\nat}=(FG)_{\nat}=F_{\nat}\circ G_{\nat}=F_{\nat}\lda G^{\nat},$$
where the last equality follows from Proposition \ref{adjoint_arrow_calculation}(1). It follows that
$$F_{\nat}\leq G_{\nat}\rda\bbA=G^{\nat}\leq(F_{\nat}\lda F_{\nat})\rda F_{\nat}=F_{\nat},$$
where the first equality follows from Proposition \ref{adjoint_arrow_calculation}(1). Thus $F_{\nat}=G^{\nat}$ and by Proposition \ref{adjoint_graph} one has $F\dv G$.
\end{proof}

%

The following characterizations of adjoint $\CQ$-functors will be useful in the sequel:

\begin{prop} \label{la_condition} (See \cite{Stubbe2005}.)
Let $F:\bbA\lra\bbB$ be a $\CQ$-functor. If $F$ is a left (resp. right) adjoint in $\QCat$, then
\begin{enumerate}[\rm (1)]
\item $F$ is \emph{cocontinuous} (resp. \emph{continuous}) in the sense that $F\colim_{\mu}G\cong\colim_{\mu}FG$ (resp. $F\lim_{\lam}G\cong\lim_{\lam}FG$) for all $\CQ$-functors $G:\bbX\lra\bbA$ and $\mu\in\PX$ (resp. $\lam\in\PdX$).
\item $F$ is \emph{$\sup$-preserving} (resp. \emph{$\inf$-preserving}) in the sense that $F\sup_{\bbA}\cong\sup_{\bbB}F^{\ra}$ (resp. $F\inf_{\bbA}\cong\inf_{\bbB}F^{\nra}$).
\item $F$ preserves left (resp. right) Kan extensions of any $\CQ$-functor with codomain $\bbA$.
\item $F$ is a left (resp. right) adjoint between the underlying ordered sets of $\bbA$, $\bbB$.
\end{enumerate}
Moreover, if $\bbA$ is complete, then the following statements are equivalent:
\begin{enumerate}[\rm (i)]
\item $F$ is a left (resp. right) adjoint in $\QCat$.
\item $F$ is cocontinuous (resp. continuous).
\item $F$ is $\sup$-preserving (resp. $\inf$-preserving).
\item $F$ preserves left (resp. right) Kan extensions.
\end{enumerate}
\end{prop}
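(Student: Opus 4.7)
The plan is to split the proposition into a direct half (the four necessary conditions from the adjunction $F\dv G$) and an equivalence half that closes the circle when $\bbA$ is complete, using the graph identity $F_{\nat}=G^{\nat}$ of Proposition \ref{adjoint_graph} as the main technical hook and Proposition \ref{Kan_extension_adjoint} to convert Kan-extension preservation into an actual right adjoint.

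Assuming $F\dv G$ with $F_{\nat}=G^{\nat}$, I would verify (1) by a direct graph computation: since $\bbA(\colim_{\mu}H,-)=H_{\nat}\lda\mu$ by \eqref{colim_def}, the identity $F_{\nat}=G^{\nat}$ yields $\bbB(F\colim_{\mu}H,b)=\bbA(\colim_{\mu}H,Gb)=(H_{\nat}\lda\mu)(Gb)$, and one more use of the same identity inside the meet defining $\lda$ transforms this into $((FH)_{\nat}\lda\mu)(b)$, which is precisely the defining property of $\colim_{\mu}FH$. Condition (2) is then the case $H=1_{\bbA}$ combined with the formula $\colim_{\mu}F=\sup_{\bbB}F^{\ra}\mu$ from Proposition \ref{colim_as_sup}; (3) is obtained by applying (1) pointwise to $(\Lan_K H)b=\colim_{K_{\nat}(-,b)}H$; and (4) is immediate from unfolding $1_{\bbA}\leq GF$, $FG\leq 1_{\bbB}$ in terms of the underlying preorder $x\leq y\iff 1_{|x|}\leq\bbA(x,y)$.

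For the equivalence half, (i)${}\Lra{}$(ii),(iii),(iv) is just the direct half. The implication (iii)${}\Lra{}$(ii) follows from Proposition \ref{colim_as_sup} together with the easy identity $(FH)^{\ra}=F^{\ra}H^{\ra}$, while (ii)${}\Lra{}$(iv) is immediate from the pointwise definition \eqref{left_Kan_def} of Kan extensions. I expect the principal step to be (iv)${}\Lra{}$(i): completeness of $\bbA$ guarantees that $\Lan_F 1_{\bbA}$ exists (each value being a weighted colimit in $\bbA$), hypothesis (iv) says $F$ preserves it, and then Proposition \ref{Kan_extension_adjoint}~((iii)${}\Lra{}$(i)) immediately delivers a right adjoint to $F$. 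The dual statements for right adjoints and $\inf$-preservation transfer verbatim through the 2-isomorphism \eqref{QopCat_iso} of Remark \ref{Qcat_dual}; the only subtlety I foresee is ensuring that completeness of $\bbA$ is invoked exactly where needed, i.e., only in (iv)${}\Lra{}$(i), since the four conditions in the first half make sense and hold whenever the relevant colimits exist.
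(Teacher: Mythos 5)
Your proposal is essentially correct, but there is nothing in the paper to compare it against: Proposition \ref{la_condition} is quoted with the citation to Stubbe's work and is not proved in the text, so your argument is a self-contained reconstruction rather than a variant of an in-paper proof. As such it holds up: for the direct half, the graph identity $F_{\nat}=G^{\nat}$ from Proposition \ref{adjoint_graph} together with \eqref{colim_def} gives exactly $\bbB(F\colim_{\mu}H,-)=(FH)_{\nat}\lda\mu$, which is (1); then (2) is the case $H=1_{\bbA}$ via Proposition \ref{colim_as_sup}, (3) follows pointwise from \eqref{left_Kan_def}, and (4) is the unravelling of the unit/counit inequalities in the underlying order. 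For the equivalence half, your cycle closes correctly: (iii)$\Rightarrow$(ii) needs the (true and easy) identity $(FH)^{\ra}=F^{\ra}H^{\ra}$, (ii)$\Rightarrow$(iv) is the pointwise definition of Kan extensions, and the key step (iv)$\Rightarrow$(i) is legitimate because completeness of $\bbA$ gives $(\Lan_F 1_{\bbA})b=\sup_{\bbA}F_{\nat}(-,b)$, so hypothesis (iv) applied to the Kan extension of $1_{\bbA}$ along $F$ puts you exactly in case (iii) of Proposition \ref{Kan_extension_adjoint}, which the paper does prove and which is the natural hook here. The only caveats are ones you already flag yourself: in the first half, (2) must be read as preservation of those suprema that exist, and the dual (right-adjoint) statements transfer verbatim because the 2-isomorphism \eqref{QopCat_iso} turns right adjoints, limits and right Kan extensions in $\QCat$ into left adjoints, colimits and left Kan extensions in $\CQ^{\op}\text{-}\Cat$, matching the paper's conventions $\lim_{\lam}F=\colim_{\lam^{\op}}F^{\op}$ and $\Ran_K F=(\Lan_{K^{\op}}F^{\op})^{\op}$.
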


\subsection{(Co)dense $\CQ$-functors}

A $\CQ$-functor $F:\bbA\lra\bbB$ is \emph{dense} \cite{Shen2013a} if for any $y\in\bbB_0$, there exists $\mu\in\PA$ such that $y\cong\colim_{\mu}F$. Dually, $F$ is \emph{codense} if $F^{\op}$ is a dense $\CQ^{\op}$-functor; that is, $y\cong\lim_{\lam}F$ for some $\lam\in\PdA$ for any $y\in\bbB_0$.

A $\CQ$-subcategory $\bbB$ of $\bbA$ is \emph{dense} (resp. \emph{codense}) if the inclusion $\CQ$-functor $J:\bbB\ \to/^(->/\bbA$ is dense (resp. codense).

\begin{exmp} \label{Yoneda_dense}
For any $\CQ$-category $\bbA$, the Yoneda embedding $\sY_{\bbA}:\bbA\lra\PA$ is dense since $\mu=\colim_{\mu}\sY_{\bbA}$ for all $\mu\in\PA$. Dually, the co-Yoneda embedding $\sYd_{\bbA}:\bbA\lra\PdA$ is codense.
\end{exmp}

We have the following equivalent characterizations of dense and codense $\CQ$-functors:

\begin{prop} \label{dense_condition}
Let $F:\bbA\lra\bbB$ be a $\CQ$-functor. The following statements are equivalent:
\begin{enumerate}[\rm (i)]
\item $F$ is dense (resp. codense).
\item $F$ is \emph{$\sup$-dense} (resp. \emph{$\inf$-dense}) in the sense that there exists $\mu\in\PA$ (resp. $\lam\in\PdA$) with $y\cong\sup_{\bbB}F^{\ra}\mu$ (resp. $y\cong\inf_{\bbB}F^{\nra}\lam$) for all $y\in\bbB_0$.
\item $\sIm(F)=\{Fx\mid x\in\bbA_0\}$ is a dense (resp. codense) $\CQ$-subcategory of $\bbB$.
\item $F_{\nat}\lda F_{\nat}=\bbB$ (resp. $F^{\nat}\rda F^{\nat}=\bbB$).
\item $1_{\bbB}\cong\Lan_F F$ (resp. $1_{\bbB}\cong\Ran_F F$).
\end{enumerate}
\end{prop}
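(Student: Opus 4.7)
The plan is to prove the dense characterizations (i)--(v); the codense versions then follow by Remark \ref{Qcat_dual}, applied via the 2-isomorphism (\ref{QopCat_iso}) and the quantaloid isomorphism (\ref{QopDist_iso}). I will arrange the dense case in three blocks: the bookkeeping equivalences (i)$\iff$(ii) and (iv)$\iff$(v), the crux (i)$\iff$(iv), and finally (iii)$\iff$(iv).

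The bookkeeping steps are essentially restatements. The equivalence (i)$\iff$(ii) is immediate from the identity $\colim_\mu F=\sup_{\bbB}F^{\ra}\mu$ in Proposition \ref{colim_as_sup}. The equivalence (iv)$\iff$(v) follows from Proposition \ref{Kan_graph}: $\Lan_F F$ exists and is isomorphic to $1_{\bbB}$ iff its graph $F_\nat\lda F_\nat$ equals the graph $\bbB$ of $1_{\bbB}$.

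The crux is (i)$\iff$(iv). The direction (iv)$\Rightarrow$(i) is free: via (iv)$\iff$(v) one obtains $y\cong(\Lan_F F)y=\colim_{F_\nat(-,y)}F$ from (\ref{left_Kan_def}), so $F$ is dense with the canonical weight $\mu=F_\nat(-,y)$. For (i)$\Rightarrow$(iv), given $y\in\bbB_0$, pick $\mu\in\PA$ with $y\cong\colim_\mu F$; by (\ref{colim_def}) this means $\bbB(y,-)=F_\nat\lda\mu$. Evaluating at $y$ and using $1_{|y|}\leq\bbB(y,y)$ yields the Yoneda-type inequality $\mu(x)\leq\bbB(Fx,y)=F_\nat(x,y)$ for every $x\in\bbA_0$. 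Antitonicity of $\lda$ in its second argument then gives
\[
(F_\nat\lda F_\nat)(y,y')=\bigwedge_{x}\bigl(\bbB(Fx,y')\lda\bbB(Fx,y)\bigr)\leq\bigwedge_{x}\bigl(\bbB(Fx,y')\lda\mu(x)\bigr)=(F_\nat\lda\mu)(y')=\bbB(y,y'),
\]
while the reverse inequality $\bbB(y,y')\leq(F_\nat\lda F_\nat)(y,y')$ is immediate from the $\CQ$-categorical composition law $\bbB(Fx,y')\geq\bbB(y,y')\circ\bbB(Fx,y)$. Hence $F_\nat\lda F_\nat=\bbB$.

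Finally, for (iii)$\iff$(iv), factor $F$ as $F=J\circ F'$, where $J:\sIm(F)\hookrightarrow\bbB$ is the inclusion and $F':\bbA\lra\sIm(F)$ is the essentially surjective corestriction; covariance of $(-)_\nat$ on $1$-cells (cf.\ (\ref{graph_cograph_functorial})) gives $F_\nat=J_\nat\circ F'_\nat$. A direct unpacking yields
\[
(F_\nat\lda F_\nat)(y,y')=\bigwedge_{x\in\bbA_0}\bigl(\bbB(Fx,y')\lda\bbB(Fx,y)\bigr)=\bigwedge_{z\in\sIm(F)_0}\bigl(\bbB(z,y')\lda\bbB(z,y)\bigr)=(J_\nat\lda J_\nat)(y,y'),
\]
since the two index sets describe the same subset $\{Fx:x\in\bbA_0\}=\sIm(F)_0$ of $\bbB_0$. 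Thus (iv) for $F$ is equivalent to (iv) for $J$, which by the already-proved (i)$\iff$(iv) applied to $J$ is precisely the density of $J$, i.e., (iii). The only real obstacle is the step (i)$\Rightarrow$(iv): one must replace an arbitrary witness $\mu$ to density by the canonical weight $F_\nat(-,y)=\bbB(F-,y)$, and this substitution is legitimized exactly by the Yoneda-type inequality $\mu\leq F_\nat(-,y)$ established above; everything else is formal manipulation with the calculus of $\CQ$-distributors from Propositions \ref{distributor_graph}--\ref{adjoint_arrow_calculation}.
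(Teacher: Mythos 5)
Your proof is correct. The bookkeeping equivalences (i)$\iff$(ii) and (iv)$\iff$(v), and the core equivalence (i)$\iff$(iv), follow essentially the paper's route: your pointwise Yoneda-type inequality $\mu(x)\leq F_{\nat}(x,y)$, extracted by evaluating $\bbB(y,-)=F_{\nat}\lda\mu$ at $y$, is just an unpacked version of the paper's one-line distributor computation $\bbB(y,-)\leq(F_{\nat}\lda F_{\nat}(-,y))\circ(F_{\nat}(-,y)\lda\mu)\leq F_{\nat}\lda\mu$, and in both cases the point is that the arbitrary witness $\mu$ can be replaced by the canonical weight $F_{\nat}(-,y)$. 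Where you genuinely diverge is statement (iii): the paper proves (ii)$\iff$(iii) by factoring $F=JG$ through its surjective corestriction $G:\bbA\lra\sIm(F)$ and transferring $\sup$-density witnesses back and forth using $\lam\circ G_{\nat}\circ G^{\nat}=\lam$ (Proposition \ref{functor_graph}(2)), whereas you prove (iii)$\iff$(iv) directly from the observation that $F_{\nat}\lda F_{\nat}$ and $J_{\nat}\lda J_{\nat}$ are the same meet indexed two ways (the terms depend only on $Fx\in\sIm(F)_0$), and then apply the already-established (i)$\iff$(iv) to the inclusion $J$. Your route is slightly more economical, avoiding the explicit weight-transfer argument, at the cost of leaning on the criterion (iv) rather than on the $\sup$-density formulation (ii); the paper's version has the mild advantage of exhibiting concretely how a weight on $\bbA$ and a weight on $\sIm(F)$ are converted into one another. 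Your reduction of the codense case by duality via Remark \ref{Qcat_dual} matches the paper's implicit treatment.
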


\begin{proof}
(i)$\iff$(ii): Follows immediately from Proposition \ref{colim_as_sup}.

(ii)$\iff$(iii): Since one already has (i)$\iff$(ii), it suffices to show that $F$ is $\sup$-dense if, and only if, the inclusion $\CQ$-functor $J:\sIm(F)\ \to/^(->/\bbB$ is $\sup$-dense. Let $G:\bbA\lra\sIm(F)$ be the codomain restriction of $F$, then obviously $F=JG$, and the surjectivity of $G$ implies $\lam\circ G_{\nat}\circ G^{\nat}=\lam$ for all $\lam\in\sP(\sIm(F))$ by Proposition \ref{functor_graph}(2).

Let $y\in\bbB_0$. On one hand, if $y\cong\sup_{\bbB}F^{\ra}\mu$ for some $\mu\in\PA$, then $G^{\ra}\mu\in\sP(\sIm(F))$ satisfies
$$y\cong{\sup}_{\bbB}F^{\ra}\mu={\sup}_{\bbB}J^{\ra}G^{\ra}\mu.$$
On the other hand, if $y\cong\sup_{\bbB}J^{\ra}\lam$ for some $\lam\in\sP(\sIm(F))$, then $\lam\circ G_{\nat}\in\PA$ satisfies
$$y\cong{\sup}_{\bbB}J^{\ra}\lam={\sup}_{\bbB}J^{\ra}(\lam\circ G_{\nat}\circ G^{\nat})={\sup}_{\bbB}J^{\ra}G^{\ra}(\lam\circ G_{\nat})={\sup}_{\bbB}F^{\ra}(\lam\circ G_{\nat}).$$

(i)${}\Lra{}$(iv): One may find $\mu\in\PA$ such that $y\cong\colim_{\mu}F$ for any $y\in\bbB_0$. Then
\begin{align*}
\bbB(y,-)&\leq F_{\nat}\lda F_{\nat}(-,y)\\
&\leq(F_{\nat}\lda F_{\nat}(-,y))\circ\bbB(y,y)\\
&=(F_{\nat}\lda F_{\nat}(-,y))\circ(F_{\nat}(-,y)\lda\mu)&(\text{Equation \eqref{colim_def}})\\
&\leq F_{\nat}\lda\mu\\
&=\bbB(y,-),&(\text{Equation \eqref{colim_def}})
\end{align*}
and consequently $\bbB(y,-)=F_{\nat}\lda F_{\nat}(-,y)=(F_{\nat}\lda F_{\nat})(y,-)$.

(iv)${}\Lra{}$(i): $F_{\nat}\lda F_{\nat}=\bbB$ immediately implies $\bbB(y,-)=F_{\nat}\lda F_{\nat}(-,y)$; that is, $y\cong\colim_{F_{\nat}(-,y)}F$ for any $y\in\bbB_0$.

(iv)$\iff$(v): Follows immediately from Proposition \ref{Kan_graph}.
\end{proof}

\begin{cor} \phantomsection \label{composition_dense}
\begin{enumerate}[\rm (1)]
\item Every essentially surjective $\CQ$-functor is both dense and codense.
\item If $\CQ$-functors $F:\bbA\lra\bbB$ and $G:\bbB\lra\bbC$ are both dense (resp. codense) and $G$ is a left (resp. right) adjoint in $\QCat$, then $GF:\bbA\lra\bbC$ is dense (resp. codense).
\end{enumerate}
\end{cor}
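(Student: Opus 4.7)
For (1), my plan is to identify each object of $\bbB$ as a colimit of $F$ weighted by a representable presheaf. Given $y\in\bbB_0$, essential surjectivity provides $x\in\bbA_0$ with $Fx\cong y$. A direct computation with the Yoneda lemma shows that the copresheaf $F_{\nat}\lda\sY_{\bbA}x$ equals $\bbB(Fx,-)$, so the defining equation \eqref{colim_def} gives $\colim_{\sY_{\bbA}x}F\cong Fx\cong y$, proving $F$ dense. The codense statement is the $\CQ^{\op}$-dual via Remark \ref{Qcat_dual}.

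\textbf{Part (2), dense case.} I will treat the dense case; the codense one follows by dualizing through Remark \ref{Qcat_dual}. By Proposition \ref{dense_condition}(v), density of $GF$ amounts to $\Lan_{GF}(GF)\cong 1_{\bbC}$, which via Proposition \ref{Kan_graph} is the single distributor equation
\[(GF)_{\nat}\lda(GF)_{\nat}=\bbC.\]
My strategy is to rewrite the left-hand side using $(GF)_{\nat}=G_{\nat}\circ F_{\nat}$ together with the elementary quantaloid identity $(\phi\lda\psi)\lda\chi=\phi\lda(\chi\circ\psi)$ (immediate from the defining adjunction of $\lda$), obtaining
\[(GF)_{\nat}\lda(GF)_{\nat}=((GF)_{\nat}\lda F_{\nat})\lda G_{\nat}.\]

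The inner factor $(GF)_{\nat}\lda F_{\nat}$ is the graph of $\Lan_F(GF)$ by Proposition \ref{Kan_graph}. Since $F$ is dense, Proposition \ref{dense_condition}(v) gives $\Lan_F F\cong 1_{\bbB}$; and since $G$ is a left adjoint, Proposition \ref{la_condition}(3) shows that $G$ preserves this extension, so $\Lan_F(GF)\cong G\Lan_F F\cong G$. Substituting,
\[(GF)_{\nat}\lda(GF)_{\nat}=G_{\nat}\lda G_{\nat}=(\Lan_G G)_{\nat}=\bbC,\]
where the final equality uses the density of $G$ via Proposition \ref{dense_condition}(v).

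The delicate point will be the composition-of-Kan-extensions step, since the paper explicitly distinguishes pointwise from non-pointwise Kan extensions and a naive appeal to the universal property \eqref{left_Kan_ineq} would not suffice. Working entirely through graphs via Proposition \ref{Kan_graph} sidesteps this distinction: every occurrence of $\Lan$ is characterized by an equation between distributors, so no pointwise-versus-non-pointwise bookkeeping is needed and the existence of each Kan extension involved is confirmed en route.
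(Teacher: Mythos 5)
Your proof is correct, and it reaches the same target identity as the paper --- $(GF)_{\nat}\lda(GF)_{\nat}=\bbC$, via Proposition \ref{dense_condition}(iv)/(v) --- but by a different middle route. The paper's proof of (2) is a five-line calculation entirely inside the distributor calculus: it starts from $\bbC=G_{\nat}\lda G_{\nat}$, uses $G_{\nat}=H^{\nat}$ for the right adjoint $H$ of $G$ (Proposition \ref{adjoint_graph}) together with Proposition \ref{adjoint_arrow_calculation}(3) to pull $G_{\nat}$ in and out of implications, and inserts $F_{\nat}\lda F_{\nat}=\bbB$; Kan extensions never appear. You instead feed the adjointness of $G$ through Proposition \ref{la_condition}(3): since $F$ dense gives $\Lan_F F\cong 1_{\bbB}$ and left adjoints preserve (pointwise) left Kan extensions, $\Lan_F(GF)\cong G$, whence $(GF)_{\nat}\lda F_{\nat}=G_{\nat}$ by Proposition \ref{Kan_graph}, and the quantaloid identity $\phi\lda(\chi\circ\psi)=(\phi\lda\psi)\lda\chi$ finishes the job. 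The two arguments use the hypothesis on $G$ through different lemmas (graph/cograph calculus versus Kan-extension preservation), and yours is the more conceptual one, mirroring the classical $\Cat$-level argument and checking existence of each Kan extension as you go; the paper's is more economical in that it depends only on Propositions \ref{adjoint_graph} and \ref{adjoint_arrow_calculation}, whereas yours leans on Proposition \ref{la_condition}(3), which the paper imports from the literature. Your treatment of (1) via $\colim_{\sY_{\bbA}x}F\cong Fx$ is also correct (the paper dismisses (1) as easy), and your closing remark about sidestepping the pointwise/non-pointwise issue is justified, since all the Kan extensions you invoke are characterized by the graph equation of Proposition \ref{Kan_graph}.
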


\begin{proof}
(1) is easy. For (2), note that $G_{\nat}=H^{\nat}$ if $G\dv H$ in $\QCat$, and thus
\begin{align*}
\bbC&=G_{\nat}\lda G_{\nat}&(\text{Proposition \ref{dense_condition}(iv)})\\
&=G_{\nat}\circ(\bbB\lda G_{\nat})&(G_{\nat}=H^{\nat}\ \text{and Proposition \ref{adjoint_arrow_calculation}(3)})\\
&=G_{\nat}\circ((F_{\nat}\lda F_{\nat})\lda G_{\nat})&(\text{Proposition \ref{dense_condition}(iv)})\\
&=G_{\nat}\circ(F_{\nat}\lda(GF)_{\nat})\\
&=(GF)_{\nat}\lda(GF)_{\nat},&(G_{\nat}=H^{\nat}\ \text{and Proposition \ref{adjoint_arrow_calculation}(3)})
\end{align*}
showing that $GF$ is dense.
\end{proof}

\section{Representation theorem in terms of (co)dense $\CQ$-functors} \label{General_Representation_Dense}

Now we are ready to present the second main result of this paper. If $S\dv T:\bbD\lra\bbC$ is an adjunction between complete $\CQ$-categories, Proposition \ref{AF_complete} guarantees the completeness of $\Fix(TS)\simeq\Fix(ST)$. In this case, the following representation theorem can be established through dense and codense $\CQ$-functors:

\begin{thm} \label{general_representation_dense_complete}
Let $S\dv T:\bbD\lra\bbC$ be an adjunction between complete $\CQ$-categories. Then a complete $\CQ$-category $\bbX$ is equivalent to $\Fix(TS)$ if, and only if, there exist dense $\CQ$-functors $F:\bbA\lra\bbX$, $K:\bbA\lra\bbC$ and codense $\CQ$-functors $G:\bbB\lra\bbX$, $H:\bbB\lra\bbD$ with $H^{\nat}\circ S_{\nat}\circ K_{\nat}=G^{\nat}\circ F_{\nat}$.
$$\bfig
\morphism(-700,0)<400,300>[\bbA`\bbC;K]
\morphism(700,0)<-400,300>[\bbB`\bbD;H]
\morphism(-700,0)|b|<700,-300>[\bbA`\bbX;F]
\morphism(700,0)|b|<-700,-300>[\bbB`\bbX;G]
\morphism(-300,300)|a|/@{>}@<4pt>/<600,0>[\bbC`\bbD;S]
\morphism(300,300)|b|/@{>}@<4pt>/<-600,0>[\bbD`\bbC;T]
\place(0,305)[\mbox{\scriptsize$\bot$}]
\morphism(1300,0)<400,300>[\bbA`\bbC;K_{\nat}]
\morphism(1700,300)|a|<600,0>[\bbC`\bbD;S_{\nat}=T^{\nat}]
\morphism(2300,300)<400,-300>[\bbD`\bbB;H^{\nat}]
\morphism(1300,0)|b|<700,-300>[\bbA`\bbX;F_{\nat}]
\morphism(2000,-300)|b|<700,300>[\bbX`\bbB;G^{\nat}]
\place(1500,150)[\circ] \place(2000,300)[\circ] \place(2500,150)[\circ] \place(1650,-150)[\circ] \place(2350,-150)[\circ]
\efig$$
\end{thm}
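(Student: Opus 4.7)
Necessity is immediate. If $\bbX\simeq\Fix(TS)$, the necessity half of Theorem~\ref{general_representation} furnishes essentially surjective $\CQ$-functors $L:\bbC\to\bbX$ and $R:\bbD\to\bbX$ with $S_\nat=R^\nat\circ L_\nat$; I would set $\bbA:=\bbC$, $\bbB:=\bbD$, $K:=1_\bbC$, $H:=1_\bbD$, $F:=L$, $G:=R$. Identity $\CQ$-functors are dense and codense, while $F$ and $G$ are dense and codense because they are essentially surjective (Corollary~\ref{composition_dense}(1)); the required equation then reduces to $S_\nat=R^\nat\circ L_\nat$.

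For sufficiency, the strategy is to manufacture essentially surjective $\CQ$-functors $L:\bbC\to\bbX$ and $R:\bbD\to\bbX$ satisfying $S_\nat=R^\nat\circ L_\nat$ and then invoke Theorem~\ref{general_representation}. Since $\bbX$ is complete, I would define $L:=\Lan_K F$ and $R:=\Ran_H G$. The graph identity should drop out from a single application of Proposition~\ref{Kan_formula}(2) to the pairs $(F,G)$ and $(SK,H)$, which share the common lower composition $G^\nat\circ F_\nat=H^\nat\circ S_\nat\circ K_\nat=H^\nat\circ(SK)_\nat$ by assumption and graph functoriality. The conclusion will read
\[
R^\nat\circ L_\nat=(\Ran_H H)^\nat\circ(\Lan_K SK)_\nat,
\]
and both factors on the right simplify: $\Ran_H H\cong 1_\bbD$ by codensity of $H$ and $\Lan_K K\cong 1_\bbC$ by density of $K$ (Proposition~\ref{dense_condition}(v)); and $\Lan_K SK\cong S\circ\Lan_K K\cong S$ because $S$, being a left adjoint between complete $\CQ$-categories, preserves left Kan extensions (Proposition~\ref{la_condition}).

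The substantive step is essential surjectivity. For $L$, given $x\in\bbX$, codensity of $G$ lets me write $x\cong\lim_{G^\nat(x,-)}G$. My candidate is $c:=\lim_{G^\nat(x,-)}TH\in\bbC$: using the limit formula, the adjoint transfer $\bbD(SKa,Hb)=\bbC(Ka,THb)$ from $S\dv T$, and the hypothesis $\bbD(SK-,H-)=\bbX(F-,G-)$, one gets
\[
\bbC(Ka,c)=\bw_{b}\bbX(x,Gb)\rda\bbC(Ka,THb)=\bw_{b}\bbX(x,Gb)\rda\bbX(Fa,Gb)=\bbX(Fa,x),
\]
the last equality being the codensity presentation of $x$ evaluated at $Fa$. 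Hence $K_\nat(-,c)=F_\nat(-,x)$ in $\PA$, and density of $F$ delivers $Lc=\sup_\bbX F^{\ra}K_\nat(-,c)=x$. For $R$ the dual candidate is $d:=\colim_{F_\nat(-,x)}SK\in\bbD$; the identity $H^\nat(d,-)=G^\nat(x,-)$ follows symmetrically using density of $F$ and the hypothesis, whence $Rd\cong x$. Theorem~\ref{general_representation} then concludes $\bbX\simeq\Fix(TS)$.

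The main obstacle is finding these correct candidates for $c$ and $d$. The naive density-based guess $c=\sup_\bbC K^{\ra}F_\nat(-,x)$ only gives $Lc\geq x$ and leaves no obvious route to the reverse inequality. The successful choice exploits the \emph{opposing} (co)density together with the equation $\bbD(SK-,H-)=\bbX(F-,G-)$ to force the pointwise identity $K_\nat(-,c)=F_\nat(-,x)$ in $\PA$.
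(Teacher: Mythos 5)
Your proof is correct, and its overall skeleton coincides with the paper's: both directions reduce to Theorem \ref{general_representation}, the necessity part with $K=1_{\bbC}$, $H=1_{\bbD}$ exactly as in the paper, and the sufficiency part by setting $L=\Lan_K F$, $R=\Ran_H G$ and deriving $S_{\nat}=R^{\nat}\circ L_{\nat}$ from Proposition \ref{Kan_formula}(2) together with $\Ran_H H\cong 1_{\bbD}$, $\Lan_K K\cong 1_{\bbC}$ and preservation of left Kan extensions by the left adjoint $S$ --- the same computation the paper performs, merely read in the opposite direction. Where you genuinely deviate is essential surjectivity: the paper first proves $LK\cong F$ and exhibits $L$ as a left adjoint (computing $L_{\nat}=(\Ran_G TH)^{\nat}$), then transfers a colimit presentation $x\cong\colim_{\mu}F$ through the cocontinuous $L$; you instead produce an explicit preimage $c=\lim_{G^{\nat}(x,-)}TH$ and verify the pointwise identity $K_{\nat}(-,c)=F_{\nat}(-,x)$ from the hypothesis, the adjunction identity $S_{\nat}=T^{\nat}$ and codensity of $G$, after which density of $F$ gives $Lc\cong x$ (dually $d=\colim_{F_{\nat}(-,x)}SK$ for $R$); I checked these computations and they are sound. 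Note that your $c$ is precisely $(\Ran_G TH)x$ and your $d$ is $(\Lan_F SK)x$, i.e.\ the values of the very adjoints the paper constructs, so the two arguments are close kin: yours trades the abstract adjointness/cocontinuity step for a direct hom-computation, which is more elementary and self-contained, while the paper's route additionally records the reusable facts $LK\cong F$, $RH\cong G$ and the adjointness of $L$ and $R$. The only cosmetic slip is writing $Lc=x$ where only $Lc\cong x$ is warranted (and needed), since $\bbX$ is not assumed separated.
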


\begin{proof}
{\bf Necessity.} One may find essentially surjective $\CQ$-functors $L:\bbC\lra\bbX$ and $R:\bbD\lra\bbX$ with $S_{\nat}=R^{\nat}\circ L_{\nat}$ by Theorem \ref{general_representation}. Then the dense $\CQ$-functors $L:\bbC\lra\bbX$, $1_{\bbC}:\bbC\lra\bbC$ and the codense $\CQ$-functors $R:\bbD\lra\bbX$, $1_{\bbD}:\bbD\lra\bbD$ clearly satisfy $R^{\nat}\circ L_{\nat}=S_{\nat}=1_{\bbD}^{\nat}\circ S_{\nat}\circ(1_{\bbC})_{\nat}$.

{\bf Sufficiency.} The completeness of $\bbX$ guarantees the existence of the Kan extensions (see the definitions in \eqref{left_Kan_def} and \eqref{right_Kan_def})
$$L:=\Lan_K F:\bbC\lra\bbX\quad\text{and}\quad R:=\Ran_H G:\bbD\lra\bbX.$$
\begin{equation} \label{KFL_ST_HGR}
\bfig
\morphism(-1100,0)<800,300>[\bbA`\bbC;K]
\morphism(1100,0)<-800,300>[\bbB`\bbD;H]
\morphism(-1100,0)|b|<1100,-300>[\bbA`\bbX;F]
\morphism(1100,0)|b|<-1100,-300>[\bbB`\bbX;G]
\morphism(-300,300)|a|/@{>}@<4pt>/<600,0>[\bbC`\bbD;S]
\morphism(300,300)|b|/@{>}@<4pt>/<-600,0>[\bbD`\bbC;T]
\morphism(-300,300)|l|<300,-600>[\bbC`\bbX;L=\Lan_K F]
\morphism(300,300)|r|<-300,-600>[\bbD`\bbX;R=\Ran_H G]
\place(0,305)[\mbox{\scriptsize$\bot$}]
\efig
\end{equation}
We show that $L$ and $R$ satisfy the conditions in Theorem \ref{general_representation}.

First, $LK\cong F$ and $RH\cong G$; that is, $L$ and $R$ are actually extensions of $F$ and $G$, respectively. For this, note that
\begin{align*}
F_{\nat}&=(\Ran_G G)^{\nat}\circ(\Lan_{1_{\bbA}}F)_{\nat}&(\text{Propositions \ref{Kan_formula}(1) and \ref{dense_condition}(v)})\\
&=(\Ran_G H)^{\nat}\circ(\Lan_{1_{\bbA}}SK)_{\nat}&(H^{\nat}\circ S_{\nat}\circ K_{\nat}=G^{\nat}\circ F_{\nat}\ \text{and Proposition \ref{Kan_formula}(2)})\\
&=(\Ran_G H)^{\nat}\circ(SK)_{\nat}&(\text{Propositions \ref{Kan_formula}(1)})\\
&=(\Ran_G H)^{\nat}\circ(S\Lan_K K)_{\nat}\circ K_{\nat}&(\text{Proposition \ref{dense_condition}(v)})\\
&=(\Ran_G H)^{\nat}\circ(\Lan_K SK)_{\nat}\circ K_{\nat}&(\text{Proposition \ref{la_condition}(3)})\\
&=(\Ran_G G)^{\nat}\circ(\Lan_K F)_{\nat}\circ K_{\nat}&(H^{\nat}\circ S_{\nat}\circ K_{\nat}=G^{\nat}\circ F_{\nat}\ \text{and Proposition \ref{Kan_formula}(2)})\\
&=L_{\nat}\circ K_{\nat},&(\text{Proposition \ref{dense_condition}(v)})
\end{align*}
and thus the conclusion follows (see \eqref{F_leq_G_graph}). Similarly one may prove $RH\cong G$.

Second, $L$ and $R$ are essentially surjective. To this end, note that $S_{\nat}=T^{\nat}$ implies
\begin{equation} \label{GFnat=THKnat}
G^{\nat}\circ F_{\nat}=H^{\nat}\circ S_{\nat}\circ K_{\nat}=H^{\nat}\circ T^{\nat}\circ K_{\nat}=(TH)^{\nat}\circ K_{\nat},
\end{equation}
and consequently
\begin{align*}
L_{\nat}&=(\Ran_G G)^{\nat}\circ(\Lan_K F)_{\nat}&(\text{Proposition \ref{dense_condition}(v)})\\
&=(\Ran_G TH)^{\nat}\circ(\Lan_K K)_{\nat}&(\text{Equation \eqref{GFnat=THKnat} and Proposition \ref{Kan_formula}(2)})\\
&=(\Ran_G TH)^{\nat},&(\text{Proposition \ref{dense_condition}(v)})
\end{align*}
where the existence of $\Ran_G TH$ is guaranteed by the completeness of $\bbC$. Thus $L\dv\Ran_G TH$ in $\QCat$ and, as a left adjoint, $L$ is cocontinuous (see Proposition \ref{la_condition}(1)). Therefore, for any $x\in\bbX_0$ one may find $\mu\in\PA$ with $x\cong\colim_{\mu}F$, and consequently ${\colim}_{\mu}K\in\bbC_0$ satisfies
$$L({\colim}_{\mu}K)\cong{\colim}_{\mu}LK\cong{\colim}_{\mu}F\cong x,$$
where the second isomorphism follows from $LK\cong F$. Hence $L$ is essentially surjective. Similarly one may obtain the essential surjectivity of $R$ by showing that $R$ is a right adjoint in $\QCat$ (with $\Lan_F SK$ as its left adjoint) and applying $RH\cong G$.

Finally, $S_{\nat}=R^{\nat}\circ L_{\nat}$. Indeed,
\begin{align*}
S_{\nat}&=(S\Lan_K K)_{\nat}&(\text{Proposition \ref{dense_condition}(v)})\\
&=(\Lan_K SK)_{\nat}&(\text{Proposition \ref{la_condition}(3)})\\
&=(\Ran_H H)^{\nat}\circ(\Lan_K SK)_{\nat}&(\text{Proposition \ref{dense_condition}(v)})\\
&=(\Ran_H G)^{\nat}\circ(\Lan_K F)_{\nat}&(H^{\nat}\circ S_{\nat}\circ K_{\nat}=G^{\nat}\circ F_{\nat}\ \text{and Proposition \ref{Kan_formula}(2)})\\
&=R^{\nat}\circ L_{\nat},
\end{align*}
which completes the proof.
\end{proof}

Theorem \ref{general_representation_dense_complete} paves a way towards a ``good'' representation of $\Fix(TS)\simeq\Fix(ST)$ for a specific adjunction $S\dv T:\bbD\lra\bbC$ between complete $\CQ$-categories; that is, by looking for dense $\CQ$-functors into $\bbC$ (or equivalently, dense $\CQ$-subcategories of $\bbC$) and codense $\CQ$-functors into $\bbD$ (or equivalently, codense $\CQ$-subcategories of $\bbD$). The power of this theorem will be revealed in the next section for representations of concept lattices.

\section{Fixed points of Isbell adjunctions and Kan adjunctions} \label{Isbell_Kan_fixed_points}

In this section we demonstrate how the general representation theorems (\ref{general_representation} and \ref{general_representation_dense_complete}) give rise to representation theorems of concept lattices in FCA and RST in the generality of the $\CQ$-version.

\subsection{Isbell adjunctions and Kan adjunctions}

Each $\CQ$-distributor $\phi:\bbA\oto\bbB$ induces an \emph{Isbell adjunction} $\uphi\dv\dphi:\PdB\lra\PA$ in $\QCat$ \cite{Shen2013a} given by
\begin{align*}
&\uphi:\PA\lra\PdB,\quad\mu\mapsto\phi\lda\mu,\\
&\dphi:\PdB\lra\PA,\quad\lam\mapsto\lam\rda\phi
\end{align*}
and a \emph{Kan adjunction} $\phi^*\dv\phi_*:\PA\lra\PB$ defined as
\begin{align*}
&\phi^*:\PB\lra\PA,\quad \lam\mapsto\lam\circ\phi,\\
&\phi_*:\PA\lra\PB,\quad \mu\mapsto\mu\lda\phi.
\end{align*}
Since $\PdA\cong(\PA^{\op})^{\op}$, there is also a \emph{dual Kan adjunction} $\phi_{\dag}\dv\phi^{\dag}$ with
\begin{align*}
&\phi_{\dag}:=((\phi^{\op})_*)^{\op}:\PdB\lra\PdA,\quad\lam\mapsto\phi\rda\lam,\\
&\phi^{\dag}:=((\phi^{\op})^*)^{\op}:\PdA\lra\PdB,\quad\mu\mapsto\phi\circ\mu
\end{align*}
which corresponds to the Kan adjunction $(\phi^{\op})^*\dv(\phi^{\op})_*:\PB^{\op}\lra\PA^{\op}$ in $\CQ^{\op}\text{-}\Cat$ under the isomorphism \eqref{QopCat_iso} in Remark \ref{Qcat_dual}.

\begin{rem}
As left and right Kan extensions of $\CQ$-functors introduced in Subsection \ref{Kan_extensions} are exactly \emph{left} and \emph{right extensions} of 1-cells \cite{Lack2010} in the 2-category $\QCat$, Kan adjunctions induced by $\CQ$-distributors in fact generalize \emph{right extensions} of 1-cells in the 2-category $\QDist$. To see this, note that the underlying Galois connection (between the underlying ordered sets) of a Kan adjunction $\phi^*\dv\phi_*:\PA\lra\PB$ may be described as the monotone map ``composing with $\phi$''
$$\phi^*:\QDist(\bbB,\{q\})\lra\QDist(\bbA,\{q\})$$
admitting a right adjoint
$$\phi_*:\QDist(\bbA,\{q\})\lra\QDist(\bbB,\{q\})$$
for any $q\in\CQ_0$, which exactly says that for any presheaf $\mu:\bbA\oto\{q\}$,
$$\phi_*\mu=\mu\lda\phi:\bbB\oto\{q\}$$
is the right extension of $\mu$ along $\phi:\bbA\oto\bbB$. Similarly, dual Kan adjunctions induced by $\CQ$-distributors correspond to \emph{left extensions} of 1-cells in $\QDist$.
\end{rem}

\begin{prop} (See \cite{Heymans2010}.) \label{star_graph_adjoint}
$(-)^*:(\QDist)^{\op}\lra\QCat$ and $(-)^{\dag}:(\QDist)^{\co}\lra\QCat$ are both 2-functorial, and one has two pairs of adjoint 2-functors
$$(-)^{\nat}\dv(-)^*:(\QDist)^{\op}\lra\QCat\quad\text{and}\quad(-)_{\nat}\dv(-)^{\dag}:(\QDist)^{\co}\lra\QCat.$$
\end{prop}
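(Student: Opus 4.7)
The plan is to verify three things in succession: (1) that $(-)^*$ and $(-)^{\dag}$ are genuine 2-functors with the indicated source and target, (2) the adjunction $(-)^{\nat}\dv(-)^*$, and (3) the dual adjunction $(-)_{\nat}\dv(-)^{\dag}$. On objects I take $(-)^*$ to send $\bbA\mapsto\PA$ and $(-)^{\dag}$ to send $\bbA\mapsto\PdA$, while $(-)_{\nat}$ and $(-)^{\nat}$ are already the identity on objects. The engine of the whole argument is the ``currying'' order-isomorphism
$$\QDist(\bbA,\bbC)\;\cong\;\QCat(\bbC,\PA),\qquad \phi\longmapsto\bigl(c\mapsto\phi(-,c)\bigr),$$
with inverse $F\mapsto\phi_F$ where $\phi_F(x,c)=(Fc)(x)$. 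This is immediate from comparing the defining inequalities of a $\CQ$-distributor $\bbA\oto\bbC$ and of a $\CQ$-functor $\bbC\to\PA$ (using $\PA(\mu,\mu')=\mu'\lda\mu$), and the correspondence visibly respects the pointwise order. Dually, $\QDist(\bbC,\bbA)\cong\QCat(\bbC,\PdA)$ via $\phi\mapsto(c\mapsto\phi(c,-))$.

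For (1), that $\phi^*\colon\PB\to\PA$, $\lambda\mapsto\lambda\circ\phi$, is a $\CQ$-functor follows from the adjunction $-\circ\phi\dv-\lda\phi$ applied via the definition of hom in $\PA$ (see Proposition \ref{adjoint_arrow_calculation}). Associativity of composition in $\QDist$ gives $(\psi\circ\phi)^*=\phi^*\circ\psi^*$, and $(1_{\bbA})^*=1_{\PA}$ is trivial. For 2-cells, $\phi\leq\phi'$ in $\QDist$ implies $\lambda\circ\phi\leq\lambda\circ\phi'$ for every $\lambda$, hence $\phi^*\leq(\phi')^*$ pointwise in $\QCat$; since $(\QDist)^{\op}$ leaves 2-cells unchanged, this is exactly the required 2-functoriality. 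The argument for $(-)^{\dag}$ is dual; the subtle point is that the underlying order on $\PdA$ is the \emph{reverse} of the local order on $\QDist(\{q\},\bbA)$ (Remark \ref{PdA_QDist_order}), and this reversal is precisely what accounts for the $\co$ in $(\QDist)^{\co}$.

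For (2), the currying isomorphism gives
$$(\QDist)^{\op}(\bbC,\bbA)\;=\;\QDist(\bbA,\bbC)\;\cong\;\QCat(\bbC,\PA),$$
and naturality in $\bbC$ and in $\bbA$ is a one-line check using the distributor axioms: for $G\colon\bbC'\to\bbC$ one computes $(G^{\nat}\circ\phi)(x,c')=\phi(x,Gc')$, which is exactly the curried form of $\phi_F\circ G$. The unit $\eta_{\bbC}\colon\bbC\to\PC$ is the $\CQ$-functor corresponding under currying to the identity distributor $\bbC\colon\bbC\oto\bbC$, that is, the Yoneda embedding $\sY_{\bbC}$; the counit, viewed as a $\CQ$-distributor $\bbA\oto\PA$, is $\epsilon_{\bbA}(x,\mu)=\mu(x)$, which is nothing but the graph $(\sY_{\bbA})_{\nat}$ by the Yoneda Lemma \ref{Yoneda_lemma}. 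The two triangle identities then reduce to fully-faithfulness of Yoneda (Proposition \ref{functor_graph}(1)) and to the Yoneda lemma itself. The argument for (3) is entirely dual, with the co-Yoneda embedding $\sYd_{\bbC}$ as unit and the cograph $(\sYd_{\bbA})^{\nat}$ as counit.

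The only real obstacle is bookkeeping: keeping the variances of $\op$ and $\co$ straight and remembering the order reversal on $\PdA$. Once the currying isomorphism is in hand, every subsequent verification is a one-line application of associativity of composition and of the Yoneda lemma.
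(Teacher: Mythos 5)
Your argument is correct: the paper gives no proof of this proposition (it is quoted from Heymans' thesis), and your route --- the currying isomorphism $\QDist(\bbA,\bbC)\cong\QCat(\bbC,\PA)$ and its dual, which is exactly the pair of hom-isomorphisms the paper records immediately after the statement, together with the identification of the units/counits as the (co-)Yoneda embeddings and their (co)graphs and the observation that the order-reversal on $\PdA$ accounts for the $\co$ --- is the standard and sound one, with all the key verifications (functoriality of $\phi^*$, $\phi^{\dag}$, 2-naturality, triangle identities via the Yoneda lemma and full faithfulness) checking out. Two cosmetic remarks: in the naturality check the curried form of $G^{\nat}\circ\phi$ should be written $\ola{\phi}\circ G$ (your $\phi_F$ denotes the inverse assignment), and since the hom-isomorphisms are already shown to be 2-natural order-isomorphisms, the explicit unit/counit and triangle-identity discussion is a welcome but strictly redundant supplement.
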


The adjunctions $(-)^{\nat}\dv(-)^*$ and $(-)_{\nat}\dv(-)^{\dag}$ give rise to isomorphisms
$$(\QCat)^{\co}(\bbA,\PdB)\cong\QDist(\bbA,\bbB)\cong\QCat(\bbB,\PA)$$
for all $\CQ$-categories $\bbA$, $\bbB$. We denote by
\begin{equation} \label{olphi_def}
\olphi:\bbB\lra\PA,\quad\olphi y=\phi(-,y)\quad\text{and}\quad\ophi:\bbA\lra\PdB,\quad\ophi x=\phi(x,-)
\end{equation}
for the transposes of each $\CQ$-distributor $\phi:\bbA\oto\bbB$.

\begin{prop} \label{olphi_ophi_Yoneda} (See \cite{Shen2016,Shen2013a}.)
Let $\phi:\bbA\oto\bbB$ be a $\CQ$-distributor. Then
\begin{enumerate}[\rm (1)]
\item $\phi=(\sYd_{\bbB})^{\nat}\circ\ophi_{\nat}=\olphi^{\,\nat}\circ(\sY_{\bbA})_{\nat}$,
\item $\ophi=\uphi\sY_{\bbA}=\phi^{\dag}\sYd_{\bbA}$,
\item $\olphi=\dphi\sYd_{\bbB}=\phi^*\sY_{\bbB}$.
\end{enumerate}
\end{prop}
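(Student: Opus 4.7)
The three identities split into two essentially different flavours. Parts (2) and (3) are computational unpackings of the Isbell and Kan adjunctions applied to the (co)Yoneda embeddings, while (1) is really the Yoneda lemma viewed through the lens of Proposition \ref{distributor_graph}. I would therefore prove (2) and (3) first by direct calculation, then derive (1) from the Yoneda lemma.

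For (2), I would unfold the definitions to get
\[
(\uphi\sY_{\bbA})x=\phi\lda\bbA(-,x),\qquad(\phi^{\dag}\sYd_{\bbA})x=\phi\circ\bbA(x,-),
\]
both viewed as copresheaves on $\bbB$ of type $|x|$. Evaluating the first at $y\in\bbB_0$ and using the formula for $\lda$ from Section \ref{Quantaloid-enriched categories} gives $\bw_{x'}\phi(x',y)\lda\bbA(x',x)$. The inequality $\leq\phi(x,y)$ comes from specialising at $x'=x$ together with $\bbA(x,x)\geq 1_{|x|}$, while $\geq\phi(x,y)$ is just the distributor condition $\phi(x,y)\circ\bbA(x',x)\leq\phi(x',y)$ rewritten as an implication. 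Evaluating the second at $y$ gives $\bv_{x'}\phi(x',y)\circ\bbA(x,x')$, which is dually bounded using $\bbA(x,x)\geq 1_{|x|}$ above and the distributor condition below. Part (3) follows either by the same pattern, applied now to $\dphi\sYd_{\bbB}$ and $\phi^{*}\sY_{\bbB}$ as presheaves on $\bbA$, or more efficiently by transporting (2) across the 2-isomorphism $(-)^{\op}\colon\QCat\cong(\CQ^{\op}\text{-}\Cat)^{\co}$ of Remark \ref{Qcat_dual}, under which $\olphi$ corresponds to $\ora{\phi^{\op}}$, and $\dphi$, $\phi^{*}$ correspond to the upper Isbell functor and the dual Kan functor induced by $\phi^{\op}$.

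For (1), I would apply Proposition \ref{distributor_graph} to the identity $\CQ$-distributor $\PdB\colon\PdB\oto\PdB$ with $F=\ophi\colon\bbA\to\PdB$ and $G=\sYd_{\bbB}\colon\bbB\to\PdB$, obtaining
\[
\PdB(\ophi-,\sYd_{\bbB}-)=(\sYd_{\bbB})^{\nat}\circ\PdB\circ\ophi_{\nat}=(\sYd_{\bbB})^{\nat}\circ\ophi_{\nat}.
\]
The Yoneda lemma \ref{Yoneda_lemma}, applied to the copresheaf $\ophi x\in\PdB$, yields $\PdB(\ophi x,\sYd_{\bbB}y)=(\ophi x)(y)=\phi(x,y)$, which establishes $(\sYd_{\bbB})^{\nat}\circ\ophi_{\nat}=\phi$. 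The second half of (1) is symmetric: apply Proposition \ref{distributor_graph} to the identity on $\PA$ with $F=\sY_{\bbA}$ and $G=\olphi$, then invoke Yoneda on $\olphi y\in\PA$.

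I expect no serious conceptual obstacle, but the bookkeeping deserves care: one must keep straight that $\PdA$ carries the \emph{reverse} of the $\QDist$-order (Remark \ref{PdA_QDist_order}), so that the direction of implication inequalities in part (2) really is what the statement demands, and one must be careful about which side of each Yoneda identity is being invoked (the presheaf version in $\PA$ versus the copresheaf version in $\PdB$). Once these conventions are fixed, the three identities drop out.
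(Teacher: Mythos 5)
Your proof is correct: parts (2) and (3) follow exactly as you compute them (the two bounds coming from $1_{|x|}\leq\bbA(x,x)$ and the $\CQ$-distributor inequality), and part (1) is indeed just Proposition \ref{distributor_graph} applied to the identity distributor on $\PdB$ (resp.\ $\PA$) combined with the Yoneda Lemma \ref{Yoneda_lemma}. The paper itself gives no proof of this proposition—it only cites \cite{Shen2016,Shen2013a}—and your argument is the standard direct verification one would find there, so there is nothing substantive to compare beyond noting that your write-up is a complete, self-contained justification.
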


By Proposition \ref{AF_complete}, fixed points of the Isbell adjunction $\uphi\dv\dphi$ and the Kan adjunction $\phi^*\dv\phi_*$ constitute complete $\CQ$-categories
$$\Mphi:=\Fix(\dphi\uphi)=\{\mu\in\PA\mid\dphi\uphi\mu=\mu\}\quad\text{and}\quad\Kphi:=\Fix(\phi_*\phi^*)=\{\lam\in\PB\mid\phi_*\phi^*\lam=\lam\},$$
where are both separated since so are $\PA$ and $\PB$.

\begin{exmp} \label{MA_KA}
For the identity $\CQ$-distributor $\bbA:\bbA\oto\bbA$ on a $\CQ$-category $\bbA$, $\sM\bbA$ is the MacNeille completion of $\bbA$ \cite{Shen2013a}, and $\sK\bbA=\PA$ is the free cocompletion of $\bbA$ \cite{Stubbe2005}.
\end{exmp}


With the help of Theorems \ref{general_representation} and \ref{general_representation_dense_complete}, we arrive at the following representation theorem of $\Mphi$ as the $\CQ$-version of Theorem \ref{FCA_fundamental}:

\begin{thm} (See \cite{Shen2013a}.) \label{Mphi_representation}
For any $\CQ$-distributor $\phi:\bbA\oto\bbB$, a separated complete $\CQ$-category $\bbX$ is isomorphic to $\Mphi$ if, and only if, there exist a dense $\CQ$-functor $F:\bbA\lra\bbX$ and a codense $\CQ$-functor $G:\bbB\lra\bbX$ with $\phi=G^{\nat}\circ F_{\nat}$.
\end{thm}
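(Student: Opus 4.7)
The plan is to deduce Theorem~\ref{Mphi_representation} as a direct application of Theorem~\ref{general_representation_dense_complete} to the Isbell adjunction $\uphi\dv\dphi:\PdB\lra\PA$, taking $S=\uphi$ and $T=\dphi$ so that $\bbC=\PA$, $\bbD=\PdB$ (both complete and separated) and $\Fix(TS)=\Mphi$. The natural auxiliary $\CQ$-categories will be $\bbA$ itself with $K=\sY_{\bbA}:\bbA\lra\PA$ (dense, by Example~\ref{Yoneda_dense}) and $\bbB$ with $H=\sYd_{\bbB}:\bbB\lra\PdB$ (codense, by the dual of Example~\ref{Yoneda_dense}).

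For \emph{sufficiency}, suppose $F:\bbA\lra\bbX$ is dense, $G:\bbB\lra\bbX$ is codense, and $\phi=G^{\nat}\circ F_{\nat}$. The condition required by Theorem~\ref{general_representation_dense_complete} is
\[
H^{\nat}\circ S_{\nat}\circ K_{\nat}=(\sYd_{\bbB})^{\nat}\circ(\uphi)_{\nat}\circ(\sY_{\bbA})_{\nat}=G^{\nat}\circ F_{\nat},
\]
and the left-hand side equals $\phi$ by combining Proposition~\ref{olphi_ophi_Yoneda}(1),(2): indeed $\ophi=\uphi\sY_{\bbA}$ gives $\ophi_{\nat}=(\uphi)_{\nat}\circ(\sY_{\bbA})_{\nat}$ (since $(-)_{\nat}$ is a 2-functor by \eqref{graph_cograph_functorial}), and then $\phi=(\sYd_{\bbB})^{\nat}\circ\ophi_{\nat}$. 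Thus Theorem~\ref{general_representation_dense_complete} yields $\bbX\simeq\Mphi$, and since both $\CQ$-categories are separated complete, equivalence upgrades to isomorphism (a fully faithful essentially surjective $\CQ$-functor between separated $\CQ$-categories is bijective on objects, hence an isomorphism).

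For \emph{necessity}, it suffices to take $\bbX=\Mphi$ and construct the required $F$ and $G$. Applying Theorem~\ref{general_representation} (necessity half) to $\uphi\dv\dphi$, one obtains essentially surjective $\CQ$-functors $L:\PA\lra\Mphi$ and $R:\PdB\lra\Mphi$ (the codomain restrictions of $\dphi\uphi$ and $\dphi$, respectively) with $(\uphi)_{\nat}=R^{\nat}\circ L_{\nat}$, and by Corollary~\ref{L_la_R_ra}, $L$ is a left adjoint and $R$ is a right adjoint in $\QCat$. Set
\[
F:=L\sY_{\bbA}:\bbA\lra\Mphi,\qquad G:=R\sYd_{\bbB}:\bbB\lra\Mphi.
\]
Since $L$ is essentially surjective, it is dense (Corollary~\ref{composition_dense}(1)); then Corollary~\ref{composition_dense}(2) and the density of $\sY_{\bbA}$ show that $F$ is dense. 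Dually $G$ is codense. The required identity follows from a short diagram chase:
\[
G^{\nat}\circ F_{\nat}=(\sYd_{\bbB})^{\nat}\circ R^{\nat}\circ L_{\nat}\circ(\sY_{\bbA})_{\nat}
=(\sYd_{\bbB})^{\nat}\circ(\uphi)_{\nat}\circ(\sY_{\bbA})_{\nat}=\phi,
\]
where the last equality is again Proposition~\ref{olphi_ophi_Yoneda}(1),(2).

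The main technical hurdle is the bookkeeping of graphs and cographs in the sufficiency direction: one must recognize that the hypothesis $\phi=G^{\nat}\circ F_{\nat}$ coincides precisely with the compatibility condition $H^{\nat}\circ S_{\nat}\circ K_{\nat}=G^{\nat}\circ F_{\nat}$ of Theorem~\ref{general_representation_dense_complete} once $K$ and $H$ are taken to be the (co-)Yoneda embeddings; this requires the factorization of $\phi$ through Yoneda given by Proposition~\ref{olphi_ophi_Yoneda}. Everything else is a direct invocation of previously developed machinery.
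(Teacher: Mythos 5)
Your proposal is correct and follows essentially the same route as the paper's own proof: necessity via Theorem \ref{general_representation} applied to the Isbell adjunction, with $F=L\sY_{\bbA}$, $G=R\sYd_{\bbB}$ made (co)dense through Corollaries \ref{L_la_R_ra} and \ref{composition_dense}, and sufficiency by feeding the Yoneda and co-Yoneda embeddings into Theorem \ref{general_representation_dense_complete}, using Proposition \ref{olphi_ophi_Yoneda} to identify $\phi=(\sYd_{\bbB})^{\nat}\circ(\uphi)_{\nat}\circ(\sY_{\bbA})_{\nat}$ in both directions. Your only additions—the explicit upgrade from equivalence to isomorphism via separatedness and the reduction to $\bbX=\Mphi$ in the necessity part—are points the paper leaves implicit, and they are handled correctly.
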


\begin{proof}
{\bf Necessity.} By Theorem \ref{general_representation} there exist surjective $\CQ$-functors $L:\PA\lra\bbX$ and $R:\PdB\lra\bbX$ with $(\uphi)_{\nat}=R^{\nat}\circ L_{\nat}$. Since $\sY_{\bbA}:\bbA\lra\PA$ is dense (Example \ref{Yoneda_dense}), $L$ is dense (Corollary \ref{composition_dense}(1)) and $L$ is a left adjoint in $\QCat$ (Corollary \ref{L_la_R_ra}), one deduces the density of $L\sY_{\bbA}:\bbA\lra\bbX$ by Corollary \ref{composition_dense}(2). Similarly one can see that $R\sYd_{\bbB}:\bbB\lra\bbX$ is codense. Finally, one has
$$\phi=(\sYd_{\bbB})^{\nat}\circ\ophi_{\nat}=(\sYd_{\bbB})^{\nat}\circ(\uphi\sY_{\bbA})_{\nat}=(\sYd_{\bbB})^{\nat}\circ R^{\nat}\circ L_{\nat}\circ(\sY_{\bbA})_{\nat}
=(R\sYd_{\bbB})^{\nat}\circ(L\sY_{\bbA})_{\nat}$$
by applying the formulas in Proposition \ref{olphi_ophi_Yoneda}.

{\bf Sufficiency.} Now we have dense $\CQ$-functors $\sY_{\bbA}:\bbA\lra\PA$, $F:\bbA\lra\bbX$ and codense $\CQ$-functors $\sYd_{\bbB}:\bbB\lra\PdB$, $G:\bbB\lra\bbX$ with
$$G^{\nat}\circ F_{\nat}=\phi=(\sYd_{\bbB})^{\nat}\circ\ophi_{\nat}=(\sYd_{\bbB})^{\nat}\circ(\uphi\sY_{\bbA})_{\nat}=(\sYd_{\bbB})^{\nat}\circ(\uphi)_{\nat}\circ(\sY_{\bbA})_{\nat}$$
by Proposition \ref{olphi_ophi_Yoneda}, thus the conditions in Theorem \ref{general_representation_dense_complete} are satisfied. This completes the proof.
\end{proof}

The following diagrams respectively illustrate the ``only if'' part and the ``if'' part of Theorem \ref{Mphi_representation} (cf. Diagram \eqref{KFL_ST_HGR} in the proof of Theorem \ref{general_representation_dense_complete}):
\begin{equation} \label{Mphi_representation_digram}
\bfig
\morphism(-800,0)<500,300>[\bbA`\PA;\sY_{\bbA}]
\morphism(800,0)<-500,300>[\bbB`\PdB;\sYd_{\bbB}]
\morphism(-800,0)|b|<800,-300>[\bbA`\Mphi;F=L\sY_{\bbA}]
\morphism(800,0)|b|<-800,-300>[\bbB`\Mphi;G=R\sYd_{\bbB}]
\morphism(-300,300)|a|/@{>}@<4pt>/<600,0>[\PA`\PdB;\uphi]
\morphism(300,300)|b|/@{>}@<4pt>/<-600,0>[\PdB`\PA;\dphi]
\morphism(-300,300)|l|<300,-600>[\PA`\Mphi;L=\dphi\uphi]
\morphism(300,300)|r|<-300,-600>[\PdB`\Mphi;R=\dphi]
\place(0,305)[\mbox{\scriptsize$\bot$}]
\morphism(1100,0)<600,300>[\bbA`\PA;\sY_{\bbA}]
\morphism(2900,0)<-600,300>[\bbB`\PdB;\sYd_{\bbB}]
\morphism(1100,0)|b|<900,-300>[\bbA`\bbX;F]
\morphism(2900,0)|b|<-900,-300>[\bbB`\bbX;G]
\morphism(1700,300)|a|/@{>}@<4pt>/<600,0>[\PA`\PdB;\uphi]
\morphism(2300,300)|b|/@{>}@<4pt>/<-600,0>[\PdB`\PA;\dphi]
\morphism(1700,300)|l|<300,-600>[\PA`\bbX;L=\Lan_{\sY_{\bbA}}F]
\morphism(2300,300)|r|<-300,-600>[\PdB`\bbX;R=\Ran_{\sYd_{\bbB}}G]
\place(2000,305)[\mbox{\scriptsize$\bot$}]
\efig
\end{equation}

\begin{exmp}
For the identity $\CQ$-distributor $\bbA:\bbA\oto\bbA$ on a $\CQ$-category $\bbA$, $\sM\bbA$ is the MacNeille completion of $\bbA$ (see Example \ref{MA_KA}). In this case, the codomain restriction $\sY_{\bbA}:\bbA\lra\sM\bbA$ of the Yoneda embedding is dense, and the composition $\bbA^{\da}\sYd_{\bbA}:\bbA\lra\PdA\lra\sM\bbA$ is codense. It is not difficult to verify $\bbA=(\bbA^{\da}\sYd_{\bbA})^{\nat}\circ(\sY_{\bbA})_{\nat}$, which fulfills the conditions in Theorem \ref{Mphi_representation}.
\end{exmp}

However, it is not straightforward to derive the counterpart of Theorem \ref{Mphi_representation} for $\Kphi$ as it is not easy to find a non-trivial codense $\CQ$-functor into $\PA$ (or equivalently, a non-trivial codense $\CQ$-subcategory of $\PA$) as required in Theorem \ref{general_representation_dense_complete}. This will be the topic of the next subsection.

\subsection{Towards codense $\CQ$-subcategories of presheaf $\CQ$-categories} \label{Codense_Sub_Presheaf}


In a quantaloid $\CQ$, a family of $\CQ$-arrows $\{d_q:q\lra q\}_{q\in\CQ_0}$ is a \emph{cyclic family} (resp. \emph{dualizing family}) if
\begin{equation} \label{cyclic_dualizing_def}
d_p\lda u=u\rda d_q\quad (\text{resp.}\ (d_p\lda u)\rda d_p=u=d_q\lda(u\rda d_q))
\end{equation}
for all $\CQ$-arrows $u:p\lra q$. A \emph{Girard quantaloid} \cite{Rosenthal1992} is a quantaloid $\CQ$ equipped with a cyclic dualizing family of $\CQ$-arrows. In this case, the \emph{complement} of a $\CQ$-arrow $u:p\lra q$ can be defined as
$$\neg u=d_p\lda u=u\rda d_q:q\lra p,$$
which clearly satisfies $\neg\neg u=u$. For each $\CQ$-category $\bbA$,
$$(\neg\bbA)(y,x)=\neg\bbA(x,y)$$
gives a $\CQ$-distributor $\neg\bbA:\bbA\oto\bbA$, and it is straightforward to check that
$$\{\neg\bbA:\bbA\oto\bbA\}_{\bbA\in\ob(\QDist)}$$
is a cyclic dualizing family of $\QDist$; this gives the ``only if'' part of the following proposition. As for the ``if'' part, just note that $\CQ$ can be fully faithfully embedded in $\QDist$:

\begin{prop} (See \cite{Rosenthal1992}.)
A small quantaloid $\CQ$ is a Girard quantaloid if, and only if, $\QDist$ is a Girard quantaloid.
\end{prop}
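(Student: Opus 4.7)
The plan is to prove both directions by exploiting the fact that $\CQ$ sits inside $\QDist$ as a full sub-2-category on the one-object $\CQ$-categories, so that cyclic/dualizing data transfer back and forth by restriction and extension.

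For the ``only if'' direction, the construction is already indicated in the paragraph preceding the statement. Starting from a cyclic dualizing family $\{d_q:q\lra q\}_{q\in\CQ_0}$ in $\CQ$, one defines for each $\CQ$-category $\bbA$ the candidate $\neg\bbA:\bbA\oto\bbA$ by
$(\neg\bbA)(y,x)=d_{|y|}\lda\bbA(x,y)=\bbA(x,y)\rda d_{|x|}$,
the two expressions agreeing by the cyclic identity in $\CQ$. First I would verify that $\neg\bbA$ is a genuine $\CQ$-distributor, i.e.\ $\bbA(y',y)\circ(\neg\bbA)(y,x)\circ\bbA(x,x')\leq(\neg\bbA)(y',x')$, which follows pointwise from the cyclic identity and the adjunction $-\circ u\dv -\lda u$ in $\CQ$. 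Then, using the pointwise formulas for $\lda$ and $\rda$ in $\QDist$ recorded in Section~\ref{Quantaloid-enriched categories}, one checks that for any $\phi:\bbA\oto\bbB$ both $\neg\bbA\lda\phi=\phi\rda\neg\bbB$ and $(\neg\bbA\lda\phi)\rda\neg\bbA=\phi=\neg\bbB\lda(\phi\rda\neg\bbB)$ reduce entry by entry to instances of the cyclic and dualizing identities in $\CQ$.

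For the ``if'' direction, I would introduce the embedding $\iota:\CQ\lra\QDist$ sending each object $q$ to the discrete $\CQ$-category $\{q\}$ and each $\CQ$-arrow $u:p\lra q$ to the $\CQ$-distributor $\iota(u):\{p\}\oto\{q\}$ whose unique value is $u$. By definition a $\CQ$-distributor $\{p\}\oto\{q\}$ is just a $\CQ$-arrow $p\lra q$ with the same local order, so $\iota$ is bijective on hom-sets, and since all the joins and meets in the composition and implication formulas from Section~\ref{Quantaloid-enriched categories} are then taken over singletons, $\iota$ preserves composition, identities, and both implications. Now, given any cyclic dualizing family $\{D_{\bbA}:\bbA\oto\bbA\}_{\bbA}$ in $\QDist$, let $d_q$ be the unique $\CQ$-arrow corresponding to $D_{\{q\}}:\{q\}\oto\{q\}$. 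For an arbitrary $u:p\lra q$ the cyclic identity $D_{\{p\}}\lda\iota(u)=\iota(u)\rda D_{\{q\}}$ in $\QDist$ translates through $\iota$ to $d_p\lda u=u\rda d_q$ in $\CQ$, and the dualizing identity translates likewise.

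The only point that requires attention is the compatibility of $\iota$ with the pointwise implication formulas in $\QDist$; once one observes that those formulas specialise to a single term whenever the domain and codomain categories are singletons, everything collapses to the operations in $\CQ$ and both directions are routine.
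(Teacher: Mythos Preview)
Your proposal is correct and follows essentially the same route as the paper: construct $\neg\bbA$ pointwise from the cyclic dualizing family in $\CQ$ for the ``only if'' direction, and pull back the cyclic dualizing family along the full embedding $q\mapsto\{q\}$ for the ``if'' direction. One small slip: your displayed formula for $(\neg\bbA)(y,x)$ has the subscripts swapped---since $\bbA(x,y):|x|\lra|y|$, the well-typed expression is $d_{|x|}\lda\bbA(x,y)=\bbA(x,y)\rda d_{|y|}$, not $d_{|y|}\lda\bbA(x,y)$.
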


Hence, with $\CQ$ being Girard, each $\CQ$-distributor $\phi:\bbA\oto\bbB$ has a \emph{complement}
$$\neg\phi:=\neg\bbA\lda\phi=\phi\rda\neg\bbB:\bbB\oto\bbA.$$

\begin{prop} \label{PA_PdA_iso}
If $\CQ$ is a small Girard quantaloid, then for any $\CQ$-category $\bbA$,
$$\neg:\PA\lra\PdA$$
is an isomorphism in $\QCat$.
\end{prop}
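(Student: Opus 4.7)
The plan is to show that $\neg\colon\PA\lra\PdA$ is type-preserving, bijective on objects, and fully faithful, with its inverse given again by $\neg$. Type preservation is immediate since $|\mu|=|\neg\mu|$, and the involutivity $\neg\neg=\mathrm{id}$ of complementation on $\CQ$-distributors yields a bijection between $\PA_0$ and $\PdA_0$. The real content of the proposition is therefore that the assignment $\mu\mapsto\neg\mu$ preserves hom-arrows on the nose, i.e.\ $\PA(\mu,\mu')=\PdA(\neg\mu,\neg\mu')$ in $\CQ(|\mu|,|\mu|)$.

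Unwinding the hom-formulas of $\PA$ and $\PdA$ pointwise reduces the whole proposition to a single ``contrapositive'' identity in the Girard quantaloid $\CQ$: for any $\CQ$-arrows $a,b\colon p\lra q$,
\[b\lda a \;=\; \neg b\rda\neg a.\]
My plan is to verify this by showing that the two universal properties defining these arrows coincide, i.e.\ that $w\circ a\leq b$ and $\neg b\circ w\leq\neg a$ are equivalent conditions on $w\colon q\lra q$. In the forward direction, compose $w\circ a\leq b$ on the left with $\neg b$ and use $\neg b\circ b\leq d_p$ (the defining adjunction for $\neg b=d_p\lda b$) to obtain $(\neg b\circ w)\circ a\leq d_p$, hence $\neg b\circ w\leq d_p\lda a=\neg a$. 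In the reverse direction, compose on the right with $a$, use $\neg a\circ a\leq d_p$, and apply the dualizing identity $\neg b\rda d_p=b$ to recover $w\circ a\leq b$. The equality $\PA(\mu,\mu')=\PdA(\neg\mu,\neg\mu')$ will then follow by taking $\bw_{x\in\bbA_0}$ of this identity with $a=\mu(x,|\mu|)$ and $b=\mu'(x,|\mu|)$.

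By the same argument with the roles of $\lda$ and $\rda$ swapped — legitimate because cyclicity and dualizing are self-dual conditions on $\CQ$ — the inverse map $\neg\colon\PdA\lra\PA$ is also a fully faithful $\CQ$-functor, so $\neg$ is an isomorphism in $\QCat$. The only genuinely non-trivial step is the pointwise contrapositive identity, and that is exactly where the Girard hypothesis is needed; beyond that, the argument is direct bookkeeping of sources and targets in the (possibly non-commutative) quantaloid $\CQ$.
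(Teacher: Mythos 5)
Your proof is correct, and it hinges on the same fact as the paper's — the hom-equality $\PA(\mu,\mu')=\PdA(\neg\mu,\neg\mu')$, which is exactly where the dualizing family enters — but you execute it by a more elementary, pointwise route. The paper proves the equality in one line inside the Girard quantaloid $\QDist$: it substitutes $\mu'=(\neg\bbA\lda\mu')\rda\neg\bbA$ (the dualizing law for the family $\{\neg\bbA\}$) into $\mu'\lda\mu$ and applies the exchange law $(\psi\rda\xi)\lda\phi=\psi\rda(\xi\lda\phi)$, with no componentwise unwinding. You instead reduce to the arrow-level contrapositive identity $b\lda a=\neg b\rda\neg a$ in $\CQ$ and verify it by comparing the two Galois conditions $w\circ a\leq b$ and $\neg b\circ w\leq\neg a$, using $\neg b\circ b\leq d_p$ in one direction and the dualizing identity $\neg b\rda d_p=(d_p\lda b)\rda d_p=b$ in the other; that verification is sound, and taking $\bw_{x\in\bbA_0}$ does recover the hom-equality. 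Two small points to tighten: (i) the pointwise unwinding tacitly uses $(\neg\mu)(x)=\neg(\mu(x))$, which is immediate from $\neg\mu=\mu\rda\neg\{q\}$ but should be stated; (ii) hom-arrows of $\PA$ exist between presheaves of different types, $\PA(\mu,\mu')\in\CQ(|\mu|,|\mu'|)$, so the contrapositive identity should be stated and applied for $a\colon p\lra q$ and $b\colon p\lra q'$ with $q\neq q'$ allowed — your argument goes through verbatim, but as written you only cover the equal-type case, which is not enough for full faithfulness. Finally, the closing paragraph is dispensable: once $\neg\colon\PA\lra\PdA$ is fully faithful and bijective on objects it is automatically an isomorphism of $\CQ$-categories, so there is no need to check separately that the inverse is a fully faithful $\CQ$-functor.
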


\begin{proof}
Since $\{\neg\bbA\}_{\bbA\in\ob(\QDist)}$ is a cyclic dualizing family, one has
$$\PA(\mu,\lam)=\lam\lda\mu=((\neg\bbA\lda\lam)\rda\neg\bbA)\lda\mu=(\neg\bbA\lda\lam)\rda(\neg\bbA\lda\mu)=\PdA(\neg\mu,\neg\lam)$$
for all $\mu,\lam\in\PA$. Thus $\neg:\PA\lra\PdA$ is a fully faithful $\CQ$-functor, and consequently an isomorphism in $\QCat$ since it is obviously surjective.
\end{proof}

From the combination of Example \ref{Yoneda_dense} and Proposition \ref{PA_PdA_iso} one immediately obtains a codense $\CQ$-functor
\begin{equation} \label{neg_sYd_codense}
\neg\sYd_{\bbA}:=(\bbA\to^{\sYd_{\bbA}}\PdA\to^{\neg}\PA)
\end{equation}
for any $\CQ$-category $\bbA$ when $\CQ$ is Girard. The representation of $\Kphi$ follows easily in this case, which gives the $\CQ$-version of Theorem \ref{RST_fundamental}:

\begin{cor} \label{Kphi_representation_Girard}
Let $\CQ$ be a small Girard quantaloid. Then for any $\CQ$-distributor $\phi:\bbA\oto\bbB$, a separated complete $\CQ$-category $\bbX$ is isomorphic to $\Kphi$ if, and only if, there exist a dense $\CQ$-functor $F:\bbB\lra\bbX$ and a codense $\CQ$-functor $G:\bbA\lra\bbX$ with $\neg\phi=G^{\nat}\circ F_{\nat}$.
\end{cor}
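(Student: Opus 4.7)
The plan is to reduce the statement to Theorem \ref{Mphi_representation} applied to the complement distributor $\neg\phi:\bbB\oto\bbA$. The reduction passes through the identity $\Kphi = \sM(\neg\phi)$ as $\CQ$-subcategories of $\PB$; this is the $\CQ$-version of the classical fact $\sK R = \sM(\neg R)$ invoked in the introduction between Theorems \ref{FCA_fundamental} and \ref{RST_fundamental}.

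To establish $\Kphi = \sM(\neg\phi)$, I would verify that the $\CQ$-closure operators $\phi_*\phi^*$ and $(\neg\phi)^{\da}(\neg\phi)_{\ua}$ on $\PB$ coincide. Unfolding the definitions of the Kan and Isbell adjunctions, this reduces to the pointwise identity
\[(\lam\circ\phi)\lda\phi = (\neg\phi\lda\lam)\rda\neg\phi\]
for every $\lam\in\PB$. Two Girard identities in $\QDist$ handle this. First, the cyclic property $\neg u = \neg\bbA\lda u = u\rda\neg\bbB$, combined with the adjunctions defining $\lda$ and $\rda$, gives $\neg(\lam\circ\phi) = \neg\phi\lda\lam$. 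Second, double-negation invariance $\neg\neg u = u$ applied to the same adjunctions gives $v\lda u = \neg v\rda\neg u$ whenever the types are compatible. Chaining the two: $(\lam\circ\phi)\lda\phi = \neg(\lam\circ\phi)\rda\neg\phi = (\neg\phi\lda\lam)\rda\neg\phi$, as required.

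Once $\Kphi = \sM(\neg\phi)$ has been established, the corollary is immediate from Theorem \ref{Mphi_representation} applied to the $\CQ$-distributor $\neg\phi:\bbB\oto\bbA$: a separated complete $\CQ$-category $\bbX$ is isomorphic to $\sM(\neg\phi)$ precisely when there exist a dense $\CQ$-functor $F:\bbB\lra\bbX$ and a codense $\CQ$-functor $G:\bbA\lra\bbX$ with $\neg\phi = G^{\nat}\circ F_{\nat}$, which is exactly the stated condition.

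The main obstacle is careful type-tracking in $\QDist$, whose cyclic dualizing family is $\{\neg\bbA:\bbA\oto\bbA\}$: each application of $\neg$, $\lda$ and $\rda$ to distributors between different $\CQ$-categories must be checked for compatibility of sources and targets. Once this bookkeeping is in place, the two Girard identities reduce to short manipulations with the local adjunctions $-\circ u\dv -\lda u$ and $v\circ -\dv v\rda -$ in $\QDist$, and no further ingredient beyond Theorem \ref{Mphi_representation} is needed.
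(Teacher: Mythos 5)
Your proposal is correct and follows essentially the same route as the paper: reduce to the identity $\Kphi=\sM(\neg\phi)$ by checking that the closure operators $\phi_*\phi^*$ and $(\neg\phi)^{\da}(\neg\phi)_{\ua}$ on $\PB$ agree pointwise via Girard-quantaloid manipulations, and then invoke Theorem \ref{Mphi_representation} for $\neg\phi:\bbB\oto\bbA$. Your two auxiliary identities are just a slight reorganization of the paper's chain of equalities, so no substantive difference.
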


\begin{proof}
With Theorem \ref{Mphi_representation} at hand, it suffices to prove
$$\Kphi=\sM(\neg\phi).$$
Indeed, since $\{\neg\bbA\}_{\bbA\in\ob(\QDist)}$ is a dualizing family,
\begin{align*}
\phi_*\phi^*\lam&=((\neg\bbA\lda(\phi^*\lam))\rda\neg\bbA)\lda\phi\\
&=(\neg\bbA\lda(\lam\circ\phi))\rda(\neg\bbA\lda\phi)\\
&=((\neg\bbA\lda\phi)\lda\lam)\rda(\neg\bbA\lda\phi)\\
&=(\neg\phi\lda\lam)\rda\neg\phi\\
&=(\neg\phi)^{\da}(\neg\phi)_{\ua}\lam
\end{align*}
for all $\lam\in\PB$. Hence $\phi_*\phi^*=(\neg\phi)^{\da}(\neg\phi)_{\ua}:\PB\lra\PB$, and the conclusion follows.
\end{proof}

Although the above proof is indirect, it is not difficult prove this corollary directly using Theorems \ref{general_representation} and \ref{general_representation_dense_complete} as the following diagrams (cf. the diagrams \eqref{Mphi_representation_digram} below Theorem \ref{Mphi_representation}) sketch, which also explain the role of the codense $\CQ$-functor \eqref{neg_sYd_codense}:
\begin{equation} \label{Kphi_representation_Girard_diagram}
\bfig
\morphism(-800,0)<500,300>[\bbB`\PB;\sY_{\bbB}]
\morphism(800,0)<-500,300>[\bbA`\PA;\neg\sYd_{\bbA}]
\morphism(-800,0)|b|<800,-300>[\bbB`\Kphi;F=L\sY_{\bbB}]
\morphism(800,0)|b|<-800,-300>[\bbA`\Kphi;G=R\neg\sYd_{\bbA}]
\morphism(-300,300)|a|/@{>}@<4pt>/<600,0>[\PB`\PA;\phi^*]
\morphism(300,300)|b|/@{>}@<4pt>/<-600,0>[\PA`\PB;\phi_*]
\morphism(-300,300)|l|<300,-600>[\PB`\Kphi;L=\phi_*\phi^*]
\morphism(300,300)|r|<-300,-600>[\PA`\Kphi;R=\phi_*]
\place(0,305)[\mbox{\scriptsize$\bot$}]
\morphism(1100,0)<600,300>[\bbB`\PB;\sY_{\bbB}]
\morphism(2900,0)<-600,300>[\bbA`\PA;\neg\sYd_{\bbA}]
\morphism(1100,0)|b|<900,-300>[\bbB`\bbX;F]
\morphism(2900,0)|b|<-900,-300>[\bbA`\bbX;G]
\morphism(1700,300)|a|/@{>}@<4pt>/<600,0>[\PB`\PA;\phi^*]
\morphism(2300,300)|b|/@{>}@<4pt>/<-600,0>[\PA`\PB;\phi_*]
\morphism(1700,300)|l|<300,-600>[\PB`\bbX;L=\Lan_{\sY_{\bbB}}F]
\morphism(2300,300)|r|<-300,-600>[\PA`\bbX;R=\Ran_{\neg\sYd_{\bbA}}G]
\place(2000,305)[\mbox{\scriptsize$\bot$}]
\efig
\end{equation}

However, Corollary \ref{Kphi_representation_Girard} does not make sense for a general quantaloid $\CQ$. In fact, the following proposition blocks the way of finding a codense $\CQ$-functor $\bbA\lra\PA$ for an arbitrary $\CQ$-category $\bbA$ without assuming $\CQ$ being Girard:

\begin{prop} \label{codense_Girard}
Let $\CQ$ be a small quantaloid in which $1_q=\top_q:q\lra q$ for all $q\in\CQ_0$ and $\{\bot_q\}_{q\in\CQ_0}$ is a cyclic family, where $\top_q$ and $\bot_q$ denote the top and bottom arrows in $\CQ(q,q)$, respectively. Then the following statements are equivalent:
\begin{enumerate}[\rm (i)]
\item $\{\bot_q\}_{q\in\CQ_0}$ is a dualizing family, hence $\CQ$ is a Girard quantaloid.
\item There exists a codense $\CQ$-functor $F:\bbA\lra\PA$ for any $\CQ$-category $\bbA$.
\end{enumerate}
\end{prop}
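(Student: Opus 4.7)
For (i) $\Rightarrow$ (ii) I would exhibit the codense $\CQ$-functor directly. When $\CQ$ is Girard, Proposition \ref{PA_PdA_iso} provides an isomorphism $\neg:\PA\ra\PdA$ in $\QCat$, whose inverse (still denoted $\neg$) goes $\PdA\ra\PA$. Composing the codense co-Yoneda embedding $\sYd_{\bbA}:\bbA\lra\PdA$ of Example \ref{Yoneda_dense} with this inverse yields $\neg\sYd_{\bbA}:\bbA\lra\PdA\lra\PA$, which is codense by Corollary \ref{composition_dense}(2) since any isomorphism in $\QCat$ is in particular a right adjoint.

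For (ii) $\Rightarrow$ (i) the plan is to test (ii) on the one-object $\CQ$-category $\{p\}$ for each $p\in\CQ_0$. The presheaf category $\sP\{p\}$ is transparent: its objects of type $r$ are precisely $\CQ$-arrows $\mu:p\lra r$, with $\sP\{p\}(\mu,\mu')=\mu'\lda\mu$, so any $\CQ$-functor $F:\{p\}\lra\sP\{p\}$ is determined by a single element $d:=F(p)\in\CQ(p,p)$. By Proposition \ref{dense_condition}(iv), codensity of $F$ amounts to the identity
\[
\nu\lda\mu=(d\lda\nu)\rda(d\lda\mu)\qquad\text{for all }\mu,\nu\in\sP\{p\}.
\]

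I would execute the argument in two steps. First, pin down $d$: setting $\mu=1_p$ reduces the identity to $\nu=(d\lda\nu)\rda d$; specializing further to the bottom presheaf $\nu=\bot_p$ (a legitimate presheaf of type $p$) and using the hypothesis $1_q=\top_q$, one computes $d\lda\bot_p=\top_{p,p}=1_p$ and $1_p\rda d=d$, forcing $d=\bot_p$. Second, substitute $d=\bot_p$ back into $\nu=(d\lda\nu)\rda d$ and apply the cyclic hypothesis twice---first to $\nu:p\lra r$ to identify $\bot_p\lda\nu=\nu\rda\bot_r$ as a single arrow $\neg\nu:r\lra p$, then to $\neg\nu:r\lra p$ to identify $\neg\nu\rda\bot_p=\bot_r\lda\neg\nu=\neg\neg\nu$---so the identity becomes $\nu=\neg\neg\nu$. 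Since $p$ and $r$ are arbitrary, this is precisely the dualizing condition for $\{\bot_q\}_{q\in\CQ_0}$. The main obstacle I anticipate is conceptual rather than technical: recognising that the trivial test category $\{p\}$ already suffices, and that feeding in the bottom presheaf $\nu=\bot_p$ is exactly the probe that collapses the codensity identity and pins $d=\bot_p$; once that is seen, the rest is a short calculation with cyclicity.
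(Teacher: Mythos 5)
Your proposal is correct and follows essentially the same route as the paper: the composite $\neg\sYd_{\bbA}$ for (i)$\Rightarrow$(ii), and for (ii)$\Rightarrow$(i) testing codensity on the one-object categories $\{p\}$ via Proposition \ref{dense_condition}(iv), forcing $d=\bot_p$ using $1_p=\top_p$ and the bottom presheaf, and then invoking cyclicity to get the dualizing identity. Your shortcuts (plugging $\mu=1_p$ instead of quantifying over all $\mu$, and computing $d\lda\bot_p=\top_{p,p}=1_p$ exactly rather than via an inequality) are harmless cosmetic variants of the paper's calculation.
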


\begin{proof}
(i)${}\Lra{}$(ii): $\neg\sYd_{\bbA}:\bbA\lra\PA$ is the required codense $\CQ$-functor (see \eqref{neg_sYd_codense}).

(ii)${}\Lra{}$(i): For each $q\in\CQ_0$, objects in $\sP\{q\}$ are exactly $\CQ$-arrows with domain $q$. Suppose there exists a codense $\CQ$-functor $F:\{q\}\lra\sP\{q\}$ targeting at $w:q\lra q$, then $F^{\nat}\in\sP\sP\{q\}$ satisfies $F^{\nat}(u)=w\lda u$ for all $u\in\sP\{q\}$, and consequently
\begin{align*}
\sP\{q\}(v,u)&=F^{\nat}(u)\rda F^{\nat}(v)&(\text{Proposition \ref{dense_condition}(iv)})\\
&=(w\lda u)\rda(w\lda v)\\
&=((w\lda u)\rda w)\lda v\\
&=\sP\{q\}(v,(w\lda u)\rda w)
\end{align*}
for all $v\in\sP\{q\}$, which implies $u=(w\lda u)\rda w$. In particular, since $1_q=\top_q$, letting $u=\bot_q$ and one has
$$w=1_q\rda w\leq(w\lda\bot_q)\rda w=\bot_q,$$
which exactly means $w=\bot_q$. Hence $u=(\bot_q\lda u)\rda\bot_q$, and the arbitrariness of $u$ indicates that $\{\bot_q\}_{q\in\CQ_0}$ is a dualizing family, completing the proof.
\end{proof}

A family of quantaloids that satisfy the hypotheses in Proposition \ref{codense_Girard} is given below:

\begin{exmp}
For any frame $(L,\wedge,\ra,0,1)$, one may construct a quantaloid $\DL$ \cite{Hohle2011,Pu2012,Walters1981} with the following data:
\begin{itemize}
\item objects in $\DL$ are the elements of $L$;
\item each $\DL(p,q)=\{u\in L:u\leq p\wedge q\}$ is equipped with the order inherited from $L$;
\item the composition of $\DL$-arrows $u\in\DL(p,q)$, $v\in\DL(q,r)$ is given by $v\circ u=v\wedge u$;
\item the implications of $\DL$-arrows are given by
$$w\lda u=q\wedge r\wedge(u\ra w)\quad\text{and}\quad v\rda w=p\wedge q\wedge(v\ra w)$$
for all $u\in\DL(p,q)$, $v\in\DL(q,r)$, $w\in\DL(p,r)$;
\item the identity $\DL$-arrow in $\DL(q,q)$ is $q$ itself.
\end{itemize}
It is straightforward to check that $\{0:q\lra q\}_{q\in L}$ is a cyclic family in $\DL$, but it is a dualizing family in $\DL$ if and only if $L$ is a Boolean algebra.
\end{exmp}

So, it is unavoidable that for a non-Girard quantaloid $\CQ$, a codense $\CQ$-subcategory of the presheaf $\CQ$-category $\PA$ would have a larger size than $\bbA$. If we look again at the codense $\CQ$-functor $\neg\sYd_{\bbA}:\bbA\lra\PA$ in \eqref{neg_sYd_codense} when $\CQ$ is Girard, we will see that it actually generates a codense $\CQ$-subcategory of $\PA$ with objects
\begin{equation} \label{oobbA_def}
\{\mu\in\PA\mid\mu=\neg\sYd_{\bbA}a=d_{|a|}\lda\bbA(a,-)\ \text{for some}\ a\in\bbA_0\};
\end{equation}
that is, presheaves on $\bbA$ which are \emph{complements of representable copresheaves} on $\bbA$. For a general quantaloid $\CQ$, although the complements of $\CQ$-distributors may not exist, \eqref{oobbA_def} suggests us to construct a $\CQ$-subcategory $\obbA$ of $\PA$ consisting of all the possible \emph{relative pseudo-complements of representable copresheaves} on $\bbA$:
\begin{equation} \label{obbA_def}
\obbA_0=\{\mu\in\PA\mid\mu=u\lda\bbA(a,-)\ \text{for some}\ a\in\bbA_0\ \text{and}\ \CQ\text{-arrow}\ u:|a|\lra\cod u\}.
\end{equation}
$\obbA$ is certainly a non-trivial $\CQ$-subcategory of $\PA$. We will see that $\obbA$ is a codense $\CQ$-subcategory of $\PA$ (Proposition \ref{obbA_codense}) and, moreover, a $\bw$-dense $\CQ$-subcategory of $\PA$ as an immediate consequence of Proposition \ref{UA_join_dense} discussed in Section \ref{Elementary}.

\subsection{Representation theorem for fixed points of Kan adjunctions}

Given a $\CQ$-category $\bbA$, the codomain restriction of the graph of the Yoneda embedding $(\sY_{\bbA})_{\nat}:\bbA\oto\PA$ on $\obbA$ gives a $\CQ$-distributor $\bbA^{\tr}:\bbA\oto\obbA$ with
\begin{equation} \label{Atr_mu}
\bbA^{\tr}(-,\mu)=(\sY_{\bbA})_{\nat}(-,\mu)=\mu
\end{equation}
for all $\mu\in\obbA_0$, where the second equality follows from the Yoneda lemma. The following identity holds for all $\CQ$-distributors $\phi:\bbA\oto\bbB$:

\begin{prop} \label{Atr_phi}
$\phi=(\bbA^{\tr}\lda\phi)\rda\bbA^{\tr}$.
\end{prop}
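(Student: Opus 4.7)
The plan is to establish this as a two-sided inequality in the quantaloid $\QDist$, exploiting the adjunction $-\circ\phi \dv \bbA^{\tr}\lda -$ one way and a clever choice of probing presheaf the other way.

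First, the easy inequality $\phi\leq(\bbA^{\tr}\lda\phi)\rda\bbA^{\tr}$ is essentially the unit of the double implication. By definition of $\bbA^{\tr}\lda\phi$ one has $(\bbA^{\tr}\lda\phi)\circ\phi\leq\bbA^{\tr}$, and applying the adjunction $u\leq v\rda w\iff v\circ u\leq w$ with $v=\bbA^{\tr}\lda\phi$, $w=\bbA^{\tr}$ and $u=\phi$ yields the inequality.

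For the reverse inequality, I would fix $a\in\bbA_0$ and $b\in\bbB_0$ and seek a single $\mu\in\obbA_0$ that, plugged into the defining meet of $((\bbA^{\tr}\lda\phi)\rda\bbA^{\tr})(a,b)$, already forces the value below $\phi(a,b)$. The natural candidate, suggested by the very definition \eqref{obbA_def} of $\obbA$, is
\[\mu:=\phi(a,b)\lda\bbA(a,-),\]
which lies in $\obbA_0$ by construction (take $u=\phi(a,b):|a|\lra|b|$). The key computation is then to check that $(\bbA^{\tr}\lda\phi)(b,\mu)\geq 1_{|b|}$: unfolding the meet, one needs $1_{|b|}\leq(\phi(a,b)\lda\bbA(a,a''))\lda\phi(a'',b)$ for every $a''\in\bbA_0$, which by the standard adjoint manipulation $v\leq w\lda u\iff v\circ u\leq w$ reduces to $\phi(a'',b)\circ\bbA(a,a'')\leq\phi(a,b)$, and this is just the distributor inequality for $\phi$.

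Once $(\bbA^{\tr}\lda\phi)(b,\mu)\geq 1_{|b|}$ is in hand, the desired bound follows from antitonicity of $\rda$ in its first argument:
\[((\bbA^{\tr}\lda\phi)\rda\bbA^{\tr})(a,b)\leq(\bbA^{\tr}\lda\phi)(b,\mu)\rda\mu(a)\leq 1_{|b|}\rda\mu(a)=\mu(a),\]
and finally $\mu(a)=\phi(a,b)\lda\bbA(a,a)\leq\phi(a,b)\lda 1_{|a|}=\phi(a,b)$, using $1_{|a|}\leq\bbA(a,a)$. Combining the two inequalities gives the claimed equality.

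The only delicate step is the choice of probe $\mu$; everything else is bookkeeping with the implication adjunctions of $\QDist$. What makes the construction go through is precisely that $\obbA$ was designed in \eqref{obbA_def} to contain every $\CQ$-arrow $u:|a|\lra q$ implicated from the right into a representable copresheaf $\bbA(a,-)$, so the pair $(a,\phi(a,b))$ that we need to test with is available \emph{inside} $\obbA$. This is the reason the proposition holds without any Girard hypothesis on $\CQ$, and it illustrates why $\obbA$—rather than $\bbA$ itself—is the right codense $\CQ$-subcategory of $\PA$ to carry the representation of $\sK\phi$ promised in Subsection~\ref{Codense_Sub_Presheaf}.
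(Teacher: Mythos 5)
Your proof is correct and follows essentially the same route as the paper: the easy inequality is the unit of the implication adjunction, and the reverse inequality is forced by probing with exactly the objects $\phi(a,b)\lda\bbA(a,-)\in\obbA_0$, which is the paper's key idea as well. The only difference is presentational: the paper reduces to a presheaf $\mu=\phi(-,b)$ and runs the computation at the level of presheaves, showing $\mu=\bw_{a\in\bbA_0}\bigl((\mu(a)\lda\bbA(a,-))\lda\mu\bigr)\rda(\mu(a)\lda\bbA(a,-))\geq(\bbA^{\tr}\lda\mu)\rda\bbA^{\tr}$, whereas you verify the same bound pointwise at each pair $(a,b)$.
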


\begin{proof}
Since $\phi=(\bbA^{\tr}\lda\phi)\rda\bbA^{\tr}$ if and only if $\phi(-,b)=(\bbA^{\tr}\lda\phi(-,b))\rda\bbA^{\tr}$ for all $b\in\bbB_0$, it suffices to prove $\mu=(\bbA^{\tr}\lda\mu)\rda\bbA^{\tr}$ for any $\mu\in\PA$. On one hand,
$$\mu(a)\lda\bbA(a,-)\in\obbA_0$$
for all $a\in\bbA_0$ implies
\begin{align*}
\mu&=\mu\lda\bbA\\
&=\bw_{a\in\bbA_0}\mu(a)\lda\bbA(a,-)\\
&=\bw_{a\in\bbA_0}((\mu(a)\lda\mu(a))\rda\mu(a))\lda\bbA(a,-)\\
&=\bw_{a\in\bbA_0}(\mu(a)\lda\mu(a))\rda(\mu(a)\lda\bbA(a,-))\\
&=\bw_{a\in\bbA_0}(\mu(a)\lda(\mu\circ\bbA(a,-)))\rda(\mu(a)\lda\bbA(a,-))\\
&=\bw_{a\in\bbA_0}((\mu(a)\lda\bbA(a,-))\lda\mu)\rda(\mu(a)\lda\bbA(a,-))\\
&\geq\bw_{\mu'\in\obbA_0}(\mu'\lda\mu)\rda\mu'\\
&=\bw_{\mu'\in\obbA_0}(\bbA^{\tr}(-,\mu')\lda\mu)\rda\bbA^{\tr}(-,\mu')&(\text{Equation \eqref{Atr_mu}})\\
&=(\bbA^{\tr}\lda\mu)\rda\bbA^{\tr}.
\end{align*}
On the other hand, $\mu\leq(\bbA^{\tr}\lda\mu)\rda\bbA^{\tr}$ is trivial. The conclusion thus follows.
\end{proof}

This proposition indicates that the family $\{\bbA^{\tr}:\bbA\oto\obbA\}_{\bbA\in\ob(\QCat)}$ in $\QDist$ satisfies part of the properties of a dualizing family in a quantaloid (see \eqref{cyclic_dualizing_def}). Therefore, it makes sense to define the \emph{relative pseudo-complement} of any $\CQ$-distributor $\phi:\bbA\oto\bbB$ with respect to $\bbA^{\tr}$ as
\begin{equation} \label{phitr_def}
\phi^{\tr}:=\bbA^{\tr}\lda\phi:\bbB\oto\obbA.
\end{equation}
It is easy to obtain the following expressions for $\phi^{\tr}$:

\begin{lem} \label{phitr_calculation}
For any $\mu=u\lda\bbA(a,-)\in\obbA_0$,
\begin{enumerate}[\rm (1)]
\item $\phi^{\tr}(-,\mu)=\mu\lda\phi=\phi_*\mu$,
\item $\phi^{\tr}(-,u\lda\bbA(a,-))=u\lda\phi(a,-)$.
\end{enumerate}
\end{lem}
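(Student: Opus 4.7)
The plan is to prove part (1) by directly unwinding the definition of the left implication in $\QDist$, and then derive (2) from (1) by exploiting associativity of implications together with the fact that $\bbA:\bbA\oto\bbA$ is the identity $\CQ$-distributor on $\bbA$.

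For (1), I would evaluate $\phi^{\tr}=\bbA^{\tr}\lda\phi$ pointwise: for each $b\in\bbB_0$ and $\mu\in\obbA_0$,
\[
\phi^{\tr}(b,\mu)=(\bbA^{\tr}\lda\phi)(b,\mu)=\bw_{a\in\bbA_0}\bbA^{\tr}(a,\mu)\lda\phi(a,b).
\]
Since equation \eqref{Atr_mu} gives $\bbA^{\tr}(-,\mu)=\mu$, the right-hand side equals $\bw_{a\in\bbA_0}\mu(a)\lda\phi(a,b)$, which is precisely $(\mu\lda\phi)(b)$ by the formula for $\lda$ in $\QDist$ recalled in Section \ref{Quantaloid-enriched categories}. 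The second equality $\mu\lda\phi=\phi_*\mu$ is then just the definition of the Kan adjunction $\phi^*\dv\phi_*$.

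For (2), I would substitute $\mu=u\lda\bbA(a,-)$ into (1). For each $a'\in\bbA_0$ one has $\mu(a')=u\lda\bbA(a,a')$, so by the associativity identity $(w\lda v)\lda u=w\lda(u\circ v)$ (which is immediate from the defining adjunctions of the implications in $\CQ$),
\[
\bigl(u\lda\bbA(a,a')\bigr)\lda\phi(a',b)=u\lda\bigl(\phi(a',b)\circ\bbA(a,a')\bigr).
\]
Since $\lda$ is right adjoint in its second variable and hence sends joins there to meets, taking $\bw_{a'\in\bbA_0}$ pulls the join inside:
\[
\phi^{\tr}(b,\mu)=\bw_{a'\in\bbA_0}u\lda\bigl(\phi(a',b)\circ\bbA(a,a')\bigr)=u\lda\bv_{a'\in\bbA_0}\phi(a',b)\circ\bbA(a,a').
\]
The inner join is $(\phi\circ\bbA)(a,b)=\phi(a,b)$, the last equality holding because $\bbA$ is the identity $\CQ$-distributor on $\bbA$. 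Thus $\phi^{\tr}(-,u\lda\bbA(a,-))=u\lda\phi(a,-)$, as claimed.

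There is no real obstacle here: the lemma is essentially a bookkeeping calculation, and the only step requiring a little care is the associativity $(w\lda v)\lda u=w\lda(u\circ v)$ together with the correct direction in which $\lda$ transforms joins into meets. Everything else is routine substitution from the definitions of $\bbA^{\tr}$, $\phi^{\tr}$, and $\phi_*$.
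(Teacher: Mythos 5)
Your proposal is correct: the paper states this lemma without proof (``It is easy to obtain\dots''), and your argument is exactly the intended routine verification, unwinding $\phi^{\tr}=\bbA^{\tr}\lda\phi$ pointwise via Equation \eqref{Atr_mu} and the formula for $\lda$ in $\QDist$, then using $(w\lda v)\lda u=w\lda(u\circ v)$ and $w\lda\bv_i u_i=\bw_i(w\lda u_i)$ together with $\phi\circ\bbA=\phi$. The only quibble is the phrasing that ``$\lda$ is right adjoint in its second variable''---strictly it is the contravariant (Galois) part of the adjunction $v\circ u\leq w\iff v\leq w\lda u$---but the identity you actually use is the standard one and the computation stands.
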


With the above preparations, now we are ready to establish the following representation theorem of $\Kphi$, which extends Corollary \ref{Kphi_representation_Girard} from a Girard quantaloid to a general quantaloid:

\begin{thm} \label{Kphi_representation}
For any $\CQ$-distributor $\phi:\bbA\oto\bbB$, a separated complete $\CQ$-category $\bbX$ is isomorphic to $\Kphi$ if, and only if, there exist a dense $\CQ$-functor $F:\bbB\lra\bbX$ and a codense $\CQ$-functor $G:\obbA\lra\bbX$ with $\phi^{\tr}=G^{\nat}\circ F_{\nat}$.
\end{thm}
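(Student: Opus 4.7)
My strategy is to reduce Theorem \ref{Kphi_representation} to the already-established Theorem \ref{Mphi_representation} by showing the key identity
\begin{equation} \label{plan_key_identity}
\Kphi = \sM\phi^{\tr}
\end{equation}
as separated complete $\CQ$-categories (both sitting naturally inside $\PB$). Once \eqref{plan_key_identity} is in hand, Theorem \ref{Mphi_representation} applied to the $\CQ$-distributor $\phi^{\tr}:\bbB\oto\obbA$ immediately yields the desired characterization: $\bbX\cong\sM\phi^{\tr}$ iff there are a dense $F:\bbB\lra\bbX$ and a codense $G:\obbA\lra\bbX$ with $\phi^{\tr}=G^{\nat}\circ F_{\nat}$.

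\textbf{Establishing \eqref{plan_key_identity}.} Since both $\Kphi$ and $\sM\phi^{\tr}$ are the fixed-point $\CQ$-subcategories (of $\PB$) of the respective $\CQ$-closure operators $\phi_*\phi^*$ and $(\phi^{\tr})^{\da}(\phi^{\tr})_{\ua}$, it suffices to verify the operator identity
$$\phi_*\phi^* \;=\; (\phi^{\tr})^{\da}(\phi^{\tr})_{\ua} : \PB\lra\PB.$$
Unwinding the definitions in Section \ref{Isbell_Kan_fixed_points} together with Lemma \ref{phitr_calculation}, for $\lam\in\PB$ the left-hand side is $(\lam\circ\phi)\lda\phi$, while the right-hand side is $(\phi^{\tr}\lda\lam)\rda\phi^{\tr}=(\bbA^{\tr}\lda(\lam\circ\phi))\rda(\bbA^{\tr}\lda\phi)$. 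Evaluating both sides pointwise at $b\in\bbB_0$ and using Lemma \ref{phitr_calculation} (to rewrite the $\obbA$-indexed infimum in terms of pairs $(a_0,u)$ with $\mu=u\lda\bbA(a_0,-)$), the task reduces to the \emph{purely quantaloidal} identity
$$\bw_{u:|a_0|\lra q,\ q\in\CQ_0}\big((u\lda p)\rda(u\lda q')\big) \;=\; p\lda q'$$
for each $a_0\in\bbA_0$, where $p=(\phi^*\lam)(a_0)=\lam\circ\phi(a_0,-)$ and $q'=\phi(a_0,b)$. The inequality $\geq$ is the adjunction calculation $(u\lda p)\circ(p\lda q')\circ q'\leq(u\lda p)\circ p\leq u$; the reverse inequality is obtained by instantiating $u:=p$, which makes $u\lda p\geq 1$ and hence pins the expression down to $p\lda q'$. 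Collapsing Lemma \ref{phitr_calculation} back in the reverse direction and taking the infimum over $a_0$ yields exactly $(\lam\circ\phi)\lda\phi$, as required.

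\textbf{Finishing and assessment of difficulty.} With \eqref{plan_key_identity} proved, the ``if'' and ``only if'' directions both follow by invoking Theorem \ref{Mphi_representation} applied to $\phi^{\tr}:\bbB\oto\obbA$: any dense $F:\bbB\lra\bbX$ and codense $G:\obbA\lra\bbX$ with $\phi^{\tr}=G^{\nat}\circ F_{\nat}$ witness $\bbX\cong\sM\phi^{\tr}\cong\Kphi$, and conversely any such $\bbX$ admits the required $\CQ$-functors because $\Kphi=\sM\phi^{\tr}$ does (take the canonical ones coming from Theorem \ref{Mphi_representation}: the codomain restrictions of $(\phi^{\tr})^{\da}(\phi^{\tr})_{\ua}\sY_{\bbB}$ and $(\phi^{\tr})^{\da}\sYd_{\obbA}$, or alternatively left/right Kan extensions along $\sY_{\bbB}$ and $\sYd_{\obbA}$). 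The only real obstacle is the operator identity; everything else is formal transport along Theorem \ref{Mphi_representation}. That identity is itself unsurprising once one notices that Proposition \ref{Atr_phi} gives $\{\bbA^{\tr}\}$ the half of a dualizing-family property that the proof actually requires, but executing the pointwise computation cleanly -- particularly keeping track of the types of the $\CQ$-arrows $u:|a_0|\lra q$ ranging over all of $\CQ_0$ when taking the infimum -- is where care is needed.
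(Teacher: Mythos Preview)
Your proposal is correct and follows essentially the same strategy as the paper: establish the operator identity $\phi_*\phi^*=(\phi^{\tr})^{\da}(\phi^{\tr})_{\ua}$ (hence $\Kphi=\sM\phi^{\tr}$) and then invoke Theorem \ref{Mphi_representation} for $\phi^{\tr}:\bbB\oto\obbA$.

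The one place where your execution diverges from the paper is in the proof of the operator identity itself. You unpack everything pointwise, reduce via Lemma \ref{phitr_calculation} to the quantaloidal identity $\bw_{u}\big((u\lda p)\rda(u\lda q')\big)=p\lda q'$, and verify that directly (which is correct: the $\geq$ direction is the adjunction, and $u:=p$ gives $\leq$). The paper instead works entirely at the level of $\QDist$: it applies Proposition \ref{Atr_phi} to $\phi^*\lam\in\PA$ to write $\phi^*\lam=(\bbA^{\tr}\lda\phi^*\lam)\rda\bbA^{\tr}$, and then the identity follows in five lines of distributor calculus with no pointwise unpacking. Since you already cite Proposition \ref{Atr_phi} in your assessment as supplying ``the half of a dualizing-family property that the proof actually requires'', you could simply use it and avoid the pointwise computation altogether; your quantaloidal identity is in effect a hands-on re-derivation of the special case of Proposition \ref{Atr_phi} that you need.
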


\begin{proof}
Similar to the proof of Corollary \ref{Kphi_representation_Girard}, it suffices to prove $\phi_*\phi^*=(\phi^{\tr})^{\da}(\phi^{\tr})_{\ua}:\PB\lra\PB$, which implies $$\Kphi=\Mphi^{\tr}$$
and the conclusion would follow from Theorem \ref{Mphi_representation}. Indeed,
\begin{align*}
\phi_*\phi^*\lam&=((\bbA^{\tr}\lda(\phi^*\lam))\rda\bbA^{\tr})\lda\phi&(\text{Proposition \ref{Atr_phi}})\\
&=(\bbA^{\tr}\lda(\lam\circ\phi))\rda(\bbA^{\tr}\lda\phi)\\
&=((\bbA^{\tr}\lda\phi)\lda\lam)\rda(\bbA^{\tr}\lda\phi)\\
&=(\phi^{\tr}\lda\lam)\rda\phi^{\tr}&(\text{Equation \eqref{phitr_def}})\\
&=(\phi^{\tr})^{\da}(\phi^{\tr})_{\ua}\lam
\end{align*}
for all $\lam\in\PB$, as desired.
\end{proof}

Since $\sK\bbA=\PA$ (see Example \ref{MA_KA}), Theorem \ref{Kphi_representation} in particular implies the existence of a codense $\CQ$-functor $G:\obbA\lra\PA$. In fact, the inclusion $\CQ$-functor $J:\obbA\lra\PA$ is codense:

\begin{prop} \label{obbA_codense}
$\obbA$ is a codense $\CQ$-subcategory of $\PA$.
\end{prop}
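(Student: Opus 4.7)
The plan is to invoke Proposition \ref{dense_condition}(iv): writing $J:\obbA\lra\PA$ for the inclusion $\CQ$-functor, it suffices to verify the identity
$$J^{\nat}\rda J^{\nat}=\PA$$
of $\CQ$-distributors $\PA\oto\PA$.

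First I would unpack both sides pointwise. Since $J^{\nat}(\mu,\mu')=\PA(\mu,\mu')=\mu'\lda\mu$ for $\mu\in\PA$ and $\mu'\in\obbA_0$, the left-hand side evaluated at a pair $(\mu,\lam)$ is
$$(J^{\nat}\rda J^{\nat})(\mu,\lam)=\bw_{\mu'\in\obbA_0}(\mu'\lda\lam)\rda(\mu'\lda\mu),$$
whereas $\PA(\mu,\lam)=\lam\lda\mu=\bw_{a\in\bbA_0}\lam(a)\lda\mu(a)$. The key input is Proposition \ref{Atr_phi} applied to $\lam$ viewed as a distributor $\bbA\oto\{|\lam|\}$: unpacking $\lam=(\bbA^{\tr}\lda\lam)\rda\bbA^{\tr}$ via the identity $\bbA^{\tr}(-,\mu')=\mu'$ of \eqref{Atr_mu} yields the pointwise formula
$$\lam(a)=\bw_{\mu'\in\obbA_0}(\mu'\lda\lam)\rda\mu'(a).$$

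Substituting this into $\lam\lda\mu$, the only nontrivial move is the elementary implication-calculus identity $(u\rda v)\lda w=u\rda(v\lda w)$, immediate from the double adjunctions $-\circ w\dv -\lda w$ and $u\circ-\dv u\rda-$ in $\CQ$. After one application of this identity and an interchange of the meets over $a$ and $\mu'$, together with the distributivity of $\lda$ over meets in its first argument and of $\rda$ over meets in its second, the expression collapses to $\bw_{\mu'\in\obbA_0}(\mu'\lda\lam)\rda(\mu'\lda\mu)$, which is exactly $(J^{\nat}\rda J^{\nat})(\mu,\lam)$.

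I do not anticipate any genuine obstacle here: the whole argument is a short meet-exchange computation powered by Proposition \ref{Atr_phi}, and the only point that deserves real care is keeping the domains and codomains of the quantaloid arrows straight when applying the rearrangement $(u\rda v)\lda w=u\rda(v\lda w)$. Once the two meets are written in matching form, the desired identity $J^{\nat}\rda J^{\nat}=\PA$ is established, and Proposition \ref{dense_condition} finishes the proof.
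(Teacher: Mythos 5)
Your proposal is correct and is essentially the paper's own argument: both rest on Proposition \ref{Atr_phi} together with Equation \eqref{Atr_mu} and the exchange law $(u\rda v)\lda w=u\rda(v\lda w)$, and your pointwise verification of $J^{\nat}\rda J^{\nat}=\PA$ via Proposition \ref{dense_condition}(iv) is exactly the identity the paper establishes when it exhibits each $\mu\in\PA$ as $\lim_{\mu^{\tr}}J$ with $\mu^{\tr}=\bbA^{\tr}\lda\mu$. The only difference is packaging (criterion (iv) versus the limit definition \eqref{lim_def}), not substance.
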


\begin{proof}
For any $\mu\in\PA$, note that $\mu^{\tr}=\bbA^{\tr}\lda\mu:\{|\mu|\}\oto\obbA$ is in $\sPd\obbA$, and we claim $\mu=\lim_{\mu^{\tr}}J$. Indeed, by applying Equation \eqref{Atr_mu} to any $\lam\in\PA$ one has
\begin{equation} \label{Atr_lda_lam}
\bbA^{\tr}(-,\nu)\lda\lam=\nu\lda\lam=\PA(\lam,\nu)=\PA(\lam,J\nu)=J^{\nat}(\lam,\nu)
\end{equation}
for all $\nu\in\obbA_0$, and consequently
\begin{align*}
\PA(\lam,\mu)&=\mu\lda\lam\\
&=(\mu^{\tr}\rda\bbA^{\tr})\lda\lam&(\text{Proposition \ref{Atr_phi}})\\
&=\mu^{\tr}\rda(\bbA^{\tr}\lda\lam)\\
&=\mu^{\tr}\rda J^{\nat}(\lam,-),&(\text{Equation \eqref{Atr_lda_lam}})
\end{align*}
showing that $\mu=\lim_{\mu^{\tr}}J$.
\end{proof}

Hence, one is able to prove Theorem \ref{Kphi_representation} directly through Theorems \ref{general_representation} and \ref{general_representation_dense_complete} as the following diagrams illustrate; the details are similar to the proof of Theorem \ref{Mphi_representation} and will be left to the readers. We also remind the readers to compare \eqref{Kphi_representation_diagram} with the diagrams \eqref{Kphi_representation_Girard_diagram} under Corollary \ref{Kphi_representation_Girard}, which clearly indicate the difference between the cases of $\CQ$ being Girard or not:
\begin{equation} \label{Kphi_representation_diagram}
\bfig
\morphism(-800,0)<500,300>[\bbB`\PB;\sY_{\bbB}]
\morphism(800,0)/_(->/<-500,300>[\ \obbA`\PA;J]
\morphism(-800,0)|b|<800,-300>[\bbB`\Kphi;F=L\sY_{\bbB}]
\morphism(800,0)|b|<-800,-300>[\ \obbA`\Kphi;G=RJ]
\morphism(-300,300)|a|/@{>}@<4pt>/<600,0>[\PB`\PA;\phi^*]
\morphism(300,300)|b|/@{>}@<4pt>/<-600,0>[\PA`\PB;\phi_*]
\morphism(-300,300)|l|<300,-600>[\PB`\Kphi;L=\phi_*\phi^*]
\morphism(300,300)|r|<-300,-600>[\PA`\Kphi;R=\phi_*]
\place(0,305)[\mbox{\scriptsize$\bot$}]
\morphism(1100,0)<600,300>[\bbB`\PB;\sY_{\bbB}]
\morphism(2900,0)/_(->/<-600,300>[\ \obbA`\PA;J]
\morphism(1100,0)|b|<900,-300>[\bbB`\bbX;F]
\morphism(2900,0)|b|<-900,-300>[\ \obbA`\bbX;G]
\morphism(1700,300)|a|/@{>}@<4pt>/<600,0>[\PB`\PA;\phi^*]
\morphism(2300,300)|b|/@{>}@<4pt>/<-600,0>[\PA`\PB;\phi_*]
\morphism(1700,300)|l|<300,-600>[\PB`\bbX;L=\Lan_{\sY_{\bbB}}F]
\morphism(2300,300)|r|<-300,-600>[\PA`\bbX;R=\Ran_J G]
\place(2000,305)[\mbox{\scriptsize$\bot$}]
\efig
\end{equation}

\begin{rem}
The identity $\Kphi=\Mphi^{\tr}$ obtained in the proof of Theorem \ref{Kphi_representation} shows that the ``concept lattice'' of any multi-typed and multi-valued context in RST can be represented as the ``concept lattice'' of the relative pseudo-complement of the given context in FCA.  In fact, there are other trivial ways to represent any $\Kphi$ as a ``concept lattice'' in FCA.

First, for any separated complete $\CQ$-category $\bbA$ one may easily check
$$\sM(\bbA:\bbA\oto\bbA)=\sIm\sY_{\bbA}=\{\sY_{\bbA}a\mid a\in\bbA_0\}\cong\bbA.$$
In particular, $\Kphi$ is a separated complete $\CQ$-category and thus $\Kphi\cong\sM(\Kphi:\Kphi\oto\Kphi)$.

Second, similar to Proposition \ref{Atr_phi} one may prove that $\phi=((\sY_{\bbA})_{\nat}\lda\phi)\rda(\sY_{\bbA})_{\nat}$ for any $\CQ$-distributor $\phi:\bbA\oto\bbB$, and consequently one has
$$\Kphi=\sM((\sY_{\bbA})_{\nat}\lda\phi)$$
by performing the same calculations in Theorem \ref{Kphi_representation}, where $(\sY_{\bbA})_{\nat}\lda\phi$ is in fact the relative pseudo-complement of $\phi$ with respect to $(\sY_{\bbA})_{\nat}$.

Therefore, the point of the construction $\obbA$ in Theorem \ref{Kphi_representation} is to find a \emph{smallest possible} codense $\CQ$-subcategory of $\PA$. Although $\obbA$ may not be precisely the smallest one (e.g., when $\CQ$ is a Girard quantaloid), it is the best solution we find for an \emph{arbitrary} quantaloid $\CQ$.
\end{rem}

\begin{exmp}
For the identity $\CQ$-distributor $\bbA:\bbA\oto\bbA$ on a $\CQ$-category $\bbA$, since $\sK\bbA=\PA$, we have the dense Yoneda embedding $\sY_{\bbA}:\bbA\lra\PA$ and the codense inclusion $\CQ$-functor $J:\obbA\ \to/^(->/\PA$ that satisfy
$$\bbA^{\tr}(-,\mu)=\mu=\PA(\sY_{\bbA}-,\mu)=\PA(\sY_{\bbA}-,J\mu)=J^{\nat}(-,\mu)\circ(\sY_{\bbA})_{\nat}$$
for all $\mu\in\obbA_0$, where the first two equalities hold by Equation \eqref{Atr_mu} and the Yoneda lemma, respectively.

More generally, for any fully faithful $\CQ$-functor $F:\bbA\lra\bbB$ one has $\sK F^{\nat}=\PA$ since
$$(F^{\nat})_*(F^{\nat})^*\mu=(F_{\nat})^*(F^{\nat})^*\mu=(F^{\nat}\circ F_{\nat})^*\mu=\bbA^*\mu=\mu$$
for all $\mu\in\PA$, where the first and the third equalities respectively follow from Propositions \ref{adjoint_arrow_calculation}(1) and \ref{functor_graph}(1).
In this case, the dense Yoneda embedding $\sY_{\bbA}:\bbA\lra\PA$ and the codense $\CQ$-functor $H:=(\obbB\ \to/^(->/^J\PB\to^{(F^{\nat})_*}\PA)$ satisfy
$$(F^{\nat})^{\tr}(-,\lam)=(F^{\nat})_*\lam=\PA(\sY_{\bbA}-,H\lam)=H^{\nat}(-,\lam)\circ(\sY_{\bbA})_{\nat}$$
for all $\lam\in\obbB_0$, where the first two equalities follow from Lemma \ref{phitr_calculation}(1) and the Yoneda lemma, respectively.
\end{exmp}

\begin{exmp}
Let $\bbA$ be a \emph{completely distributive} (or equivalently, \emph{totally continuous}) $\CQ$-category \cite{Stubbe2007}; that is, a complete $\CQ$-category in which $\sup_{\bbA}:\PA\lra\bbA$ has a left adjoint $T_{\bbA}:\bbA\lra\PA$ in $\QCat$. Let $\theta_{\bbA}:\bbA\oto\bbA$ be the $\CQ$-distributor with transpose $\ola{\theta_{\bbA}}=T_{\bbA}$ (see \eqref{olphi_def}). Since $\sup_{\bbA}:\PA\lra\bbA$ is essentially surjective and thus codense (see Corollary \ref{composition_dense}(1)), from Proposition \ref{obbA_codense} and Corollary \ref{composition_dense}(2) one immediately knows that the restriction
$$G:=(\obbA\ \to/^(->/\PA\to^{\sup_{\bbA}}\bbA)$$
of $\sup_{\bbA}$ on $\obbA$ is codense. As $1_{\bbA}:\bbA\lra\bbA$ is obviously dense, and
\begin{align*}
\theta_{\bbA}^{\,\tr}(-,\mu)&=\mu\lda\theta_{\bbA}&(\text{Lemma \ref{phitr_calculation}(1)})\\
&=\PA(T_{\bbA}-,\mu)&(\ola{\theta_{\bbA}}=T_{\bbA})\\
&=\bbA(-,G\mu)&(T_{\bbA}\dv{\sup}_{\bbA})\\
&=G^{\nat}(-,\mu)\circ(1_{\bbA})_{\nat}
\end{align*}
for all $\mu\in\obbA_0$, one soon deduces $\sK\theta_{\bbA}\simeq\bbA$ from Theorem \ref{Kphi_representation}.
\end{exmp}

\subsection{The functoriality of relative pseudo-complements}

At the end of this section, we show that the identity
$$\Kphi=\sM\phi^{\tr}$$
obtained in the proof of Theorem \ref{Kphi_representation} can be established on the functorial level; that is, the process of generating the ``concept lattice'' in RST from a $\CQ$-distributor $\phi$ can be decomposed into two functorial steps:
\begin{enumerate}[\rm (1)]
\item calculating the relative pseudo-complement $\phi^{\tr}$;
\item generating the ``concept lattice'' of $\phi^{\tr}$ in FCA.
\end{enumerate}

First we establish the functoriality of the relative pseudo-complement
$$\phi^{\tr}=\bbA^{\tr}\lda\phi$$
of a $\CQ$-distributor $\phi$ with respect to $\bbA^{\tr}$. In fact, $\CQ$-distributors can be organized as objects into a category $\QChu$ with \emph{Chu transforms} (called \emph{infomorphisms} in \cite{Shen2013a})
$$(F,G):(\phi:\bbA\oto\bbB)\lra(\psi:\bbA'\oto\bbB')$$
as morphisms; that is, $\CQ$-functors $F:\bbA\lra\bbA'$ and $G:\bbB'\lra\bbB$ such that the square
$$\bfig
\square<600,500>[\bbA`\bbA'`\bbB`\bbB';F_\nat`\phi`\psi`G^\nat]
\place(300,0)[\circ] \place(300,500)[\circ] \place(0,250)[\circ] \place(600,250)[\circ]
\efig$$
is commutative, or equivalently, $\psi(F-,-)=\phi(-,G-)$.

For any $\CQ$-functor $F:\bbA\lra\bbA'$ and $u\lda\bbA(a,-)\in\obbA_0$, note that
\begin{equation} \label{F_nat_star_obbA}
(F_{\nat})_*(u\lda\bbA(a,-))=(u\lda\bbA(a,-))\lda F_{\nat}=u\lda(F_{\nat}\circ\bbA(a,-))=u\lda\bbA'(Fa,-)
\end{equation}
is an object in $\obbAp$. Thus $(F_{\nat})_*:\PA\lra\PA'$ can be restricted as a $\CQ$-functor $(F_{\nat})_*:\obbA\lra\obbAp$.

\begin{prop} \label{tl_functor}
$(G,(F_{\nat})_*):(\psi^{\tr}:\bbB'\oto\obbAp)\lra(\phi^{\tr}:\bbB\oto\obbA)$ is a Chu transform provided so is $(F,G):(\phi:\bbA\oto\bbB)\lra(\psi:\bbA'\oto\bbB')$.
\end{prop}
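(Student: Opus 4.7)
The plan is to verify the Chu-transform identity
$$\phi^{\tr}(Gb',\mu)=\psi^{\tr}(b',(F_{\nat})_*\mu)$$
for every $b'\in\bbB'_0$ and every $\mu\in\obbA_0$, which is precisely the commutativity condition for $(G,(F_{\nat})_*)$ viewed as a morphism in $\QChu$. The argument reduces to a short direct calculation based on the explicit description of $\obbA_0$ in \eqref{obbA_def}, Lemma \ref{phitr_calculation}(2), equation \eqref{F_nat_star_obbA}, and the Chu-transform hypothesis $\phi(-,G-)=\psi(F-,-)$.

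First I would fix $\mu\in\obbA_0$ and write it, as in \eqref{obbA_def}, in the form $\mu=u\lda\bbA(a,-)$ for some $a\in\bbA_0$ and some $\CQ$-arrow $u\colon|a|\lra\cod u$. Lemma \ref{phitr_calculation}(2) then evaluates the left-hand side as
$$\phi^{\tr}(Gb',\mu)=\phi^{\tr}\bigl(Gb',u\lda\bbA(a,-)\bigr)=u\lda\phi(a,Gb').$$
Equation \eqref{F_nat_star_obbA} next transports $\mu$ through $(F_{\nat})_*$ to $(F_{\nat})_*\mu=u\lda\bbA'(Fa,-)\in\obbAp_0$, which again has the form required for Lemma \ref{phitr_calculation}(2); applying that lemma to $\psi$ yields
$$\psi^{\tr}(b',(F_{\nat})_*\mu)=\psi^{\tr}\bigl(b',u\lda\bbA'(Fa,-)\bigr)=u\lda\psi(Fa,b').$$
The Chu-transform hypothesis forces $\phi(a,Gb')=\psi(Fa,b')$, so the two expressions coincide, which is what was required.

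The only point that could look delicate is that $\mu$ may admit several representations $\mu=u\lda\bbA(a,-)$, but this is not a genuine obstacle: both sides of the target identity depend only on the presheaf $\mu$ itself (the left-hand side because $\phi^{\tr}(-,\mu)=\mu\lda\phi$ by Lemma \ref{phitr_calculation}(1), and the right-hand side because $(F_{\nat})_*$ is a well-defined $\CQ$-functor), so any chosen representation yields the same value. The entire argument is therefore essentially a three-line computation with no substantive obstacle.
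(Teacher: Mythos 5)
Your proof is correct and follows essentially the same route as the paper: evaluate both sides on a representative $\mu=u\lda\bbA(a,-)$ using Lemma \ref{phitr_calculation}(2) and Equation \eqref{F_nat_star_obbA}, then invoke the Chu-transform identity $\phi(a,Gb')=\psi(Fa,b')$. The extra remark on independence of the chosen representation is harmless (the paper omits it, since both sides visibly depend only on $\mu$), so nothing further is needed.
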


\begin{proof}
If $(F,G):\phi\lra\psi$ is a Chu transform, then for all $\mu=u\lda\bbA(a,-)\in\obbA_0$,
\begin{align*}
\phi^{\tr}(G-,\mu)&=u\lda\phi(a,G-)&\text{(Lemma \ref{phitr_calculation}(2))}\\
&=u\lda\psi(Fa,-)&((F,G)\ \text{is a Chu transform})\\
&=\psi^{\tr}(-,u\lda\bbA'(Fa,-))&\text{(Lemma \ref{phitr_calculation}(2))}\\
&=\psi^{\tr}(-,(F_{\nat})_*\mu).&\text{(Equation \eqref{F_nat_star_obbA})}
\end{align*}
Thus $(G,(F_{\nat})_*):\psi^{\tr}\lra\phi^{\tr}$ is a Chu transform.
\end{proof}

Proposition \ref{tl_functor} induces a contravariant functor
$$(-)^{\tr}:(\QChu)^{\op}\lra\QChu$$
that sends each $\CQ$-distributor to its relative pseudo-complement with respect to $\bbA^{\tr}$ and sends each Chu transform $(F,G):\phi\lra\psi$ to $(G,(F_{\nat})_*):\psi^{\tr}\lra\phi^{\tr}$.

It is known in \cite{Shen2013a} that the assignments $\phi\mapsto\Mphi$ and $\phi\mapsto\Kphi$ are respectively functorial and contravariant functorial from $\QChu$ to the category $\QSup$ of separated complete $\CQ$-categories and left adjoint $\CQ$-functors (or equivalently, $\sup$-preserving $\CQ$-functors; see Proposition \ref{la_condition}(iii)). Explicitly, for any Chu transform $(F,G):(\phi:\bbA\oto\bbB)\lra(\psi:\bbA'\oto\bbB')$,
$$\sM(F,G)=\dpsi\upsi(F^{\nat})^*:\Mphi\lra\sM\psi\quad\text{and}\quad\sK(F,G)=\phi_*\phi^*(G^{\nat})^*:\sK\psi\lra\Kphi$$
define functors
$$\sM:\QChu\lra\QSup\quad\text{and}\quad\sK:(\QChu)^{\op}\lra\QSup.$$
Hence, the identity $\Kphi=\Mphi^{\tr}$ can be expressed as the following commutative diagram:

\begin{prop} \label{Kphi=Mphitr_functor}
The diagram
$$\bfig
\qtriangle<1200,500>[(\QChu)^{\op}`\QChu`\QSup;(-)^{\tr}`\sK`\sM]
\efig$$
is commutative.
\end{prop}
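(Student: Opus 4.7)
My plan is to verify commutativity of the diagram by comparing the functors $\sK$ and $\sM\circ(-)^{\tr}$ on objects and on morphisms; both checks reduce to the operator identity $\phi_*\phi^*=(\phi^{\tr})^{\da}(\phi^{\tr})_{\ua}$ on $\PB$ already derived in the proof of Theorem \ref{Kphi_representation}.

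On objects, that identity of $\CQ$-closure operators on $\PB$ immediately gives the coincidence of their fixed-point $\CQ$-subcategories of $\PB$, so $\Kphi=\sM\phi^{\tr}$ in $\QSup$ (literally, not merely up to isomorphism) for every $\CQ$-distributor $\phi:\bbA\oto\bbB$. Hence $\sK$ and $\sM\circ(-)^{\tr}$ agree on objects.

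On morphisms, let $(F,G):(\phi:\bbA\oto\bbB)\lra(\psi:\bbA'\oto\bbB')$ be a Chu transform. By definition
$$\sK(F,G)=\phi_*\phi^*(G^{\nat})^*:\sK\psi\lra\Kphi,$$
while Proposition \ref{tl_functor} gives $(-)^{\tr}(F,G)=(G,(F_{\nat})_*):\psi^{\tr}\lra\phi^{\tr}$, whence
$$\sM(G,(F_{\nat})_*)=(\phi^{\tr})^{\da}(\phi^{\tr})_{\ua}(G^{\nat})^*:\sM\psi^{\tr}\lra\sM\phi^{\tr}.$$
The operator identity on $\PB$ makes the two right-hand sides equal as $\CQ$-functors $\PB'\lra\PB$; restricting to the common domain $\sK\psi=\sM\psi^{\tr}$ and common codomain $\Kphi=\sM\phi^{\tr}$ (both supplied by the object-level step) yields equality as morphisms in $\QSup$. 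I do not anticipate any real obstacle, since the essential computational content is already inside the proof of Theorem \ref{Kphi_representation} and what remains is just an unpacking of the definitions of $\sM$, $\sK$, and $(-)^{\tr}$ on morphisms.
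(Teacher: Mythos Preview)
Your proposal is correct and follows essentially the same route as the paper: verify the object-level equality $\Kphi=\sM\phi^{\tr}$ via the closure-operator identity $\phi_*\phi^*=(\phi^{\tr})^{\da}(\phi^{\tr})_{\ua}$, then check the morphism-level equality $\sK(F,G)=\sM(G,(F_{\nat})_*)$ by unpacking both sides as $\phi_*\phi^*(G^{\nat})^*=(\phi^{\tr})^{\da}(\phi^{\tr})_{\ua}(G^{\nat})^*$. The only cosmetic difference is that the paper re-justifies the operator identity at the morphism step by invoking uniqueness of left adjoints to the inclusion $\Kphi\hookrightarrow\PB$ (Proposition \ref{Q_closure_la}), whereas you simply cite the computation already carried out in the proof of Theorem \ref{Kphi_representation}; either justification is fine.
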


\begin{proof}
It suffices to prove $\sM(G,(F_{\nat})_*)=\sK(F,G):\sK\psi=\sM\psi^{\tr}\lra\Kphi=\Mphi^{\tr}$ for any Chu transform $(F,G):\phi\lra\psi$. This is easy since
$$\sK(F,G)=\phi_*\phi^*(G^{\nat})^*=(\phi^{\tr})^{\da}(\phi^{\tr})_{\ua}(G^{\nat})^*=\sM(G,(F_{\nat})_*),$$
where the second equality holds because when restricting the codomain to the image, both $\phi_*\phi^*$ and $(\phi^{\tr})^{\da}(\phi^{\tr})_{\ua}$ are left adjoint to the same inclusion $\CQ$-functor $\Kphi=\Mphi^{\tr}\mkern6mu\to/^(->/\PB$ (see Proposition \ref{Q_closure_la}), thus they must be equal.
\end{proof}

\section{Elementary representation theorems in terms of join-(meet-)dense maps} \label{Elementary}

A map $f:A\lra B$ between (pre)ordered sets is \emph{$\bv$-dense} (i.e., \emph{join-dense}) (resp. \emph{$\bw$-dense} (i.e., \emph{meet-dense})) if, for any $y\in B$, there exists $\{x_i\}_{i\in I}\subseteq A$ with $y\cong\bv\limits_{i\in I}fx_i$ (resp. $y\cong\bw\limits_{i\in I}fx_i$). Obviously, $\bv$-dense (resp. $\bw$-dense) monotone maps between ordered sets are precisely dense (resp. codense) {\bf 2}-functors between {\bf 2}-categories. With a little abuse of language, we say that a $\CQ$-functor $F:\bbA\lra\bbB$ is $\bv$-dense (resp. $\bw$-dense) if its underlying type-preserving map $F:\bbA_0\lra\bbB_0$, as a monotone map between the underlying ordered sets of $\bbA$ and $\bbB$, is $\bv$-dense (resp. $\bw$-dense).

\begin{prop} \label{join_dense_imply_dense}
$\bv$-dense (resp. $\bw$-dense) $\CQ$-functors into complete $\CQ$-categories are necessarily dense (resp. codense).
\end{prop}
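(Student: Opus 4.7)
I will prove the ``$\bv$-dense $\Rightarrow$ dense'' half; the ``$\bw$-dense $\Rightarrow$ codense'' half follows by the duality \eqref{QopCat_iso} of Remark \ref{Qcat_dual}. By Proposition \ref{dense_condition} it suffices to show that $F:\bbA\lra\bbB$ is $\sup$-dense, i.e.\ that every $y\in\bbB_0$ is isomorphic to $\sup_\bbB F^{\ra}\mu$ for some $\mu\in\PA$; the completeness of $\bbB$ guarantees that $\sup_\bbB$ is defined on all of $\PB$ (Theorem \ref{QCat_complete_equivalent}).

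Fix $y\in\bbB_0$. By $\bv$-density there is a family $\{x_i\}_{i\in I}\subseteq\bbA_0$, necessarily with $|x_i|=|y|$ for all $i$ (since underlying joins in $\bbB$ only exist among objects of a common type and $F$ preserves types), such that $y\cong\bv_{i\in I}Fx_i$ in the underlying order of $\bbB$. The candidate presheaf is
\[
\mu:=\bv_{i\in I}\sY_\bbA x_i\in\PA,
\]
where the join is taken in the underlying order of $\PA$; note that this underlying order coincides with the local order of $\QDist$ (for $\PA$, unlike $\PdA$, cf.\ Remark \ref{PdA_QDist_order}), so $\mu$ is well defined.

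The key computation is $F^{\ra}\mu\cong\bv_{i\in I}\sY_\bbB Fx_i$ in $\PB$. First one verifies the naturality identity $F^{\ra}\sY_\bbA=\sY_\bbB F$ from the definition $F^{\ra}\nu=\nu\circ F^{\nat}$ and the fact that $F$ is a $\CQ$-functor (a direct unfolding using $\bbA(x,x_i)\circ\bbB(y,Fx)\leq\bbB(y,Fx_i)$ together with the witness $x=x_i$ giving the reverse bound). Then, since $F^{\ra}$ is itself a left adjoint in $\QCat$ (by Proposition \ref{star_graph_adjoint}, $F^{\ra}\dv(F_{\nat})^*$ essentially), it preserves the join, giving the displayed identity.

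Finally, $\sup_\bbB$ is a left adjoint in $\QCat$ (Theorem \ref{QCat_complete_equivalent}), so by Proposition \ref{la_condition}(4) it preserves underlying joins, and $\sup_\bbB\sY_\bbB\cong 1_\bbB$ because $\sY_\bbB$ is fully faithful (Proposition \ref{Kan_extension_adjoint} applied to the adjunction $\sup_\bbB\dv\sY_\bbB$). Hence
\[
\sup_\bbB F^{\ra}\mu\;\cong\;\bv_{i\in I}\sup_\bbB\sY_\bbB Fx_i\;\cong\;\bv_{i\in I}Fx_i\;\cong\;y,
\]
which proves $\sup$-density, hence density. The only delicate point is keeping track of the three different orders in play (the underlying order of $\bbB$, the underlying order of $\PA$/$\PB$, and the local order in $\QDist$) and confirming, as noted above, that on $\PA$ these last two coincide so that the presheaf $\mu$ really lives in $\PA$; once this is settled the argument is essentially a computation with left adjoints.
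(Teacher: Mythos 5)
Your proof is correct, and it chooses the same witness presheaf $\mu=\bv_{i\in I}\sY_{\bbA}x_i$ as the paper, but verifies density by a different mechanism. The paper checks the defining equation \eqref{colim_def} directly: since $\bbB$ is complete, the co-Yoneda embedding $\sYd_{\bbB}$ is a left adjoint and hence preserves the underlying join $y\cong\bv_{i\in I}Fx_i$, which gives $\bbB(y,-)=\bw_{i\in I}\bbB(Fx_i,-)=F_{\nat}\lda\bv_{i\in I}\bbA(-,x_i)$, i.e. $y\cong\colim_{\mu}F$. You instead route through the $\sup$-density characterization of Proposition \ref{dense_condition}(ii) via Proposition \ref{colim_as_sup}, pushing the join forward along the two left adjoints $F^{\ra}$ and $\sup_{\bbB}$ and using the auxiliary identities $F^{\ra}\sY_{\bbA}=\sY_{\bbB}F$ and $\sup_{\bbB}\sY_{\bbB}\cong 1_{\bbB}$. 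Both arguments run on the same engine (Proposition \ref{la_condition}(4): left adjoints in $\QCat$ are left adjoints between the underlying orders, hence preserve existing joins), and your bookkeeping of the three orders, including that the underlying order of $\PA$ agrees with the local order of $\QDist$, is accurate; the difference is essentially which left adjoint carries the join, so the trade-off is cosmetic. Two citations could be tightened: that $F^{\ra}$ is a left adjoint is most directly the Kan adjunction $(F^{\nat})^*\dv(F^{\nat})_*$ applied to $F^{\nat}$ (or $F^{\ra}\dv(F_{\nat})^*$ via Proposition \ref{adjoint_arrow_calculation}(2)), not literally Proposition \ref{star_graph_adjoint}; and $\sup_{\bbB}\sY_{\bbB}\cong 1_{\bbB}$ follows at once from the Yoneda lemma (Lemma \ref{Yoneda_lemma}) or from full faithfulness of $\sY_{\bbB}$ as a right adjoint, rather than from Proposition \ref{Kan_extension_adjoint}. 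Neither point affects correctness.
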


\begin{proof}
Let $F:\bbA\lra\bbB$ be a $\bv$-dense $\CQ$-functor, with $\bbB$ complete. For any $y\in\bbB_0$, let $\{x_i\}_{i\in I}\subseteq\bbA_0$ with $y\cong\bv\limits_{i\in I}Fx_i$, then $\bv\limits_{i\in I}\bbA(-,x_i)$, the join of $\{\bbA(-,x_i)\}_{i\in I}$ in the underlying order of $\PA$, is also in $\PA$. Since $\bbB$ is complete, $\sYd_{\bbB}:\PdB\lra\bbB$ is a left adjoint in $\QCat$ (see Theorem \ref{QCat_complete_equivalent}(vi)), and thus preserves underlying joins by Proposition \ref{la_condition}(4), i.e.,
$$\bbB\Big(\bv_{i\in I}Fx_i,-\Big)=\sYd_{\bbB}\bv_{i\in I}Fx_i=\bigsqcup_{i\in I}\sYd_{\bbB}Fx_i=\bw_{i\in I}\bbB(Fx_i,-),$$
where $\bigsqcup$ denotes the underlying join in $\PdB$ (see Remark \ref{PdA_QDist_order}). Hence
$$\bbB(y,-)=\bbB\Big(\bv_{i\in I}Fx_i,-\Big)=\bw_{i\in I}\bbB(Fx_i,-)=\bw_{i\in I}F_{\nat}\lda\bbA(-,x_i)=F_{\nat}\lda\bv_{i\in I}\bbA(-,x_i),$$
and consequently $y\cong\colim_{\bv\limits_{i\in I}\bbA(-,x_i)}F$, showing that $F$ is dense.
\end{proof}

\begin{rem}
The converse of Proposition \ref{join_dense_imply_dense} is not true; that is, dense (resp. codense) $\CQ$-functors are not necessarily $\bv$-dense (resp. $\bw$-dense). For example, the Yoneda embedding $\sY_{\bbA}:\bbA\lra\PA$ is dense for any $\CQ$-category $\bbA$ (see Example \ref{Yoneda_dense}), but it is not $\bv$-dense. In fact, this is clear when one considers the singleton $\CQ$-category $\{q\}$, in which case the image of $\sY_{\{q\}}:\{q\}\lra\sP\{q\}$ contains only one object and thus it can never be $\bv$-dense in $\sP\{q\}$ as long as $\CQ$ is larger than ${\bf 2}$. Similarly, the co-Yoneda embedding $\sYd_{\bbA}:\bbA\lra\PdA$ is codense but in general not $\bw$-dense.
\end{rem}

Each $\CQ$-typed set $A$ may be viewed as a \emph{discrete} $\CQ$-category with
$$A(x,y)=\begin{cases}
1_{|x|}, & \text{if}\ x=y,\\
\bot_{|x|,|y|}, & \text{else},
\end{cases}$$
where $\bot_{|x|,|y|}$ is the bottom arrow in $\CQ(|x|,|y|)$. It is easy to see that type-preserving maps from a discrete $\CQ$-category to any other $\CQ$-category are necessarily $\CQ$-functors. Therefore, Proposition \ref{join_dense_imply_dense} induces the following elementary version of Theorem \ref{general_representation_dense_complete} which only employs order-theoretic notions (i.e., $\bv$-density and $\bw$-density of maps) to characterize the $\CQ$-categorical equivalence:

\begin{thm} \label{general_representation_elementary}
Let $S\dv T:\bbD\lra\bbC$ be an adjunction between complete $\CQ$-categories. Then a complete $\CQ$-category $\bbX$ is equivalent to $\Fix(TS)$ if, and only if, there exist $\bv$-dense type-preserving maps $F:A\lra\bbX_0$, $K:A\lra\bbC_0$ and $\bw$-dense type-preserving maps $G:B\lra\bbX_0$, $H:B\lra\bbD_0$, where $A$, $B$ are $\CQ$-typed sets, such that $\bbD(SK-,H-)=\bbX(F-,G-)$.
\end{thm}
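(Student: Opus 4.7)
The plan is to deduce Theorem \ref{general_representation_elementary} as a corollary of Theorem \ref{general_representation_dense_complete} by exploiting the fact that every $\CQ$-typed set $A$ may be viewed as a discrete $\CQ$-category, and that every type-preserving map out of a discrete $\CQ$-category is automatically a $\CQ$-functor. Under this identification the formula $\bbD(SK-,H-)=\bbX(F-,G-)$ is just the pointwise reformulation of the hom-equation $H^{\nat}\circ S_{\nat}\circ K_{\nat}=G^{\nat}\circ F_{\nat}$ that appears in Theorem \ref{general_representation_dense_complete}, by Proposition \ref{distributor_graph}.

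For the \emph{sufficiency} direction, I would view $A$ and $B$ as discrete $\CQ$-categories, so that $F,K,G,H$ become $\CQ$-functors; by Proposition \ref{join_dense_imply_dense} (applied in the complete $\CQ$-categories $\bbX$, $\bbC$, $\bbD$) the $\bv$-dense maps $F,K$ are dense $\CQ$-functors and the $\bw$-dense maps $G,H$ are codense $\CQ$-functors. Using Proposition \ref{distributor_graph} twice to rewrite the hypothesis as $H^{\nat}\circ S_{\nat}\circ K_{\nat}=G^{\nat}\circ F_{\nat}$, one invokes Theorem \ref{general_representation_dense_complete} directly to conclude $\bbX\simeq\Fix(TS)$.

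For the \emph{necessity} direction, I would exhibit explicit choices rather than appeal to Theorem \ref{general_representation_dense_complete}, since I have more flexibility here. Let $L:\bbC\lra\bbX$ and $R:\bbD\lra\bbX$ be the essentially surjective $\CQ$-functors provided by Theorem \ref{general_representation}, so that $S_{\nat}=R^{\nat}\circ L_{\nat}$. Take $A:=\bbC_0$ and $B:=\bbD_0$ as discrete $\CQ$-typed sets, set $K,H$ to be the underlying identity maps of $\bbC_0,\bbD_0$, and set $F:=L$, $G:=R$ on underlying sets. The identity maps $K$ and $H$ are trivially $\bv$-dense and $\bw$-dense respectively (each element is a singleton join/meet of itself), while the essential surjectivity of $L$ and $R$ gives the same trivial representation as singleton joins and meets in $\bbX$, showing that $F$ is $\bv$-dense and $G$ is $\bw$-dense. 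Finally $\bbD(SK-,H-)=\bbD(S-,-)=S_{\nat}=R^{\nat}\circ L_{\nat}=\bbX(L-,R-)=\bbX(F-,G-)$, as required.

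I do not expect any serious obstacle: the whole statement is an order-theoretic reformulation of Theorem \ref{general_representation_dense_complete}, and the only point requiring minor care is to notice that $\bv$-density and $\bw$-density of a type-preserving map between the underlying sets of complete $\CQ$-categories are sufficient (and easily produced from essential surjectivity on the necessity side) to invoke the previous representation theorem — a fact encapsulated in Proposition \ref{join_dense_imply_dense}.
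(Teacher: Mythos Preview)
Your proposal is correct and follows essentially the same approach as the paper: for sufficiency you invoke Proposition \ref{join_dense_imply_dense} and Proposition \ref{distributor_graph} to reduce to Theorem \ref{general_representation_dense_complete}, and for necessity you take $A=\bbC_0$, $B=\bbD_0$ with the essentially surjective $L,R$ from Theorem \ref{general_representation}, exactly as the paper does.
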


\begin{proof}
The necessity is trivial by taking $A=\bbC_0$, $B=\bbD_0$ and applying Theorem \ref{general_representation} as in the proof of Theorem \ref{general_representation_dense_complete}. For the sufficiency, the type-preserving maps $F,K,G,H$ are all $\CQ$-functors and, by Proposition \ref{distributor_graph}, $\bbX(F-,G-)=\bbD(SK-,H-)$ means precisely $G^{\nat}\circ F_{\nat}=H^{\nat}\circ S_{\nat}\circ K_{\nat}$:
$$\bfig
\morphism(-700,0)<400,300>[A`\bbC;K]
\morphism(700,0)<-400,300>[B`\bbD;H]
\morphism(-700,0)|b|<700,-300>[A`\bbX;F]
\morphism(700,0)|b|<-700,-300>[B`\bbX;G]
\morphism(-300,300)|a|/@{>}@<4pt>/<600,0>[\bbC`\bbD;S]
\morphism(300,300)|b|/@{>}@<4pt>/<-600,0>[\bbD`\bbC;T]
\place(0,305)[\mbox{\scriptsize$\bot$}]
\morphism(1300,0)<400,300>[A`\bbC;K_{\nat}]
\morphism(1700,300)|a|<600,0>[\bbC`\bbD;S_{\nat}=T^{\nat}]
\morphism(2300,300)<400,-300>[\bbD`B;H^{\nat}]
\morphism(1300,0)|b|<700,-300>[A`\bbX;F_{\nat}]
\morphism(2000,-300)|b|<700,300>[\bbX`B;G^{\nat}]
\place(1500,150)[\circ] \place(2000,300)[\circ] \place(2500,150)[\circ] \place(1650,-150)[\circ] \place(2350,-150)[\circ]
\efig$$
Therefore, the conclusion follows soon from Proposition \ref{join_dense_imply_dense} and Theorem \ref{general_representation_dense_complete}.
\end{proof}

Theorem \ref{general_representation_elementary} is precisely the $\CQ$-version of Theorem \ref{general_representation_complete_lattice}. As applications of Theorem \ref{general_representation_elementary}, we will derive elementary representation theorems of $\Mphi$ and $\Kphi$ in the rest of this section. To this end, for any $\CQ$-category $\bbA$ we denote by $\bbA_0\times_{\dom}\CQ_1$ the $\CQ$-typed set
$$\bbA_0\times_{\dom}\CQ_1=\{(a,u)\mid a\in\bbA_0,\ u:|a|\lra\cod u\ \text{is a}\ \CQ\text{-arrow}\}$$
with types $|(a,u)|=\cod u$ for all $(a,u)\in\bbA_0\times_{\dom}\CQ_1$. Dually, we write
$$\bbA_0\times_{\cod}\CQ_1=\{(a,u)\mid a\in\bbA_0,\ u:\dom u\lra|a|\ \text{is a}\ \CQ\text{-arrow}\}$$
for the $\CQ$-typed set with types $|(a,u)|=\dom u$ for all $(a,u)\in\bbA_0\times_{\cod}\CQ_1$.

\begin{rem}
Note that neither $\bbA_0\times_{\dom}\CQ_1$ nor $\bbA_0\times_{\cod}\CQ_1$ is a product in the slice category $\Set\da\CQ_0$. Let $(\CQ_1,\dom)$ and $(\CQ_1,\cod)$ be $\CQ$-typed sets with type maps sending each $\CQ$-arrow to its domain and codomain, respectively. Then the product $\bbA_0\times(\CQ_1,\dom)$ in $\Set\da\CQ_0$ has exactly the same underlying set as $\bbA_0\times_{\dom}\CQ_1$, but the type of $(a,u)\in\bbA_0\times(\CQ_1,\dom)$ is $|(a,u)|=|a|=\dom u$. Similarly, the underlying set of the product $\bbA_0\times(\CQ_1,\cod)$ in $\Set\da\CQ_0$ is the same as $\bbA_0\times_{\cod}\CQ_1$, but the type of $(a,u)\in\bbA_0\times(\CQ_1,\cod)$ is $|(a,u)|=|a|=\cod u$.
\end{rem}

Considering $\bbA_0\times_{\dom}\CQ_1$ and $\bbA_0\times_{\cod}\CQ_1$ as discrete $\CQ$-categories, one has the following $\CQ$-functors (which only have to be type-preserving maps):
\renewcommand\arraystretch{1.5}
$$\begin{array}{ll}
U_{\bbA}:\bbA_0\times_{\dom}\CQ_1\lra\PA, & U_{\bbA}(a,u)=u\circ\sY_{\bbA}a=u\circ\bbA(-,a),\\
N_{\bbA}:\bbA_0\times_{\dom}\CQ_1\lra\PA, & N_{\bbA}(a,u)=u\lda\sYd_{\bbA}a=u\lda\bbA(a,-),\\
\Ud_{\bbA}:\bbA_0\times_{\cod}\CQ_1\lra\PdA, & \Ud_{\bbA}(a,u)=\sYd_{\bbA}a\circ u=\bbA(a,-)\circ u,\\
\Nd_{\bbA}:\bbA_0\times_{\cod}\CQ_1\lra\PdA, & \Nd_{\bbA}(a,u)=\sY_{\bbA}a\rda u=\bbA(-,a)\rda u.
\end{array}$$

\begin{prop} \label{UA_join_dense}
For any $\CQ$-category $\bbA$, $U_{\bbA}$, $\Nd_{\bbA}$ are $\bv$-dense, and $N_{\bbA}$, $\Ud_{\bbA}$ are $\bw$-dense.
\end{prop}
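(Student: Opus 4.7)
The plan is to produce, for each of the four maps and each target (co)presheaf, an explicit family of elements in the image whose join (resp.\ meet) in the underlying order of $\PA$ or $\PdA$ reproduces the given (co)presheaf. The key bookkeeping point is Remark \ref{PdA_QDist_order}: the underlying order on $\PA$ coincides with the pointwise local order inherited from $\QDist$, whereas on $\PdA$ the two orders are reversed. Consequently, $\bv$ (resp.\ $\bw$) in $\PdA$ corresponds to pointwise $\bw$ (resp.\ $\bv$) in $\QDist$, so a pointwise infimum representation in $\QDist$ witnesses $\bv$-density in $\PdA$, and dually.

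I would first dispatch $U_{\bbA}$ and $\Ud_{\bbA}$ using the identities $\mu = \mu \circ \bbA$ and $\lam = \bbA \circ \lam$ in $\QDist$. For $\mu \in \PA$, the pair $(a,\mu(a))$ lies in $\bbA_0 \times_{\dom} \CQ_1$ (since $\mu(a):|a|\lra|\mu|$), and $U_{\bbA}(a,\mu(a)) = \mu(a)\circ\bbA(-,a)$ has type $|\mu|$; the identity $\mu(b) = \bv_{a}\mu(a)\circ\bbA(b,a)$ then gives $\mu = \bv_{a\in\bbA_0} U_{\bbA}(a,\mu(a))$ pointwise in $\QDist$, i.e.\ in the underlying order of $\PA$. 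Symmetrically, for $\lam\in\PdA$ one has $(a,\lam(a))\in\bbA_0\times_{\cod}\CQ_1$ with $\Ud_{\bbA}(a,\lam(a))(b) = \bbA(a,b)\circ\lam(a)$, and $\lam = \bv_{a}\Ud_{\bbA}(a,\lam(a))$ pointwise in $\QDist$, which is the underlying meet in $\PdA$, so $\Ud_{\bbA}$ is $\bw$-dense.

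The argument for $N_{\bbA}$ and $\Nd_{\bbA}$ is dual and relies on the Yoneda identities $\mu(b) = \bw_{a}\mu(a)\lda\bbA(a,b)$ and $\lam(b) = \bw_{a}\bbA(b,a)\rda\lam(a)$, which follow from Lemma \ref{Yoneda_lemma} applied to $\bbA(-,b)\in\PA$ and $\bbA(b,-)\in\PdA$ respectively. These rewrite directly as $\mu = \bw_{a} N_{\bbA}(a,\mu(a))$ and $\lam = \bw_{a}\Nd_{\bbA}(a,\lam(a))$ pointwise in $\QDist$; the former is the underlying meet in $\PA$, establishing $\bw$-density of $N_{\bbA}$, and the latter is the underlying join in $\PdA$, establishing $\bv$-density of $\Nd_{\bbA}$. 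The only real obstacle is keeping straight the type-reversal between $\QDist$ and $\PdA$ and verifying that the indexing pairs $(a,\mu(a))$ and $(a,\lam(a))$ genuinely lie in $\bbA_0\times_{\dom}\CQ_1$ and $\bbA_0\times_{\cod}\CQ_1$ respectively; once this is observed, each density claim reduces to the corresponding pointwise identity cited above.
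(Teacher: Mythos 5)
Your proposal is correct and follows essentially the same route as the paper: it uses the identities $\mu=\mu\circ\bbA$, $\mu=\mu\lda\bbA$, $\lam=\bbA\circ\lam$, $\lam=\bbA\rda\lam$ (your appeal to the Yoneda lemma for the $N_{\bbA}$, $\Nd_{\bbA}$ cases is just these unit-law identities in pointwise form) together with the order-reversal between $\QDist$ and the underlying order of $\PdA$ from Remark \ref{PdA_QDist_order}. The only differences are cosmetic (grouping $U_{\bbA}$ with $\Ud_{\bbA}$ rather than with $N_{\bbA}$, and spelling out the type-checks on the pairs $(a,\mu(a))$ and $(a,\lam(a))$), so no further changes are needed.
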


\begin{proof}
The $\bv$-density of $U_{\bbA}$ and the $\bw$-density of $N_{\bbA}$ are easy since
\begin{align*}
&\mu=\mu\circ\bbA=\bv_{a\in\bbA_0}\mu(a)\circ\bbA(-,a)=\bv_{a\in\bbA_0}U_{\bbA}(a,\mu(a)),\\
&\mu=\mu\lda\bbA=\bw_{a\in\bbA_0}\mu(a)\lda\bbA(a,-)=\bw_{a\in\bbA_0}N_{\bbA}(a,\mu(a))
\end{align*}
for all $\mu\in\PA$. For the $\bw$-density of $\Ud_{\bbA}$ and the $\bv$-density of $\Nd_{\bbA}$, just note that
\begin{align*}
&\lam=\bbA\circ\lam=\bv_{a\in\bbA_0}\bbA(a,-)\circ\lam(a)=\bigsqcap_{a\in\bbA_0}\Ud_{\bbA}(a,\lam(a)),\\
&\lam=\bbA\rda\lam=\bw_{a\in\bbA_0}\bbA(-,a)\rda\lam(a)=\bigsqcup_{a\in\bbA_0}\Nd_{\bbA}(a,\lam(a))
\end{align*}
for all $\lam\in\PdA$, where $\bigsqcap$ and $\bigsqcup$ are calculated in the underlying order of $\PdA$ (see Remark \ref{PdA_QDist_order}).
\end{proof}

It is easy to observe $\sIm(N_{\bbA})=\obbA$ (see \eqref{obbA_def}), and thus, as we mentioned at the end of Subsection \ref{Codense_Sub_Presheaf}, the crucial construction $\obbA$ in the representation theorem of $\Kphi$ (i.e., Theorem \ref{Kphi_representation}) is in fact a $\bw$-dense $\CQ$-subcategory of $\PA$.

\begin{prop} \label{v_rda_phiab_lda_u}
Let $\phi:\bbA\oto\bbB$ be a $\CQ$-distributor. Then
\begin{enumerate}[\rm (1)]
\item $(\uphi)_{\nat}(U_{\bbA}(a,u),\Ud_{\bbB}(b,v))=v\rda(\phi(a,b)\lda u)$ for all $(a,u)\in\bbA_0\times_{\dom}\CQ_1$, $(b,v)\in\bbB_0\times_{\cod}\CQ_1$,
\item $(\phi^*)_{\nat}(U_{\bbB}(b,v),N_{\bbA}(a,u))=(u\lda\phi(a,b))\lda v$ for all $(a,u)\in\bbA_0\times_{\dom}\CQ_1$, $(b,v)\in\bbB_0\times_{\dom}\CQ_1$.
\end{enumerate}
\end{prop}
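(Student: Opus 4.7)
The plan is to prove both identities by direct computation, unfolding the definitions of the graph $(-)_{\nat}$, the $\CQ$-functors $\uphi,\phi^*$, the generators $U_{\bbA}(a,u),N_{\bbA}(a,u),\Ud_{\bbB}(b,v)$, and the formulas $\PA(\mu,\mu')=\mu'\lda\mu$, $\PdB(\lam,\lam')=\lam'\rda\lam$, and then applying the adjunction rules
\[
w\lda(v\circ u)=(w\lda u)\lda v,\qquad (w\circ v)\rda z=v\rda(w\rda z),
\]
together with the Yoneda Lemma~\ref{Yoneda_lemma} to absorb the meet over $\bbA_0$ or $\bbB_0$ that comes from the hom-formulas in $\PA$ and $\PdB$.

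For (1), I first simplify $\uphi U_{\bbA}(a,u)=\phi\lda(u\circ\sY_{\bbA}a)=(\phi\lda\sY_{\bbA}a)\lda u$ by the first rule above, and then use Proposition~\ref{olphi_ophi_Yoneda}(2) to identify $\phi\lda\sY_{\bbA}a$ with $\ophi a=\phi(a,-)$; this gives $\uphi U_{\bbA}(a,u)=\phi(a,-)\lda u$, whose value at $y\in\bbB_0$ is $\phi(a,y)\lda u$. Writing the left-hand side as
\[
\PdB(\uphi U_{\bbA}(a,u),\Ud_{\bbB}(b,v))=\bw_{y\in\bbB_0}(\bbB(b,y)\circ v)\rda(\phi(a,y)\lda u),
\]
I then apply the second rule to each summand to extract $v$ and pull the meet through the right adjoint $v\rda -$, obtaining $v\rda\bw_y\bbB(b,y)\rda(\phi(a,y)\lda u)$. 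The inner meet is precisely $\PdB(\phi(a,-)\lda u,\sYd_{\bbB}b)$, which the Yoneda lemma collapses to $(\phi(a,-)\lda u)(b)=\phi(a,b)\lda u$, completing (1).

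For (2), I first reduce $\phi^*U_{\bbB}(b,v)=v\circ(\sY_{\bbB}b\circ\phi)=v\circ\phi(-,b)$, using $\sY_{\bbB}b\circ\phi=\phi^*\sY_{\bbB}b=\olphi b=\phi(-,b)$ from Proposition~\ref{olphi_ophi_Yoneda}(3). Then the left-hand side unfolds to $\bw_{x\in\bbA_0}(u\lda\bbA(a,x))\lda(v\circ\phi(x,b))$. Two applications of the first rule transform the summand first into $((u\lda\bbA(a,x))\lda\phi(x,b))\lda v$ and then into $(u\lda(\phi(x,b)\circ\bbA(a,x)))\lda v$. Pulling the meet across the right adjoint $-\lda v$ and then across the order-reversing $u\lda -$ yields $\bigl(u\lda\bv_x(\phi(x,b)\circ\bbA(a,x))\bigr)\lda v$, and the inner join is $(\phi\circ\bbA)(a,b)=\phi(a,b)$ because $\bbA$ is the identity distributor on $\bbA$, delivering $(u\lda\phi(a,b))\lda v$.

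The calculations are elementary, and the only real obstacle is bookkeeping: $\lda$ and $\rda$ act simultaneously at the level of $\CQ$-arrows, of $\CQ$-distributors, and of the hom-arrows of $\PA$ and $\PdB$, so one must keep track of which instance of each rule is in use. The crucial observation that makes the computation close is that in both parts the meet coming from the hom-formula collapses by Yoneda---either as evaluation at $b$ of the copresheaf $\phi(a,-)\lda u$, or via $\phi\circ\bbA=\phi$---leaving only the scalars $u$ and $v$ outside.
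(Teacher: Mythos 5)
Your computation is correct: the unfolding of $(\uphi)_{\nat}$ and $(\phi^*)_{\nat}$ via the hom-formulas of $\PdB$ and $\PA$, the use of $w\lda(v\circ u)=(w\lda u)\lda v$ and $(w\circ v)\rda z=v\rda(w\rda z)$, and the collapse of the remaining meets by the Yoneda lemma (resp.\ by $\phi\circ\bbA=\phi$) all check out, with all types matching. This is precisely the ``straightforward calculation'' the paper leaves to the reader, so your proof coincides with the intended argument.
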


\begin{proof}
Straightforward calculation.
%
\end{proof}

\begin{thm} \label{Mphi_representation_elementary}
For any $\CQ$-distributor $\phi:\bbA\oto\bbB$, a separated complete $\CQ$-category $\bbX$ is isomorphic to $\Mphi$ if, and only if, there exist a $\bv$-dense type-preserving map $F:\bbA_0\times_{\dom}\CQ_1\lra\bbX_0$ and a $\bw$-dense type-preserving map $G:\bbB_0\times_{\cod}\CQ_1\lra\bbX_0$ such that
$$v\rda(\phi(a,b)\lda u)=\bbX(F(a,u),G(b,v))$$
for all $(a,u)\in\bbA_0\times_{\dom}\CQ_1$, $(b,v)\in\bbB_0\times_{\cod}\CQ_1$.
\end{thm}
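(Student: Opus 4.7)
The plan is to deduce this from Theorem \ref{general_representation_elementary} applied to the Isbell adjunction $\uphi\dv\dphi:\PdB\lra\PA$, whose fixed-point $\CQ$-category on the $\PA$-side is exactly $\Mphi$, together with the $\bv$- and $\bw$-density results in Proposition \ref{UA_join_dense} and the hom-computation in Proposition \ref{v_rda_phiab_lda_u}(1). The natural candidates for the auxiliary maps $K$ and $H$ in Theorem \ref{general_representation_elementary} are
$$K:=U_{\bbA}:\bbA_0\times_{\dom}\CQ_1\lra\PA\quad\text{and}\quad H:=\Ud_{\bbB}:\bbB_0\times_{\cod}\CQ_1\lra\PdB,$$
which are respectively $\bv$-dense and $\bw$-dense by Proposition \ref{UA_join_dense}. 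The crucial identification is that, using Proposition \ref{distributor_graph} and the very definition of the graph,
$$\PdB(\uphi K(a,u),H(b,v))=(\uphi)_{\nat}(U_{\bbA}(a,u),\Ud_{\bbB}(b,v))=v\rda(\phi(a,b)\lda u)$$
by Proposition \ref{v_rda_phiab_lda_u}(1), which matches exactly the equality required in the statement.

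For the sufficiency (\emph{if}) direction, given $\bv$-dense $F$ and $\bw$-dense $G$ with $v\rda(\phi(a,b)\lda u)=\bbX(F(a,u),G(b,v))$, the quadruple $(F,K,G,H)$ fulfills all hypotheses of Theorem \ref{general_representation_elementary} applied to $\uphi\dv\dphi$, so $\bbX\simeq\Fix(\dphi\uphi)=\Mphi$; since both $\bbX$ and $\Mphi$ are separated, this equivalence is an isomorphism.

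For the necessity (\emph{only if}) direction, assume $\bbX=\Mphi$. From the proof of Theorem \ref{Mphi_representation} (or directly from Theorem \ref{general_representation} combined with Corollary \ref{L_la_R_ra}) one has an essentially surjective left adjoint $L:\PA\lra\Mphi$ and an essentially surjective right adjoint $R:\PdB\lra\Mphi$ with $(\uphi)_{\nat}=R^{\nat}\circ L_{\nat}$. I then set $F:=L\circ U_{\bbA}$ and $G:=R\circ \Ud_{\bbB}$. The required hom equality then follows by transporting the identity $(\uphi)_{\nat}(U_{\bbA}(a,u),\Ud_{\bbB}(b,v))=v\rda(\phi(a,b)\lda u)$ through $R^{\nat}\circ L_{\nat}=(\uphi)_{\nat}$, i.e.\ through Proposition \ref{distributor_graph}.

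The one substantive point (the main obstacle) is verifying the \emph{order-theoretic} density of $F=L\circ U_{\bbA}$ and $G=R\circ\Ud_{\bbB}$ from the $\bv$- and $\bw$-density of $U_{\bbA}$ and $\Ud_{\bbB}$. For $F$, any element of $\Mphi$ is (isomorphic to) $L\mu$ for some $\mu\in\PA$ by essential surjectivity, and $\mu=\bv_{a}U_{\bbA}(a,\mu(a))$ in the underlying order of $\PA$ by Proposition \ref{UA_join_dense}; since $L$ is a left adjoint in $\QCat$, it preserves underlying joins by Proposition \ref{la_condition}(4), so $L\mu\cong\bv_{a}F(a,\mu(a))$, giving the $\bv$-density of $F$. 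The dual argument—using that $R$ is a right adjoint and hence preserves underlying meets—yields the $\bw$-density of $G$. This closes the proof.
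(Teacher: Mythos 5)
Your proposal is correct and follows essentially the same route as the paper: sufficiency by feeding $(F,U_{\bbA},G,\Ud_{\bbB})$ into Theorem \ref{general_representation_elementary} via Proposition \ref{v_rda_phiab_lda_u}(1), and necessity by composing the $L$, $R$ from Theorem \ref{general_representation} (for the Isbell adjunction) with $U_{\bbA}$, $\Ud_{\bbB}$. Your direct verification of the $\bv$-/$\bw$-density of $LU_{\bbA}$ and $R\Ud_{\bbB}$ (essential surjectivity plus preservation of underlying joins/meets via Proposition \ref{la_condition}(4)) is exactly the content of the ``$\CQ={\bf 2}$ version of Corollary \ref{composition_dense}'' that the paper invokes, just spelled out explicitly.
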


\begin{proof}
{\bf Necessity.} By Theorem \ref{general_representation} there exist surjective $\CQ$-functors $L:\PA\lra\bbX$ and $R:\PdB\lra\bbX$ with $(\uphi)_{\nat}=\bbX(L-,R-)$. It is easy to see that Corollary \ref{composition_dense} also holds for $\bv$-dense and $\bw$-dense type-preserving maps; in fact, one just needs to consider $\CQ={\bf 2}$ and note that left (resp. right) adjoint $\CQ$-functors are also left (resp. right) adjoints in the underlying order (see Proposition \ref{la_condition}(4)). Therefore, following the same reasoning in the proof of Theorem \ref{Mphi_representation} one deduces the $\bv$-density of $LU_{\bbA}:\bbA_0\times_{\dom}\CQ_1\lra\bbX_0$ and the $\bw$-density of $R\Ud_{\bbB}:\bbB_0\times_{\cod}\CQ_1\lra\bbX_0$.
$$\bfig
\morphism(-1200,0)<900,300>[\bbA_0\times_{\dom}\CQ_1`\PA;U_{\bbA}]
\morphism(1200,0)<-900,300>[\bbB_0\times_{\cod}\CQ_1`\PdB;\Ud_{\bbB}]
\morphism(-1200,0)|b|<1200,-300>[\bbA_0\times_{\dom}\CQ_1`\Mphi;F=LU_{\bbA}]
\morphism(1200,0)|b|<-1200,-300>[\bbB_0\times_{\cod}\CQ_1`\Mphi;G=R\Ud_{\bbB}]
\morphism(-300,300)|a|/@{>}@<4pt>/<600,0>[\PA`\PdB;\uphi]
\morphism(300,300)|b|/@{>}@<4pt>/<-600,0>[\PdB`\PA;\dphi]
\morphism(-300,300)|l|<300,-600>[\PA`\Mphi;L=\dphi\uphi]
\morphism(300,300)|r|<-300,-600>[\PdB`\Mphi;R=\dphi]
\place(0,305)[\mbox{\scriptsize$\bot$}]
\efig$$
Finally, Proposition \ref{v_rda_phiab_lda_u}(1) implies
$$v\rda(\phi(a,b)\lda u)=(\uphi)_{\nat}(U_{\bbA}(a,u),\Ud_{\bbB}(b,v))=\bbX(LU_{\bbA}(a,u),R\Ud_{\bbB}(b,v))$$
for all $(a,u)\in\bbA_0\times_{\dom}\CQ_1$, $(b,v)\in\bbB_0\times_{\cod}\CQ_1$.

{\bf Sufficiency.} From Proposition \ref{UA_join_dense} we have $\bv$-dense type-preserving maps $U_{\bbA}:\bbA_0\times_{\dom}\CQ_1\lra\PA$, $F:\bbA_0\times_{\dom}\CQ_1\lra\bbX_0$ and $\bw$-dense type-preserving maps $\Ud_{\bbB}:\bbB_0\times_{\cod}\CQ_1\lra\PdB$, $G:\bbB_0\times_{\cod}\CQ_1\lra\bbX_0$ with
$$\bbX(F(a,u),G(b,v))=v\rda(\phi(a,b)\lda u)=\PdB(\uphi U_{\bbA}(a,u),\Ud_{\bbB}(b,v))$$
$$\bfig
\morphism(-1200,0)<900,300>[\bbA_0\times_{\dom}\CQ_1`\PA;U_{\bbA}]
\morphism(1200,0)<-900,300>[\bbB_0\times_{\cod}\CQ_1`\PdB;\Ud_{\bbB}]
\morphism(-1200,0)|b|<1200,-300>[\bbA_0\times_{\dom}\CQ_1`\bbX;F]
\morphism(1200,0)|b|<-1200,-300>[\bbB_0\times_{\cod}\CQ_1`\bbX;G]
\morphism(-300,300)|a|/@{>}@<4pt>/<600,0>[\PA`\PdB;\uphi]
\morphism(300,300)|b|/@{>}@<4pt>/<-600,0>[\PdB`\PA;\dphi]
\morphism(-300,300)|l|<300,-600>[\PA`\bbX;L=\Lan_{U_{\bbA}}F]
\morphism(300,300)|r|<-300,-600>[\PdB`\bbX;R=\Ran_{\Ud_{\bbB}}G]
\place(0,305)[\mbox{\scriptsize$\bot$}]
\efig$$
for all $(a,u)\in\bbA_0\times_{\dom}\CQ_1$, $(b,v)\in\bbB_0\times_{\cod}\CQ_1$ by Proposition \ref{v_rda_phiab_lda_u}(1). Thus the conditions in Theorem \ref{general_representation_elementary} are satisfied, completing the proof.
\end{proof}

\begin{thm} \label{Kphi_representation_elementary}
For any $\CQ$-distributor $\phi:\bbA\oto\bbB$, a separated complete $\CQ$-category $\bbX$ is isomorphic to $\Kphi$ if, and only if, there exist a $\bv$-dense type-preserving map $F:\bbB_0\times_{\dom}\CQ_1\lra\bbX_0$ and a $\bw$-dense type-preserving map $G:\bbA_0\times_{\dom}\CQ_1\lra\bbX_0$ such that
$$(u\lda\phi(a,b))\lda v=\bbX(F(b,v),G(a,u))$$
for all $(a,u)\in\bbA_0\times_{\dom}\CQ_1$, $(b,v)\in\bbB_0\times_{\dom}\CQ_1$.
\end{thm}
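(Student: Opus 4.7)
The plan is to mirror the proof of Theorem \ref{Mphi_representation_elementary} verbatim, only swapping the Isbell adjunction $\uphi\dv\dphi:\PdB\lra\PA$ for the Kan adjunction $\phi^*\dv\phi_*:\PA\lra\PB$ whose fixed-point $\CQ$-category is $\Kphi=\Fix(\phi_*\phi^*)$. The $\bv$-dense map $U_{\bbB}:\bbB_0\times_{\dom}\CQ_1\lra\PB$ and the $\bw$-dense map $N_{\bbA}:\bbA_0\times_{\dom}\CQ_1\lra\PA$ supplied by Proposition \ref{UA_join_dense} will play the roles that $U_{\bbA}$ and $\Ud_{\bbB}$ played there, and the identity required on hom-arrows will be delivered by Proposition \ref{v_rda_phiab_lda_u}(2), which was designed to output exactly the expression $(u\lda\phi(a,b))\lda v$.

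For necessity, I would first apply Theorem \ref{general_representation} to the Kan adjunction to obtain essentially surjective $\CQ$-functors $L:\PB\lra\Kphi$ and $R:\PA\lra\Kphi$ with $(\phi^*)_{\nat}=R^{\nat}\circ L_{\nat}$. By Corollary \ref{L_la_R_ra}, $L$ is a left adjoint and $R$ is a right adjoint in $\QCat$; as noted in the proof of Theorem \ref{Mphi_representation_elementary}, composing a $\bv$-dense (resp. $\bw$-dense) type-preserving map with a left (resp. right) adjoint $\CQ$-functor yields a $\bv$-dense (resp. $\bw$-dense) map, since such adjoints are underlying-order adjoints by Proposition \ref{la_condition}(4). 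Hence $F:=LU_{\bbB}:\bbB_0\times_{\dom}\CQ_1\lra\Kphi$ is $\bv$-dense and $G:=RN_{\bbA}:\bbA_0\times_{\dom}\CQ_1\lra\Kphi$ is $\bw$-dense. The required identity then reads
$$\bbX(F(b,v),G(a,u))=(\phi^*)_{\nat}(U_{\bbB}(b,v),N_{\bbA}(a,u))=(u\lda\phi(a,b))\lda v,$$
where the second equality is Proposition \ref{v_rda_phiab_lda_u}(2).

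For sufficiency, Proposition \ref{v_rda_phiab_lda_u}(2) converts the hypothesis into precisely the condition of Theorem \ref{general_representation_elementary} applied to $S=\phi^*$, $T=\phi_*$, $A=\bbB_0\times_{\dom}\CQ_1$, $B=\bbA_0\times_{\dom}\CQ_1$, $K=U_{\bbB}$, $H=N_{\bbA}$: namely,
$$\PA(\phi^*U_{\bbB}(b,v),N_{\bbA}(a,u))=(u\lda\phi(a,b))\lda v=\bbX(F(b,v),G(a,u)).$$
Since $U_{\bbB}$ and $F$ are $\bv$-dense while $N_{\bbA}$ and $G$ are $\bw$-dense, Theorem \ref{general_representation_elementary} yields $\bbX\simeq\Fix(\phi_*\phi^*)=\Kphi$, and the separatedness of $\bbX$ and $\Kphi$ promotes this equivalence to an isomorphism.

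There is no genuine obstacle, since the two ingredients one needs for this kind of representation theorem of $\Kphi$ (a $\bv$-dense map into $\PB$ and a $\bw$-dense map into $\PA$) are furnished by Proposition \ref{UA_join_dense}, and the arithmetic identity linking them through $(\phi^*)_{\nat}$ is exactly Proposition \ref{v_rda_phiab_lda_u}(2). The only point that deserves care is keeping track of the types: the left side $(\phi^*)_{\nat}(-,-)$ takes its first argument in $\PB$ and its second in $\PA$, which dictates the asymmetric roles of $\bbA_0\times_{\dom}\CQ_1$ and $\bbB_0\times_{\dom}\CQ_1$ (in contrast with Theorem \ref{Mphi_representation_elementary}, where the second factor used $\bbB_0\times_{\cod}\CQ_1$); this is precisely the subtle distinction that the paragraph following the theorem advertises.
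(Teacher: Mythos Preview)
Your proposal is correct and follows precisely the route the paper intends: the paper's own proof reads ``Similar to Theorem \ref{Mphi_representation_elementary} under the help of Proposition \ref{v_rda_phiab_lda_u}(2),'' and the diagrams it supplies match your choices of $L=\phi_*\phi^*$, $R=\phi_*$, $F=LU_{\bbB}$, $G=RN_{\bbA}$ for necessity and $L=\Lan_{U_{\bbB}}F$, $R=\Ran_{N_{\bbA}}G$ for sufficiency. The only cosmetic slip is that after defining $L,R$ with codomain $\Kphi$ you write $\bbX(F(b,v),G(a,u))$; either take $\bbX=\Kphi$ explicitly (as suffices for the ``only if'' direction) or let $L,R$ land in $\bbX$ via the assumed isomorphism, as the paper does in Theorem \ref{Mphi_representation_elementary}.
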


\begin{proof}
Similar to Theorem \ref{Mphi_representation_elementary} under the help of Proposition \ref{v_rda_phiab_lda_u}(2) and the details are left to the readers. Here we just sketch the diagrams both for the ``only if'' part and the ``if'' part as a comparison to the above theorem and the diagrams \eqref{Kphi_representation_diagram} illustrating Theorem \ref{Kphi_representation}:
$$\bfig
\morphism(-1200,0)<900,300>[\bbB_0\times_{\dom}\CQ_1`\PB;U_{\bbB}]
\morphism(1200,0)<-900,300>[\bbA_0\times_{\dom}\CQ_1`\PA;N_{\bbA}]
\morphism(-1200,0)|b|<1200,-300>[\bbB_0\times_{\dom}\CQ_1`\Kphi;F=LU_{\bbB}]
\morphism(1200,0)|b|<-1200,-300>[\bbA_0\times_{\dom}\CQ_1`\Kphi;G=RN_{\bbA}]
\morphism(-300,300)|a|/@{>}@<4pt>/<600,0>[\PB`\PA;\phi^*]
\morphism(300,300)|b|/@{>}@<4pt>/<-600,0>[\PA`\PB;\phi_*]
\morphism(-300,300)|l|<300,-600>[\PB`\Kphi;L=\phi_*\phi^*]
\morphism(300,300)|r|<-300,-600>[\PA`\Kphi;R=\phi_*]
\place(0,305)[\mbox{\scriptsize$\bot$}]
\efig$$
$$\bfig
\morphism(-1200,0)<900,300>[\bbB_0\times_{\dom}\CQ_1`\PB;U_{\bbB}]
\morphism(1200,0)<-900,300>[\bbA_0\times_{\dom}\CQ_1`\PA;N_{\bbA}]
\morphism(-1200,0)|b|<1200,-300>[\bbB_0\times_{\dom}\CQ_1`\bbX;F]
\morphism(1200,0)|b|<-1200,-300>[\bbA_0\times_{\dom}\CQ_1`\bbX;G]
\morphism(-300,300)|a|/@{>}@<4pt>/<600,0>[\PB`\PA;\phi^*]
\morphism(300,300)|b|/@{>}@<4pt>/<-600,0>[\PA`\PB;\phi_*]
\morphism(-300,300)|l|<300,-600>[\PB`\bbX;L=\Lan_{U_{\bbB}}F]
\morphism(300,300)|r|<-300,-600>[\PA`\bbX;R=\Ran_{N_{\bbA}} G]
\place(0,305)[\mbox{\scriptsize$\bot$}]
\efig$$
\end{proof}

In the case that $\CQ$ has only one object, i.e., a \emph{unital quantale}, both $\bbA_0\times_{\dom}\CQ_1$ and $\bbA_0\times_{\cod}\CQ_1$ become the cartesian product of the set $\bbA_0$ and the set of elements of $\CQ$. As the following immediate corollary of Theorems \ref{Mphi_representation_elementary} and \ref{Kphi_representation_elementary} states, our results generalize B{\v e}lohl{\' a}vek's representation theorem for concept lattices of quantale-valued contexts in FCA (see \cite[Theorem 14(2)]{Bvelohlavek2004}) and Popescu's representation theorem for those in RST (see \cite[Proposition 7.3]{Popescu2004}):

\begin{cor} \label{Mphi_Kphi_representation_quantale}
Let $\CQ$ be a unital quantale, $\phi:\bbA\oto\bbB$ a $\CQ$-distributor and $\bbX$ a separated complete $\CQ$-category.
\begin{enumerate}[\rm (1)]
\item $\bbX$ is isomorphic to $\Mphi$ if, and only if, there exist a $\bv$-dense map $F:\bbA_0\times\CQ\lra\bbX_0$ and a $\bw$-dense map $G:\bbB_0\times\CQ\lra\bbX_0$ such that
    $$v\rda(\phi(a,b)\lda u)=\bbX(F(a,u),G(b,v))$$
    for all $a\in\bbA_0$, $b\in\bbB_0$, $u,v\in\CQ$.
\item $\bbX$ is isomorphic to $\Kphi$ if, and only if, there exist a $\bv$-dense map $F:\bbB_0\times\CQ\lra\bbX_0$ and a $\bw$-dense map $G:\bbA_0\times\CQ\lra\bbX_0$ such that
    $$(u\lda\phi(a,b))\lda v=\bbX(F(b,v),G(a,u))$$
    for all $a\in\bbA_0$, $b\in\bbB_0$, $u,v\in\CQ$.
\end{enumerate}
\end{cor}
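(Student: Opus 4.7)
The plan is to deduce this corollary directly by specializing Theorems \ref{Mphi_representation_elementary} and \ref{Kphi_representation_elementary} to the case where the quantaloid $\CQ$ has a single object.

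First, I would observe that when $\CQ$ is a unital quantale, i.e., $\CQ_0=\{*\}$, every $\CQ$-arrow trivially has domain and codomain equal to $*$. Hence for any $\CQ$-category $\bbA$ (which amounts to a $\CQ$-valued category in the classical sense), every $a\in\bbA_0$ has $|a|=*$, so the conditions $|a|=\dom u$ and $|a|=\cod u$ are automatic. Consequently both $\bbA_0\times_{\dom}\CQ_1$ and $\bbA_0\times_{\cod}\CQ_1$ have underlying set equal to the ordinary cartesian product $\bbA_0\times\CQ$, with every element having the unique type $*$. Moreover, a type-preserving map between discrete $\CQ$-typed sets of this form is just a map of underlying sets, so the type-preservation condition in the two elementary representation theorems is vacuous.

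Second, I would invoke Theorem \ref{Mphi_representation_elementary} to obtain (1): a separated complete $\CQ$-category $\bbX$ is isomorphic to $\Mphi$ if and only if there exist a $\bv$-dense map $F:\bbA_0\times\CQ\lra\bbX_0$ and a $\bw$-dense map $G:\bbB_0\times\CQ\lra\bbX_0$ satisfying the identity $v\rda(\phi(a,b)\lda u)=\bbX(F(a,u),G(b,v))$ for all $a\in\bbA_0$, $b\in\bbB_0$, $u,v\in\CQ$. This is precisely the statement of Theorem \ref{Mphi_representation_elementary} with the domain/codomain indexing trivialized. Similarly, applying Theorem \ref{Kphi_representation_elementary} yields (2), using that the elementary representation theorem for $\Kphi$ involves $\bbB_0\times_{\dom}\CQ_1$ and $\bbA_0\times_{\dom}\CQ_1$, both of which collapse to the ordinary cartesian products $\bbB_0\times\CQ$ and $\bbA_0\times\CQ$ in the quantale case.

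There is no substantive obstacle, as the argument consists entirely of a notational reduction; the only thing worth verifying is that the order-theoretic notions ($\bv$-dense and $\bw$-dense maps) appearing in Theorems \ref{Mphi_representation_elementary} and \ref{Kphi_representation_elementary} coincide with the usual $\bv$-dense and $\bw$-dense maps between the underlying (pre)ordered sets, which follows immediately from the definition given just before Proposition \ref{join_dense_imply_dense}. Thus the proof reduces to a single sentence: specialize Theorems \ref{Mphi_representation_elementary} and \ref{Kphi_representation_elementary} to $|\CQ_0|=1$.
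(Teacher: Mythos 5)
Your proposal is correct and matches the paper's treatment exactly: the paper also presents this corollary as an immediate specialization of Theorems \ref{Mphi_representation_elementary} and \ref{Kphi_representation_elementary}, noting that in the one-object case both $\bbA_0\times_{\dom}\CQ_1$ and $\bbA_0\times_{\cod}\CQ_1$ collapse to the cartesian product $\bbA_0\times\CQ$ (and likewise for $\bbB$), with type-preservation becoming vacuous. Nothing further is needed.
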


\begin{rem} \label{Belohlavek_Popescu}
B{\v e}lohl{\' a}vek's \cite[Theorem 14(2)]{Bvelohlavek2004} is precisely Corollary \ref{Mphi_Kphi_representation_quantale}(1) when $\CQ$ is a commutative integral quantale, while Popescu's \cite[Proposition 7.3]{Popescu2004} is a weaker version of our Corollary \ref{Mphi_Kphi_representation_quantale}(2) even if $\CQ$ is commutative and integral. Explicitly, Popescu's result should be stated as:
\begin{quote}
A complete lattice $X$ is isomorphic to the underlying complete lattice of $\Kphi$ if, and only if, there exist a $\bv$-dense map $F:\bbB_0\times\CQ\lra X$ and a $\bw$-dense map $G:\bbA_0\times\CQ\lra X$ such that
$$\phi(a,b)\leq v\rda u\iff F(b,v)\leq G(a,u)$$
for all $a\in\bbA_0$, $b\in\bbB_0$, $u,v\in\CQ$.
\end{quote}
In fact, the ``only if'' part of the above claim is an immediate consequence of Corollary \ref{Mphi_Kphi_representation_quantale}(2), and the ``if'' part follows by applying Theorem \ref{general_representation_elementary} in the case $\CQ={\bf 2}$ to the underlying Galois connection (between the underlying ordered sets) of $\phi^*\dv\phi_*$.
\end{rem}

\section{Concluding remarks}

The following diagram indicates the connections between the most important representation theorems established in this paper. With the general representation theorems for fixed points of adjoint $\CQ$-functors in hand, one is able to derive various kinds of representation theorems for concept lattices in FCA and those in RST in the generality their $\CQ$-version. Besides unifying and extending existing representation theorems, their most important application in this paper is the development of a universal approach to represent concept lattices in RST as those in FCA; to our knowledge, this has never been achieved before for multi-typed and multi-valued contexts possibly valued in a non-Girard quantaloid, although its {\bf 2}-version is trivial. Moreover, we believe that the general representation theorems have the potential to be applied to more areas which deserve further investigation.

\tikzset{
box/.style={rectangle,
minimum width=30pt, minimum height=15pt, inner sep=6pt,draw}
}
\begin{center}
\begin{tikzpicture}
\node[box] (g) at (0,0) {General representation theorems \ref{general_representation} \& \ref{general_representation_dense_complete}};
\node[box] (M) at (0,1.5) {Representation of $\Mphi$ (Theorem \ref{Mphi_representation})};
\node[box] (K) at (0,3) {Representation of $\Kphi$ (Theorem \ref{Kphi_representation})};
\node[box] (Me) at (-3.8,-1.5) {Elementary representation of $\Mphi$ (Theorem \ref{Mphi_representation_elementary})};
\node[box] (Ke) at (3.8,-1.5) {Elementary representation of $\Kphi$ (Theorem \ref{Kphi_representation_elementary})};
\node[box,text width=10em] (f2) at (5.8,0.9) {Representation theorems in more areas?};
\draw[->] (g)--(M);
\draw[->] (M)--(K);
\draw[->] (g)--(Me);
\draw[->] (g)--(Ke);
\draw[->] (g)--(f2);
\end{tikzpicture}
\end{center}

\section*{Acknowledgement}

The first author acknowledges the support of National Natural Science Foundation of China (11101297) and International Visiting Program for Excellent Young Scholars of Sichuan University. He is also grateful to Professor Walter Tholen for warm hospitality during his visit to the Department of Mathematics and Statistics at York University where he completes this paper.

The second author acknowledges the support of Natural Sciences and Engineering Research Council of Canada (Discovery Grant 501260 held by Professor Walter Tholen).





\end{document}